\newcommand{\id}[3][]{#2 =_{#1} #3}
\newcommand{\idtype}[3][]{\mathsf{Id}_{#1}(#2,#3)}
\newcommand{\refl}[2]{\mathsf{refl}_{#1}(#2)}
\newcommand{\reflsym}{\mathsf{refl}}
\newcommand{\defeq}{\coloneqq}
\newcommand{\sm}[1]{\Sigma_{#1}}
\newcommand{\prd}[1]{\Pi_{#1}}
\newcommand{\lam}[1]{\lambda_{#1}}
\newcommand{\U}{\mathcal{U}}
\newcommand{\iscontr}[1]{\mathsf{iscontr}(#1)}
\newcommand{\isprop}[1]{\mathsf{isprop}(#1)}
\newcommand{\hfiber}[2]{\mathsf{hfiber}_{#1}(#2)}
\newcommand{\iseq}[1]{\mathsf{iseq}(#1)}
\newcommand{\nat}{\mathbb{N}}
\newcommand{\one}{\mathbf{1}}
\newcommand{\two}{\mathbf{2}}
\newcommand{\ap}[2]{\mathsf{ap}_{#1}(#2)}
\newcommand{\J}[2]{\mathsf{J}_{#1,#2}}
\newcommand{\opp}[1]{\mathord{{#1}^{-1}}}
\newcommand{\trans}[2]{{#2}_*^{#1}}
\newcommand{\transc}[2]{{#2}^*_{#1}}
\newcommand{\idfun}[1]{\mathsf{id}_{#1}}
\newcommand{\comp}{\circ}
\newcommand{\dap}[2]{\mathsf{dap}_{#1}(#2)}
\newcommand{\dapc}[2]{\mathsf{dap}^{#1}(#2)}
\newcommand{\pair}[2]{^{=}\mathsf{E}^{\times}_{#1,#2}}
\newcommand{\dpair}[2]{^{=}\mathsf{E}^{\Sigma}_{#1,#2}}
\newcommand{\dpath}[2]{^{\Sigma}\mathsf{E}^{=}_{#1,#2}}
\newcommand{\dpairsym}{^{=}\mathsf{E}^{\Sigma}}
\newcommand{\dpathsym}{^{\Sigma}\mathsf{E}^{=}}
\newcommand{\happly}[2]{^{=}\mathsf{E}^{\Pi}_{#1,#2}}
\newcommand{\ideq}[2]{^{=}\mathsf{E}^{\simeq}_{#1,#2}}
\newcommand{\fst}[1]{\mathsf{fst}(#1)}
\newcommand{\snd}[1]{\mathsf{snd}(#1)}
\newcommand{\fstsym}{\mathsf{fst}}
\newcommand{\point}{\mathsf{pt}}
\newcommand{\cell}{\mathsf{cl}}
\newcommand{\Sn}[1]{\mathbb{S}}
\newcommand{\Sna}[1]{\mathbb{S}_a}
\newcommand{\base}{\mathsf{base}}
\newcommand{\lp}{\mathsf{loop}}
\newcommand{\north}{\mathsf{north}}
\newcommand{\south}{\mathsf{south}}
\newcommand{\w}{\mathsf{west}}
\newcommand{\e}{\mathsf{east}}
\newcommand{\trunc}[1]{||#1||}
\newcommand{\inj}[1]{|#1|}
\newcommand{\squash}{\mathsf{sq}}
\newcommand{\circrec}[3]{\mathsf{rec}^{\Sn{1}}_{#1,#2,#3}}
\newcommand{\truncrec}[3]{\mathsf{rec}^{\trunc{A}}_{#1,#2,#3}}
\newcommand{\truncind}[3]{\mathsf{ind}^{\trunc{A}}_{#1,#2,#3}}
\newcommand{\wsusprec}[3]{\mathsf{rec}^{\wsusp{A}{f}{g}}_{#1,#2,#3}}
\newcommand{\wsuspind}[3]{\mathsf{ind}^{\wsusp{A}{f}{g}}_{#1,#2,#3}}
\newcommand{\circind}[3]{\mathsf{ind}^{\Sn{1}}_{#1,#2,#3}}
\newcommand{\circaind}[5]{\mathsf{ind}^{\Sna{1}}_{#1,#2,#3,#4,#5}}
\newcommand{\hascircind}[1]{\mathsf{has}\mbox{-}{\Sn{1}}\mbox{-}\mathsf{ind}_{#1}}
\newcommand{\hascircaind}[1]{\mathsf{has}\mbox{-}{\Sna{1}}\mbox{-}\mathsf{ind}_{#1}}
\newcommand{\hastruncrec}[1]{\mathsf{has}\mbox{-}{\trunc{A}}\mbox{-}\mathsf{rec}_{#1}}
\newcommand{\hastruncind}[1]{\mathsf{has}\mbox{-}{\trunc{A}}\mbox{-}\mathsf{ind}_{#1}}
\newcommand{\haswsusprec}[1]{\mathsf{has}\mbox{-}{\wsusp{A}{f}{g}}\mbox{-}\mathsf{rec}_{#1}}
\newcommand{\haswsuspind}[1]{\mathsf{has}\mbox{-}{\wsusp{A}{f}{g}}\mbox{-}\mathsf{ind}_{#1}}
\newcommand{\circalgsym}{\Sn{1}\mbox{-}\mathsf{alg}}
\newcommand{\truncalgsym}{\trunc{A}\mbox{-}\mathsf{alg}}
\newcommand{\truncfibalgsym}{\trunc{A}\mbox{-}\mathsf{fib}\mbox{-}\mathsf{alg}}
\newcommand{\truncalg}[1]{\truncalgsym_{#1}}
\newcommand{\truncfibalg}[1]{\truncfibalgsym_{#1}}
\newcommand{\trunchom}{\trunc{A}\mbox{-}\mathsf{hom}}
\newcommand{\wsusphom}{\wsusp{A}{f}{g}\mbox{-}\mathsf{hom}}
\newcommand{\wsuspfibhom}{\wsusp{A}{f}{g}\mbox{-}\mathsf{fib}\mbox{-}\mathsf{hom}}
\newcommand{\truncfibhom}{\trunc{A}\mbox{-}\mathsf{fib}\mbox{-}\mathsf{hom}}
\newcommand{\circaalgsym}{\Sna{1}\mbox{-}\mathsf{alg}}
\newcommand{\circaalg}[1]{\circaalgsym_{#1}}
\newcommand{\wsusp}[3]{\W_{#1}^{#2,#3}}
\newcommand{\recsym}{\mathsf{rec}}
\newcommand{\hinitial}[1]{\mathsf{is}\mbox{-}\wsusp{A}{f}{g}\mbox{-}\mathsf{hinit}_{#1}}
\newcommand{\hasuniq}[1]{\mathsf{has}\mbox{-}{\wsusp{A}{f}{g}}\mbox{-}\mathsf{rec}\mbox{-}\mathsf{uniq}_{#1}}
\newcommand{\hasinduniq}[1]{\mathsf{has}\mbox{-}{\wsusp{A}{f}{g}}\mbox{-}\mathsf{ind}\mbox{-}\mathsf{uniq}_{#1}}
\newcommand{\circalg}[1]{\circalgsym_{#1}}
\newcommand{\wsuspalg}[1]{\wsusp{A}{f}{g}\mbox{-}\mathsf{alg}_{#1}}
\newcommand{\wsuspcoh}{\wsusp{A}{f}{g}\mbox{-}\mathsf{coh}}
\newcommand{\wsuspfibcoh}{\wsusp{A}{f}{g}\mbox{-}\mathsf{fib}\mbox{-}\mathsf{coh}}
\newcommand{\wsuspfibalg}[1]{\wsusp{A}{f}{g}\mbox{-}\mathsf{fib}\mbox{-}\mathsf{alg}_{#1}}
\newcommand{\circalgtocircaalg}{\Sn{1}\mbox{-}\mathsf{to}\mbox{-}\circaalgsym_{\U_i}}
\newcommand{\circaalgtocircalg}{\Sna{1} \mbox{-}\mathsf{to}\mbox{-}\circalgsym_{\U_i}}
\newcommand{\circalgtowsuspalg}{\Sn{1}\mbox{-}\mathsf{to}\mbox{-}\wsusp{A}{f}{g}\mbox{-}\mathsf{alg}_{\U_i}}
\newcommand{\wsuspalgtocircalg}{\wsusp{A}{f}{g}\mbox{-}\mathsf{to}\mbox{-}\Sn{1}\mbox{-}\mathsf{alg}_{\U_i}}
\newcommand{\circaalgtowsuspalg}{\Sna{1}\mbox{-}\mathsf{to}\mbox{-}\wsusp{A}{f}{g}\mbox{-}\mathsf{alg}_{\U_i}}
\newcommand{\wsuspalgtocircaalg}{\wsusp{A}{f}{g}\mbox{-}\mathsf{to}\mbox{-}\Sna{1}\mbox{-}\mathsf{alg}_{\U_i}}
\newcommand{\poinfun}[4]{\mathfrak{Pf}_{#1,#2,#3,#4}}
\newcommand{\poinfuncoh}[2]{\mathfrak{Pf}\mbox{-}\mathsf{coh}_{#1,#2}}
\newcommand{\W}{\mathsf{W}}
\newcommand{\invtri}{\mathbf{I}_\triangle}
\newcommand{\invsq}{\mathbf{I}_\square}
\newcommand{\twocell}{\wsusp{A}{f}{g}\mbox{-}\mathit{2}\mbox{-}\mathsf{cell}}
\newcommand{\twofibcell}{\wsusp{A}{f}{g}\mbox{-}\mathit{2}\mbox{-}\mathsf{fib}\mbox{-}\mathsf{cell}}
\newcommand{\ct}{%
  \mathchoice{\mathbin{\raisebox{0.5ex}{$\displaystyle\centerdot$}}}%
             {\mathbin{\raisebox{0.5ex}{$\centerdot$}}}%
             {\mathbin{\raisebox{0.25ex}{$\scriptstyle\,\centerdot\,$}}}%
             {\mathbin{\raisebox{0.1ex}{$\scriptscriptstyle\,\centerdot\,$}}}}
\newtheorem{theorem}{Theorem}						
\newtheorem{definition}[theorem]{Definition}
\newtheorem{proposition}[theorem]{Proposition}
\newtheorem{axiom}{Axiom}
\newtheorem{notation}[theorem]{Notation}
\newtheorem{corollary}[theorem]{Corollary}
\title{Higher Inductive Types as Homotopy-Initial Algebras}
\author{Kristina Sojakova}
\date{January 2014}
\abstract{Homotopy Type Theory is a new field of mathematics based on the surprising and elegant correspondence between Martin-L¨of’s constructive type theory and abstract homotopy theory. We have a powerful interplay between these disciplines - we can use geometric intuition to formulate new concepts in type theory and, conversely, use type-theoretic machinery to verify and often simplify existing mathematical proofs. A crucial ingredient in this new system are higher inductive types, which allow us to represent objects such as spheres, tori, pushouts, and quotients. We investigate a variant of higher inductive types whose computational behavior is determined up to a higher path. We show that in this setting, higher inductive types are characterized by the universal property of being a homotopy-initial algebra.}
\keywords{Homotopy Type Theory, higher inductive types, homotopy-initial algebras}
\begin{document}
\renewcommand*{\thepage}{title-\arabic{page}} 
\maketitle
\renewcommand*{\thepage}{\arabic{page}} 

\section{Introduction}\label{intro}
Homotopy Type Theory (HoTT) has recently generated significant interest among type theorists and mathematicians alike. It uncovers deep connections between Martin-L{\"o}f's dependent type theory (\cite{martin_loef,martin_loef_b}) and the fields of abstract homotopy theory, higher categories, and algebraic topology (\cite{awodey_warren,gambino_garner,garner,hofmann_streicher,ssets,licata_harper,lumsdaine,berg_garner_b,berg_garner,uf_talk,warren_phd}). Insights from homotopy theory are used to add new concepts to the type theory, such as the representation of various geometric objects as higher inductive types. Conversely, type theory is used to formalize and verify existing mathematical proofs using proof assistants such as Coq \cite{coq} and Agda \cite{agda}. Moreover, type-theoretic insights often help us discover novel proofs of known results which are simpler than their homotopy-theoretic versions: the calculation of $\pi_n(\mathbf{S}^n)$ (\cite{licata_shulman,licata_brunerie}); the Freudenthal Suspension Theorem \cite{hott}; the Blakers-Massey Theorem \cite{hott}, etc.

As a formal system, HoTT \cite{hott} is a generalization of intensional Martin-L{\"o}f Type Theory with two features motivated by abstract homotopy theory: Voevodsky's \emph{univalence axiom} (\cite{ssets,uf_talk}) and \emph{higher-inductive types} (\cite{lumsdaine_blog,shulman_blog}). The slogan in HoTT is that \emph{types are topological spaces, terms are points, and proofs of identity are paths between points}. The structure of an identity type in HoTT is thus far more complex than just being the ``least reflexive relation", despite the definition of $\idtype[A]{M}{N}$ as an inductive type with a single constructor $\refl{A}{M} : \idtype[A]{M}{M}$. It is a beautiful, and perhaps surprising, fact that not only does this richer theory admit an interpretation into homotopy theory (\cite{awodey_warren}, \cite{ssets}) but that many fundamental concepts and results from mathematics arise naturally as constructions and theorems of HoTT.

For example, the unit circle $\mathbf{S}^1$ is defined as a higher inductive type with a fixed point $\base$ and a loop $\lp$ based at $\base$. It comes with a recursion principle which says that to construct a function $f : \mathbf{S}^1 \to X$, it suffices to supply a point $x:X$ and a loop based at $x$. The value $f(\base)$ then computes to $x$. Such definitional computation rules are convenient to work with but also pose a number of problems. For instance, an alternative encoding of the circle as a higher inductive type $\mathbf{S}^1_a$ specifies two fixed points $\south$, $\north$ and two paths from $\north$ to $\south$, called $\e$ and $\w$. The recursion principle then says that in order to construct a function $f : \mathbf{S}^1_a \to X$, it suffices to supply two points $x,y:X$ and two paths between them. The values $f(\north)$ and $f(\south)$ the compute to $x$ and $y$ respectively.

We have a natural way of relating these two representations: in one direction, map $\base$ to $\north$ and $\lp$ to $\e$; in the other direction, map both $\north$ and $\south$ to $\base$ and map $\e$ to $\lp$ and $\w$ to the identity path at $\base$. Unfortunately, the circles $\mathbf{S}^1$, $\mathbf{S}_a^1$ related this way do \emph{not} satisfy the same definitional equalities, which poses a compatibility issue. Yet more severe problem arises with the propositional truncation $\trunc{A}$ (also known as a bracket or squash type, see \cite{awodey_bauer}). The job of the truncated type $\trunc{A}$ is to provide evidence that $A$ is inhabited without giving up the particular witness term $a:A$. However, \cite{trunc_inverse} shows that under some conditions, the inhabitant $a$ can be recovered. A definitional computation rule is again the culprit.

In this paper we thus consider higher inductive types endowed with \emph{propositional} computation rules: in the case of $\mathbf{S}^1$ such a rule would state that there is a \emph{path} between $f(\base)$ and $x:X$. Types in this setting tend to keep many of the desirable properties; for instance, it can be shown that the main result of \cite{licata_shulman}, that the fundamental group of the circle is the group of integers, carries over to the case when \emph{both} the circle and the integer types have propositional computational behavior. In addition, we can now show that higher inductive types are characterized by the universal property of being a \emph{homotopy-initial algebra}. This notion was first introduced in \cite{wtypes}, where an analogous result was established for the ``ordinary" inductive type of well-founded trees (Martin-L{\"o}f's W-types). In the higher-dimensional setting, an \emph{algebra} is a type $X$ together with a number of finitary operations $f,g,h \ldots$, which are allowed to act not only on $X$ but also on any higher identity type over $X$. An \emph{algebra homomorphism} has to preserve all operations up to a higher homotopy. Finally, an algebra $\mathcal{X}$ is \emph{homotopy-initial} if the type of homomorphisms from $\mathcal{X}$ to any other algebra $\mathcal{Y}$ is contractible.

Our main theorem is stated for a class of higher inductive types which we call $\W$-suspensions; they generalize the types $\mathbf{S}^1$, $\mathbf{S}_a^1$, and others by allowing any number of fixed points and any number of paths between any two of these points. We show that the induction principle for $\W$-suspensions is equivalent to the recursion principle plus a certain uniqueness condition, which in turn is shown to be equivalent to homotopy-initiality. This extends the main result of \cite{wtypes} for ``ordinary'' inductive types to the important, and much more difficult, higher-dimensional case. 

\section{Basic Homotopy Type Theory}\label{prelim}

The core of HoTT is a dependent type theory with 
\begin{itemize}

\item dependent pair types $\sm{x:A}{B(x)}$ and dependent function types $\prd{x:A}{B(x)}$ (with the non-dependent versions $A \times B$ and $A \to B$). To stay consistent with the presentation in \cite{hott}, we assume definitional $\eta$-conversion for functions but do not assume it for pairs.

\item a cumulative hierarchy of universes $\U _{0} : \U_1 : \U_2 : \ldots$ in the style of Russell.

\item intensional identity types $\idtype[A]{M}{N}$, also denoted by $\id[A]{M}{N}$. We have the usual formation and introduction rules; the elimination and computation rules are recalled below:
\begin{mathpar}
\inferrule{E : \prd{x,y : A} \idtype[A]{x}{y} \to \U_i \\ d : \prd{x : A} E(x,x,\refl{A}{x})}
          {\J{E}{d} : \prd{x,y : A}\prd{p : \idtype[A]{M}{N}} E(x,y,p)}

\inferrule{E : \prd{x,y : A} \idtype[A]{x}{y} \to \U_i \\ d : \prd{x : A} E(x,x,\refl{A}{x}) \\ M : A}
          {\J{E}{d}(M,M,\refl{A}{M}) \equiv d(M) : E(M,M,\refl{A}{M})}
\end{mathpar}
These rules are, of course, applicable in any context $\Gamma$; we follow the standard convention of omitting it. If the type $\idtype[A]{M}{N}$ is inhabited, we call $M$ and $N$ \emph{propositionally equal}. If we do not care about the specific equality witness, we often simply say that $\id[A]{M}{N}$ or if the type $A$ is clear, $\id{M}{N}$. A term $p : \id[A]{M}{N}$ will be often called a \emph{path} and the process of applying the identity elimination rule will be referred to as \emph{path induction}. Definitional equality between $M,N:A$ will be denoted as $M \equiv N : A$.
\end{itemize}
We emphasize that apart from the aforementioned identity rules, univalence, and higher inductive types there are no other rules governing the behavior of identity types - in particular, we assert neither any form of Streicher's K-rule \cite{streichersk} nor the identity reflection rule.

The rest of this section describes the univalence axiom and some key properties of identity types; higher inductive types are discussed in Section.~\ref{hits}. For a thorough exposition of homotopy type theory we refer the reader to \cite{hott}.

\subsection{Groupoid laws}
Proofs of identity behave much like paths in topological spaces: they can be reversed, concatenated, mapped along functions, etc. Below we summarize a few of these properties:
\begin{itemize}
\item For any path $p : \id[A]{x}{y}$ there is a path $\opp{p} : \id[A]{y}{x}$, and we have $\opp{\refl{A}{x}} \equiv \refl{A}{x}$.
\item For any paths $p : \id[A]{x}{y}$ and $q : \id[A]{y}{z}$ there is a path $p \ct q : \id[A]{x}{z}$, and we have
$\refl{A}{x} \ct \refl{A}{x} \equiv \refl{A}{x}$.
\item Associativity of composition: for any paths $p : \id[A]{x}{y}$, $q : \id[A]{y}{z}$, and $r : \id[A]{y}{u}$ we have
$\id{(p \ct q) \ct r}{p \ct (q \ct r)}$.
\item We have $\refl{A}{x} \ct p = p$ and $p \ct \refl{A}{y} = p$ for any $p : \id[A]{x}{y}$.
\item We have $p \ct \opp{p} = \refl{A}{x}$, $\opp{p} \ct p = \refl{A}{y}$, and $\opp{(\opp{p})} = p$, $\opp{(p \ct q)} = \opp{q} \ct \opp{p}$ for any $p : \id[A]{x}{y}$, $q : \id[A]{y}{z}$.
\item For any type family $P : A \to \U_i$ and path $p : \id[A]{x}{y}$ there are functions $\trans{P}{p} : P(x) \to P(y)$ and $\transc{P}{p} : P(y) \to P(x)$, called the \emph{covariant transport} and \emph{contravariant transport}, respectively. We furthermore have $\trans{P}{\refl{A}{x}} \equiv \transc{P}{\refl{A}{x}} \equiv \idfun{P(x)}$.
\item We have $\trans{P}{(\opp{p})} = \transc{P}{p}$, $\transc{P}{(\opp{p})} = \trans{P}{p}$ and $\trans{P}{(p \ct q)} = \trans{P}{q} \comp \trans{P}{p}$, $\transc{P}{(p \ct q)} = \transc{P}{p} \comp \transc{P}{q}$ for any family $P : A \to \U_i$ and paths $p : \id[A]{x}{y}$, $q : \id[A]{y}{z}$.
\item For any function $f : A \to B$ and path $p : \id[A]{x}{y}$, there is a path $\ap{f}{p} : \id[B]{f(x)}{f(y)}$ and we have
$\ap{f}{\refl{A}{x}} \equiv \refl{B}{f(x)}$. 
\item We have $\ap{f}{\opp{p}} = \opp{\ap{f}{p}}$ and $\ap{f}{p \ct q} = \ap{f}{p} \ct \ap{f}{q}$ for any $f : A \to B$ and $p : \id[A]{x}{y}$, $q : \id[A]{y}{z}$.
\item We have $\ap{g\circ f}{p} = \ap{g}{\ap{f}{p}}$ for any $f : A \to B$, $g : B \to C$ and $p : \id[A]{x}{y}$.
\item For a dependent function $f : \prd{x:A} B(x)$ and path $p : \id[A]{x}{y}$, there are paths $\dap{f}{p} : \id[B(y)]{\trans{B}{p}(f(x))}{f(y)}$ and $\dapc{f}{p} : \id[B(x)]{\transc{B}{p}(f(y))}{f(x)}$. We also have $\dap{f}{\refl{A}{x}} \equiv \dapc{f}{\refl{A}{x}} \equiv \refl{B(x)}{f(x)}$.
\item All constructs respect propositional equality.
\end{itemize}

\subsection{Homotopies between functions}
A homotopy between two functions $f,g$ is a ``natural transformation" between $f$ and $g$:
\begin{definition}
For $f,g : \prd{x:A} B(x)$, we define the type \[f \sim g \defeq \prd{a:A} \id[B(a)]{(f(a)}{g(a))}\] and call it the \emph{type of homotopies} between $f$ and $g$.
\end{definition}
\begin{proposition}
For any $f,g : A \to B$, $\alpha : f \sim g$, and path $p : \id[A]{x}{y}$, the diagram on the left commutes:
\begin{center}
\begin{tikzpicture}
\node (N0) at (1.5,0.75) {=};
\node (N1) at (0,1.5) {$f(x)$};
\node (N2) at (3,1.5) {$f(y)$};
\node (N3) at (0,0) {$g(x)$};
\node (N4) at (3,0) {$g(y)$};
\draw[-] (N1) -- node[above]{\footnotesize $\ap{f}{p}$} (N2);
\draw[-] (N1) -- node[left]{\footnotesize $\alpha(x)$} (N3);
\draw[-] (N2) -- node[right]{\footnotesize $\alpha(y)$} (N4);
\draw[-] (N3) -- node[below]{\footnotesize $\ap{g}{p}$} (N4);

\node (N10) at (8.5,0.75) {=};
\node (N11) at (7,1.5) {$\trans{B}{p}(f(x))$};
\node (N12) at (10,1.5) {$f(y)$};
\node (N13) at (7,0) {$\trans{B}{p}(g(x))$};
\node (N14) at (10,0) {$g(y)$};
\draw[-] (N11) -- node[above]{\footnotesize $\dap{f}{p}$} (N12);
\draw[-] (N11) -- node[left]{\footnotesize $\ap{\trans{B}{p}}{\alpha(x)}$} (N13);
\draw[-] (N12) -- node[right]{\footnotesize $\alpha(y)$} (N14);
\draw[-] (N13) -- node[below]{\footnotesize $\dap{g}{p}$} (N14);
\end{tikzpicture}
\end{center}
This property will be referred to as the \emph{naturality} of $\alpha$. We likewise have a dependent version of naturality when $f,g:\prd{x:A} B(x)$, which is shown on the right.
\end{proposition}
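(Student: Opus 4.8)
The plan is to prove both squares by path induction on $p$. Since $f$, $g$, $\alpha$ and the source point $x$ are fixed parameters, we may generalize over $y$ and $p$ and reduce to the single case $y\equiv x$, $p\equiv\refl{A}{x}$; in that case both diagrams degenerate and the only ingredients needed are the definitional computation rules for $\mathsf{ap}$, $\mathsf{dap}$ and transport, together with the groupoid unit laws for path composition.

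Concretely, for the non-dependent square the goal is a path $\id{\alpha(x)\ct\ap{g}{p}}{\ap{f}{p}\ct\alpha(y)}$. After the reduction $p\equiv\refl{A}{x}$, applying $\ap{h}{\refl{A}{x}}\equiv\refl{B}{h(x)}$ with $h\defeq f$ and with $h\defeq g$ turns the goal into $\id{\alpha(x)\ct\refl{B}{g(x)}}{\refl{B}{f(x)}\ct\alpha(x)}$, which is obtained by chaining the unit laws $q\ct\reflsym = q$ and $\reflsym\ct q = q$ (via transitivity and symmetry of propositional equality). For the dependent square the goal is $\id{\ap{\trans{B}{p}}{\alpha(x)}\ct\dap{g}{p}}{\dap{f}{p}\ct\alpha(y)}$; after $p\equiv\refl{A}{x}$ the rules $\trans{B}{\refl{A}{x}}\equiv\idfun{B(x)}$ and $\dap{h}{\refl{A}{x}}\equiv\refl{B(x)}{h(x)}$ reduce it to $\id{\ap{\idfun{B(x)}}{\alpha(x)}\ct\refl{B(x)}{g(x)}}{\refl{B(x)}{f(x)}\ct\alpha(x)}$, and a further path induction establishing the auxiliary fact $\ap{\idfun{C}}{q} = q$ for arbitrary $q:\id[C]{a}{b}$ (using $\ap{\idfun{C}}{\refl{C}{a}}\equiv\refl{C}{a}$) lets the same unit laws close the goal.

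I do not anticipate a genuine obstacle here: this is the standard coherence argument, which collapses under path induction to the unit laws. The one point deserving a little care is the dependent case, where after contracting $p$ the left edge still reads $\ap{\idfun{B(x)}}{\alpha(x)}$ rather than $\alpha(x)$ on the nose, so the small lemma $\ap{\idfun{C}}{q}=q$ genuinely is needed to realign the square; equivalently, one can fold this identification into the motive of the path induction so that the reflexivity case holds definitionally.
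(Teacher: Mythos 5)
Your proof is correct and is the canonical argument: the paper states this proposition among the preliminaries without proof, and path induction on $p$ followed by the definitional computation rules for $\mathsf{ap}$, $\mathsf{dap}$, transport, and the (propositional) unit laws is exactly the intended justification. Your attention to the dependent case --- that after contracting $p$ the left edge is $\ap{\idfun{B(x)}}{\alpha(x)}$ rather than $\alpha(x)$, requiring the auxiliary lemma $\ap{\idfun{C}}{q} = q$ or a suitably chosen motive --- is the one genuinely non-trivial point, and you handle it properly.
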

Of course, there is also a contravariant version of dependent naturality, which we will not need.

\subsection{Truncation levels}
In general, the structure of paths on a type $A$ can be highly nontrivial - we can have many distinct \emph{0-cells} $x,y,\ldots : A$; there can be many distinct \emph{1-cells} $p,q,\ldots : \id[A]{x}{y}$; there can be many distinct \emph{2-cells} $\gamma,\delta,\ldots : \id[{\id[A]{x}{y}}]{p}{q}$; ad infinitum. The hierarchy of truncation levels describes those types which are, informally speaking, trivial beyond a certain dimension: a type $A$ of truncation level $n$ can be characterized by the property that all $m$-cells for $m > n$ with the same source and target are equal. From this intuitive description we can see that the hierarchy is cumulative.
 
It is customary to also speak of truncation levels $-2$ and $-1$, called \emph{contractible types} and \emph{mere propositions} respectively:
\begin{definition}
A type $A:\U_i$ is called \emph{contractible} if there exists a point $a : A$ such that any other point $x : A$ is equal to $a$: \[\iscontr{A} \defeq \sm{a:A}\prd{x:A} (\id[A]{a}{x})\]
A type $A:\U_i$ is called a \emph{mere proposition} if all its inhabitants are equal: \[\isprop{A} \defeq \prd{x,y:A} (\id[A]{x}{y})\]
\end{definition}
Thus, a contractible type can be seen as having exactly one inhabitant, up to equality; a mere proposition can be seen as having at most one inhabitant, up to equality. Clearly:
\begin{proposition}
If $A$ is contractible then $A$ is a mere proposition.
\end{proposition}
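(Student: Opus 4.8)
The plan is to unfold both definitions and build the required path directly from the data witnessing contractibility, using only path inversion and concatenation as recalled among the groupoid laws.

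First I would assume a witness $h : \iscontr{A}$ and decompose it via the projections into its center $a \defeq \fst{h} : A$ and the contraction $c \defeq \snd{h} : \prd{x:A}\id[A]{a}{x}$, so that $c$ assigns to every point of $A$ a path from the center to that point.

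To establish $\isprop{A} = \prd{x,y:A}\id[A]{x}{y}$, I would then fix arbitrary $x,y : A$. Applying $c$ gives paths $c(x) : \id[A]{a}{x}$ and $c(y) : \id[A]{a}{y}$; reversing the first yields $\opp{c(x)} : \id[A]{x}{a}$, and concatenating produces $\opp{c(x)} \ct c(y) : \id[A]{x}{y}$, which is exactly what is required. Abstracting over $x$ and $y$, and finally over $h$, completes the construction.

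There is no genuine obstacle here — the result is immediate once the definitions are spelled out. The only points requiring a moment's care are that the destructuring of $h$ is an instance of $\Sigma$-elimination (equivalently, an application of $\fst{\cdot}$ and $\snd{\cdot}$), and that the composite $\opp{c(x)} \ct c(y)$ is well-typed, which follows from the typing of inversion $\opp{(\cdot)}$ and concatenation $\ct$ listed in the groupoid laws above.
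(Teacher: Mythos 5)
Your proof is correct and is exactly the standard argument the paper leaves implicit (it states the proposition with ``Clearly'' and no proof): extract the center $a$ and contraction $c$ from the contractibility witness and produce $\opp{c(x)} \ct c(y) : \id[A]{x}{y}$ for arbitrary $x,y$. Nothing further is needed.
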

The existence of a path between any two points implies more than just path-connectedness:
\begin{proposition}\label{thm_contr_path}
If $A$ is a mere proposition, then $\id[A]{x}{y}$ is contractible for any $x,y:A$.
\end{proposition}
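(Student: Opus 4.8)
The plan is to exhibit $\id[A]{x}{y}$ as contractible by producing one distinguished path in it and showing that every path is equal to that one. Since $A$ is a mere proposition we are handed a term $f:\isprop{A}$, that is, $f:\prd{u,v:A}\id[A]{u}{v}$. For the fixed $x,y:A$ I would take the center of contraction to be $\opp{f(x,x)}\ct f(x,y):\id[A]{x}{y}$. It then remains to construct, for an arbitrary $p:\id[A]{x}{y}$, a path $\id{\opp{f(x,x)}\ct f(x,y)}{p}$.

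The heart of the argument is the auxiliary claim $\prd{u,v:A}\prd{q:\id[A]{u}{v}}\id{f(u,u)\ct q}{f(u,v)}$. Its statement has exactly the shape demanded by the identity elimination rule recalled above: the motive $E(u,v,q)\defeq\id{f(u,u)\ct q}{f(u,v)}$ mentions no fixed endpoints, so the (unbased) rule $\mathsf{J}$ applies directly and reduces the goal to inhabiting $E(u,u,\refl{A}{u})$, namely $\id{f(u,u)\ct\refl{A}{u}}{f(u,u)}$, which is an instance of the right unit law from the groupoid structure.

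Granting this, I would finish as follows. Instantiating the auxiliary claim at $(x,y,p)$ yields a path $f(x,x)\ct p = f(x,y)$. Concatenating on the left with $\opp{f(x,x)}$ — i.e.\ applying $\mathsf{ap}$ of the map ``precompose with $\opp{f(x,x)}$'' — and then rewriting the left-hand side via associativity, the inverse law $\opp{f(x,x)}\ct f(x,x)=\refl{A}{x}$, and the left unit law, gives $\opp{f(x,x)}\ct f(x,y) = p$, which is the required path; each rewrite is legitimate because all term formers respect propositional equality. An alternative framing is to regard $\lambda v.\,f(x,v)$ as a homotopy from the constant map at $x$ to $\idfun{A}$ and feed the path $p$ to the naturality proposition; but this merely buries the same path induction — together with the routine facts that $\ap{\idfun{A}}{p}=p$ and that a constant map sends $p$ to a reflexivity path — under extra scaffolding, so the direct route seems preferable.

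I do not anticipate a genuine obstacle here. The one point that needs care is that in this theory the groupoid laws (unit, inverse, associativity) hold only up to a path rather than definitionally, so the closing computation must be assembled as an explicit composite of such paths rather than waved through by reduction; and the center of contraction must be taken to be $\opp{f(x,x)}\ct f(x,y)$ rather than $f(x,y)$ itself, precisely so that the $\mathsf{J}$ base case collapses to a bare unit law.
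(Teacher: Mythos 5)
Your proof is correct: the auxiliary claim $\prd{u,v:A}\prd{q:\id[A]{u}{v}}\id{f(u,u)\ct q}{f(u,v)}$ is exactly the kind of statement unbased path induction handles, and cancelling $f(x,x)$ on the left then yields that every $p$ equals the center $\opp{f(x,x)}\ct f(x,y)$, giving $\iscontr{\idtype[A]{x}{y}}$ as defined. The paper states this proposition without proof in its preliminaries, and your argument is precisely the standard one (as in the HoTT book) that it implicitly relies on, so there is nothing further to reconcile.
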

Thus, contractible types are in a sense the ``nicest" possible: any two points are equal up to a \emph{1-cell}, which itself is unique up to a \emph{2-cell}, which itself is unique up to a \emph{3-cell}, and so on. Mere propositions are the ``nicest" ones after contractible spaces. We can now easily show:
\begin{corollary}\label{thm_contr_prop}
For any $A$, $\iscontr{A}$ and $\isprop{A}$ are mere propositions.
\end{corollary}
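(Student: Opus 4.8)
The plan is to unfold both $\iscontr{A}$ and $\isprop{A}$ and reduce each claim to a single fact that is already available: by Proposition~\ref{thm_contr_path} together with the Proposition that contractible types are mere propositions, the identity types of a mere proposition are themselves mere propositions. The only other ingredients I will use are function extensionality (a consequence of the univalence axiom) and the standard characterization of identity types of a $\Sigma$-type, namely that a path $\id{(a,p)}{(a',p')}$ in $\sm{x:A} B(x)$ is the same data as a path $q:\id{a}{a'}$ together with a path in $B(a')$ from the transport of $p$ along $q$ to $p'$.

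For $\isprop{\isprop{A}}$ I would argue as follows. Given $f,g : \isprop{A}$, i.e.\ $f,g : \prd{x,y:A}\id[A]{x}{y}$, function extensionality reduces the goal $\id{f}{g}$ to showing $\id{f(x)(y)}{g(x)(y)}$ for all $x,y:A$. But $f$ itself is a witness that $A$ is a mere proposition, so by Proposition~\ref{thm_contr_path} the type $\id[A]{x}{y}$ is contractible, hence a mere proposition, and therefore its two inhabitants $f(x)(y)$ and $g(x)(y)$ are equal.

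For $\isprop{\iscontr{A}}$ I would proceed similarly. Given $(a,p),(a',p') : \iscontr{A}$, where $p : \prd{x:A}\id[A]{a}{x}$ and $p' : \prd{x:A}\id[A]{a'}{x}$, I must produce a path $\id{(a,p)}{(a',p')}$. By the $\Sigma$-path characterization it suffices to give a path $q : \id[A]{a}{a'}$ — take $q \defeq p(a')$ — and a path from the transport of $p$ along $q$ to $p'$ in the type $\prd{x:A}\id[A]{a'}{x}$. Now $(a,p)$ witnesses that $A$ is a mere proposition (by the Proposition that contractible types are mere propositions), so for each $x:A$ the type $\id[A]{a'}{x}$ is contractible and in particular a mere proposition; hence the transport of $p$ along $q$ and $p'$ agree pointwise, and function extensionality supplies the required second component.

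I expect no genuine obstacle here once function extensionality and the $\Sigma$-path lemma are in hand: the whole argument is a matter of peeling off the $\Pi$'s and the $\Sigma$ and landing in identity types of a mere proposition, where everything is forced. The one step that might look like it requires a clever choice — which path $q$ to use for the first component in the $\iscontr{A}$ case — is in fact immaterial, since $\id[A]{a}{a'}$ is itself contractible; I single out $q \defeq p(a')$ only for concreteness. The only real dependency is organizational: these two auxiliary facts about $\Sigma$- and $\Pi$-types must have been established before this point.
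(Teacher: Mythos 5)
Your proof is correct and is precisely the standard argument the paper leaves implicit: the corollary is stated with no proof beyond ``we can now easily show,'' and the intended route is exactly yours --- use the fact that contractible types are mere propositions together with Prop.~\ref{thm_contr_path} to land in contractible identity types, then close up with the $\Sigma$-path characterization and function extensionality. The only caveat is the one you flag yourself: those last two ingredients (Prop.~\ref{thm_pair_equiv} and the equivalence $\happly{f}{g}$) are only stated later in the paper, so the corollary's placement quietly presupposes them --- an organizational point, not a mathematical gap.
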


\subsection{Equivalences}
A crucial concept in HoTT is that of an equivalence between types. Intuitively, we want to think of two types $A,B$ as equivalent if there exists a bijection between them, i.e., a function $f : A \to B$ such that the preimage of any single point $b:B$ under $f$ is again a single point. Phrasing this in the language of HoTT:
\begin{definition}
We define the \emph{homotopy fiber} of a function $f : A\to B$ at $b:B$ by
\[ \hfiber{f}{b} \defeq \sm{a:A} (\id[B]{f(a)}{b}) \]
\end{definition}
\begin{definition}
A function $f : A \to B$ is called an \emph{equivalence} if all its homotopy fibers are contractible: \[\iseq{f} \defeq \prd{b:B} \iscontr{\hfiber{f}{b}}\]
We define \[(A \simeq B) \defeq \sm{f:A\to B}\iseq{f}\] and call $A$ and $B$ \emph{equivalent} if the above type is inhabited.
\end{definition}
Unsurprisingly, we can prove that $A$ and $B$ are equivalent by constructing functions going back and forth, which compose to identity on both sides\footnote{Although the type of such functions itself is not equivalent to $A \simeq B$, see Chpt.~4 of \cite{hott}.}; this is also a necessary condition.
\begin{proposition}\label{thm_quasieq}
Two types $A$ and $B$ are equivalent if and only if there exist functions $f:A \to B$ and $g : B \to A$ such that $g \comp f \sim \idfun{A}$ and $f \comp g \sim \idfun{B}$.
\end{proposition}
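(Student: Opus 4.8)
The statement is a logical equivalence, so I will prove the two implications separately.

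\emph{From an equivalence to a quasi-inverse.} Suppose $e : \iseq{f}$, so that $\hfiber{f}{b}$ is contractible for every $b : B$; write $c(b) \defeq \fst{e(b)} : \hfiber{f}{b}$ for the distinguished center. Define $g : B \to A$ by $g(b) \defeq \fst{c(b)}$ and set $\epsilon(b) \defeq \snd{c(b)} : \id[B]{f(g(b))}{b}$, which gives $\epsilon : f \comp g \sim \idfun{B}$. For the other homotopy, fix $a : A$: the type $\hfiber{f}{f(a)}$ is contractible, hence a mere proposition by the fact recorded above that contractible types are mere propositions, so there is a path $w : \id{c(f(a))}{(a, \refl{B}{f(a)})}$ in it. Taking $\eta(a) \defeq \ap{\fstsym}{w} : \id[A]{g(f(a))}{a}$ then defines $\eta : g \comp f \sim \idfun{A}$, finishing this direction.

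\emph{From a quasi-inverse to an equivalence.} Conversely, suppose we are given $f$, $g$, $\eta : g \comp f \sim \idfun{A}$, and $\epsilon : f \comp g \sim \idfun{B}$; we must show $\iseq{f}$, i.e.\ that $\hfiber{f}{b}$ is contractible for each $b : B$. The naive candidate center $(g(b), \epsilon(b))$ need not admit a contraction, because constructing one requires $\eta$ and $\epsilon$ to be \emph{coherent}, in the sense that $\ap{f}{\eta(a)}$ and $\epsilon(f(a))$ agree for every $a$. So we first improve $\epsilon$: keeping $g$ and $\eta$ fixed, put
\[ \epsilon'(b) \defeq \opp{\epsilon(f(g(b)))} \ct \ap{f}{\eta(g(b))} \ct \epsilon(b) \ :\ \id[B]{f(g(b))}{b}, \]
and construct $\tau : \prd{a:A} \id{\ap{f}{\eta(a)}}{\epsilon'(f(a))}$. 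This is a diagram chase in the groupoid laws, using the naturality of $\eta$ (which, since $g \comp f$ is an endomap of $A$, applied to the path $\eta(a)$ itself yields $\id{\eta(g(f(a)))}{\ap{g \comp f}{\eta(a)}}$) together with the functoriality of $\ap{f}{-}$.

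\emph{Contracting the fibers.} With the coherent data $(g, \eta, \epsilon', \tau)$ in hand, fix $b : B$ and take $(g(b), \epsilon'(b))$ as the center of $\hfiber{f}{b}$. Given any $(a, p) : \hfiber{f}{b}$, so $p : \id[B]{f(a)}{b}$, I must produce a path $\id[\hfiber{f}{b}]{(g(b), \epsilon'(b))}{(a, p)}$. By the standard characterization of identity types of $\Sigma$-types, such a path is the same as a path $q : \id[A]{g(b)}{a}$ together with a witness that transporting $\epsilon'(b)$ along $q$ in the family $\lambda x.\, \id[B]{f(x)}{b}$ gives $p$. I take $q \defeq \opp{\ap{g}{p}} \ct \eta(a)$ and prove the remaining condition by path induction on $p$: when $p$ is $\refl{B}{f(a)}$ it collapses, after the usual unit-law simplifications, to exactly $\tau(a)$. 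Hence every $\hfiber{f}{b}$ is contractible and $f$ is an equivalence.

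The only real obstacle is the coherence bookkeeping — building $\tau$ in the adjointification step, and checking the transport condition in the contraction. Neither is conceptually deep, but both demand care with the merely propositional groupoid laws; everything else is routine extraction of data from the hypotheses.
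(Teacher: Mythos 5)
Your proof is correct, and there is nothing in the paper to compare it against: Prop.~\ref{thm_quasieq} is stated as standard background without proof (the relevant material is Chapter~4 of the HoTT book), and your argument is precisely the standard one. In the forward direction you read off $g$ and $\epsilon$ from the centres of the contractible fibers and obtain $\eta$ by projecting a path from $c(f(a))$ to $(a,\refl{B}{f(a)})$ along $\fstsym$; in the converse direction you adjointify $(g,\eta,\epsilon)$ into a half-adjoint equivalence $(g,\eta,\epsilon',\tau)$ and then contract each $\hfiber{f}{b}$ onto $(g(b),\epsilon'(b))$ via the $\Sigma$-path characterization and path induction on $p$, which is exactly how one shows $\iseq{f}$ from a quasi-inverse. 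The only step left as a sketch is the diagram chase producing $\tau$; that is indeed the crux of the adjointification, but the ingredients you name (the naturality of $\eta$ applied to $\eta(a)$, giving $\eta(g(f(a))) = \ap{g\comp f}{\eta(a)}$, plus functoriality of $\ap{f}{-}$) are the right ones, and your reduction of the transport condition to $\tau(a)$ when $p$ is reflexivity is also correct.
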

We will refer to such functions $f$ and $g$ as forming a \emph{quasi-equivalence}.
From this we can easily show:
\begin{proposition}
Equivalence of types is an equivalence relation.
\end{proposition}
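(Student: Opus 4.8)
The plan is to appeal to Proposition~\ref{thm_quasieq}, which lets us establish each equivalence by exhibiting a quasi-equivalence — a pair of functions composing to the identity up to homotopy on both sides. I will check \emph{reflexivity}, \emph{symmetry}, and \emph{transitivity} in turn.

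For reflexivity, given a type $A$, take both functions to be $\idfun{A}$; then $\idfun{A} \comp \idfun{A} \equiv \idfun{A}$ definitionally, so the two required homotopies are the pointwise $\reflsym$, and hence $A \simeq A$. For symmetry, suppose $A \simeq B$; by Proposition~\ref{thm_quasieq} we obtain $f : A \to B$, $g : B \to A$ together with $\alpha : g \comp f \sim \idfun{A}$ and $\beta : f \comp g \sim \idfun{B}$. Then $g$ and $f$ form a quasi-equivalence witnessing $B \simeq A$, with the roles of $\alpha$ and $\beta$ interchanged, and Proposition~\ref{thm_quasieq} again gives $B \simeq A$.

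For transitivity, suppose $A \simeq B$ and $B \simeq C$, and fix quasi-equivalences: $f : A \to B$, $g : B \to A$ with $\alpha : g \comp f \sim \idfun{A}$, $\beta : f \comp g \sim \idfun{B}$; and $f' : B \to C$, $g' : C \to B$ with $\alpha' : g' \comp f' \sim \idfun{B}$, $\beta' : f' \comp g' \sim \idfun{C}$. I would take $f' \comp f : A \to C$ and $g \comp g' : C \to A$. For $a : A$ the path $\ap{g}{\alpha'(f(a))} \ct \alpha(a)$ has type $\id[A]{g(g'(f'(f(a))))}{a}$, since $\alpha'(f(a)) : \id[B]{g'(f'(f(a)))}{f(a)}$ and hence $\ap{g}{\alpha'(f(a))} : \id[A]{g(g'(f'(f(a))))}{g(f(a))}$; this yields a homotopy $(g \comp g') \comp (f' \comp f) \sim \idfun{A}$. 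Symmetrically, for $c : C$ the path $\ap{f'}{\beta(g'(c))} \ct \beta'(c)$ has type $\id[C]{f'(f(g(g'(c))))}{c}$, giving $(f' \comp f) \comp (g \comp g') \sim \idfun{C}$. By Proposition~\ref{thm_quasieq}, $A \simeq C$.

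The only routine-but-mildly-delicate part is the endpoint bookkeeping in the transitivity step: one whiskers the supplied homotopies by $\ap{g}{-}$ and $\ap{f'}{-}$ and concatenates them with the remaining homotopies so that sources and targets agree, which is justified by the functoriality of application-on-paths and the groupoid laws collected in Section~\ref{prelim}. No genuine obstacle arises — everything is explicit and uses nothing beyond Proposition~\ref{thm_quasieq}.
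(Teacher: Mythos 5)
Your proof is correct and follows exactly the route the paper intends: the proposition is stated immediately after Proposition~\ref{thm_quasieq} with the remark that it follows ``from this,'' i.e., by exhibiting quasi-equivalences for reflexivity, symmetry, and transitivity as you do. The endpoint bookkeeping in your transitivity step type-checks as written, so there is nothing to add.
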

We call $A$ and $B$ \emph{logically equivalent} if there are functions $f:A\to B$, $g:B \to A$. Clearly, if both types are mere propositions then logical equivalence implies $A \simeq B$. For example:
\begin{corollary}\label{thm_contr_char}
For any $A$, $\iscontr{A} \simeq (A \times \isprop{A})$.
\end{corollary}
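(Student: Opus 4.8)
The plan is to lean on the observation recorded just above the statement: between two mere propositions, a logical equivalence is already an equivalence. So I would split the argument into three pieces — check that $\iscontr{A}$ is a mere proposition, check that $A \times \isprop{A}$ is a mere proposition, and exhibit maps in both directions.

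The first piece is immediate: $\iscontr{A}$ is a mere proposition by Corollary~\ref{thm_contr_prop}. For the second, the point to notice is that the type $A \times \isprop{A}$ is, in a sense, self-collapsing: if $z : A \times \isprop{A}$, then $\snd{z} : \isprop{A}$ already witnesses that $A$ is a mere proposition. Since $\isprop{A}$ is itself a mere proposition (Corollary~\ref{thm_contr_prop} again), and a product of two mere propositions is a mere proposition by the standard coordinatewise argument, I would conclude that $A \times \isprop{A}$ is a mere proposition — the first coordinate of a path $\id{z}{w}$ being $\snd{z}(\fst{z})(\fst{w})$ and the second any path between $\snd{z}$ and $\snd{w}$.

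For the maps, in one direction I would send $(a,h) : \iscontr{A}$ to $(a,\ \lam{x,y}\, \opp{h(x)} \ct h(y))$, since $h : \prd{x:A}\id[A]{a}{x}$ yields, for any $x,y:A$, the path $\opp{h(x)} \ct h(y) : \id[A]{x}{y}$; in the other direction I would send $(a,p) : A \times \isprop{A}$ to $(a,\ \lam{x}\, p(a)(x)) : \iscontr{A}$. Having already shown both types to be mere propositions, I would not need to verify the round-trips — the two composites are automatically homotopic to the respective identity maps — so these functions form a quasi-equivalence and $\iscontr{A} \simeq (A \times \isprop{A})$ follows from Proposition~\ref{thm_quasieq}. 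The only step that calls for a moment's thought is the proof that $A \times \isprop{A}$ is a mere proposition, and even that reduces to the little trick of feeding an element its own $\isprop{A}$-component; everything else is routine.
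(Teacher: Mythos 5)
Your proposal is correct and follows exactly the route the paper intends: it states the corollary right after observing that logically equivalent mere propositions are equivalent, so the intended argument is precisely to note that $\iscontr{A}$ and $A \times \isprop{A}$ are mere propositions (the latter by the self-application trick you describe) and to give the two evident maps. Your explicit maps and the remark that the round-trips come for free are fine, so nothing is missing.
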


\subsection{Structure of path types}
Let us first consider the product type $A \times B$. We would like for two pairs $c,d:A \times B$ to be (propositionally) equal precisely when their first and second projections are equal. By path induction we can easily construct a function \[\pair{c}{d} : (\id{c}{d}) \to(\id{\fst{c}}{\fst{d}}) \times (\id{\snd{c}}{\snd{d}})\] We can show:
\begin{proposition}
The map $\pair{c}{d}$ is an equivalence for any $c,d:A\times B$.
\end{proposition}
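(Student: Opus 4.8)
The plan is to invoke Proposition~\ref{thm_quasieq}: I will produce a map back
\[ g_{c,d} : (\id{\fst{c}}{\fst{d}}) \times (\id{\snd{c}}{\snd{d}}) \to (\id{c}{d}) \]
and verify that $g_{c,d}$ and $\pair{c}{d}$ form a quasi-equivalence. Since definitional $\eta$ for pairs is not available, the only real bookkeeping is that one must apply the induction principle for dependent pairs (to $c$, to $d$, or to a pair of paths) at precisely the spots where the computation rules for $\mathsf{fst}$, $\mathsf{snd}$, and for $\pair{c}{d}$ are meant to fire; once these inductions have been performed, every identification we need in fact holds definitionally and is witnessed by $\reflsym$.

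To build $g_{c,d}$, apply pair induction to $c$ and to $d$, which reduces the problem to constructing, for all $a_1,a_2:A$ and $b_1,b_2:B$, a map $(\id{a_1}{a_2})\times(\id{b_1}{b_2}) \to (\id{(a_1,b_1)}{(a_2,b_2)})$; pair induction on the argument followed by path induction on each coordinate then reduces this to supplying a point of $\id{(a_1,b_1)}{(a_1,b_1)}$, for which we take $\refl{}{(a_1,b_1)}$. By construction, $g_{(a,b),(a,b)}(\refl{}{a},\refl{}{b}) \equiv \refl{}{(a,b)}$ for all $a:A$, $b:B$.

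For the composite $\pair{c}{d}\comp g_{c,d}$, take arbitrary $c,d:A\times B$ and an element $w$ of the product type; successive pair inductions on $c$, on $d$, and on $w$, together with a path induction on each of the two resulting coordinate paths, turn the goal into
\[ \id{\pair{(a_1,b_1)}{(a_1,b_1)}\bigl(g_{(a_1,b_1),(a_1,b_1)}(\refl{}{a_1},\refl{}{b_1})\bigr)}{(\refl{}{a_1},\refl{}{b_1})} , \]
and here $g$ computes to $\refl{}{(a_1,b_1)}$ and then $\pair{(a_1,b_1)}{(a_1,b_1)}$ computes to $(\refl{}{\fst{(a_1,b_1)}},\refl{}{\snd{(a_1,b_1)}}) \equiv (\refl{}{a_1},\refl{}{b_1})$, so $\reflsym$ closes the goal. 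For the composite $g_{c,d}\comp\pair{c}{d}$, take $c,d$ and $r:\id{c}{d}$; path induction on $r$ reduces to the case $r\equiv\refl{}{c}$, where $\pair{c}{c}(\refl{}{c})\equiv(\refl{}{\fst{c}},\refl{}{\snd{c}})$, and a further pair induction on $c$ reduces the goal to $\id{g_{(a,b),(a,b)}(\refl{}{a},\refl{}{b})}{\refl{}{(a,b)}}$, which holds by $\reflsym$ since the left-hand side computes to $\refl{}{(a,b)}$. By Proposition~\ref{thm_quasieq} this gives $\iseq{\pair{c}{d}}$.

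The step that I expect to require the most care is the second homotopy, $g_{c,d}\comp\pair{c}{d}\sim\idfun{(\id{c}{d})}$: because of the missing $\eta$, the map $g_{c,d}$ must be defined for arbitrary $c,d$ so that this composite is even typeable, yet $g_{c,d}$ becomes computable only after pair induction, so one genuinely has to nest a pair induction on $c$ after the path induction on $r$. With definitional $\eta$ for pairs, $g_{c,d}$ and both homotopies would instead be essentially immediate.
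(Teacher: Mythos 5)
Your proof is correct and is exactly the intended argument: the paper states this proposition without proof (it is the standard result, proved as in the HoTT book by constructing the quasi-inverse via pair induction plus path induction and checking both composites, then invoking Prop.~\ref{thm_quasieq}), and your handling of the missing $\eta$ for pairs — doing the pair induction on $c$ after the path induction on $r$ so that $g$ and the projections compute — is precisely the right bookkeeping.
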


We have a similar correspondence for dependent pairs; however, the second projections of $c,d:\sm{x:A}B(x)$ now lie in different fibers of $B$ and we employ (covariant) transport. By path induction we can define \[\dpair{c}{d} : (\id{c}{d}) \to \sm{(p:\id{\fst{c}}{\fst{d}})} (\id{\trans{B}{p}(\snd{c})}{\snd{d}}) \]
\begin{proposition}\label{thm_pair_equiv}
The map $\dpair{c}{d}$ is an equivalence for any $c,d:\sm{x:A}B(x)$.
\end{proposition}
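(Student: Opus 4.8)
The plan is to build a quasi-inverse to $\dpair{c}{d}$ and invoke Proposition~\ref{thm_quasieq}. Write $T(c,d) \defeq \sm{p : \id{\fst{c}}{\fst{d}}}(\id{\trans{B}{p}(\snd{c})}{\snd{d}})$ for the target type; it then suffices to produce a map $g_{c,d} : T(c,d) \to (\id{c}{d})$ with $g_{c,d}(\dpair{c}{d}(e)) = e$ for every $e : \id{c}{d}$ and $\dpair{c}{d}(g_{c,d}(w)) = w$ for every $w : T(c,d)$. Since the paper does not assume definitional $\eta$ for pairs, I cannot replace $c$ by $(\fst{c},\snd{c})$ on the nose, so the construction is driven by $\Sigma$-elimination throughout. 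To define $g_{c,d}$: $\Sigma$-eliminate $c$ and then $d$, reducing to $c \equiv (a,b)$ and $d \equiv (a',b')$; a term of $T((a,b),(a',b'))$ is a pair $(p,q)$ with $p : \id[A]{a}{a'}$ and $q : \id[B(a')]{\trans{B}{p}(b)}{b'}$; path induction on $p$ (based at $a$) reduces to $a' \equiv a$, $p \equiv \refl{}{a}$, so that $\trans{B}{p}(b) \equiv b$ and $q : \id[B(a)]{b}{b'}$; a final path induction on $q$ reduces to $b' \equiv b$, and we output $\refl{}{(a,b)}$. By the computation rules for $\Sigma$- and identity-elimination this yields the definitional equality $g_{(a,b),(a,b)}(\refl{}{a},\refl{}{b}) \equiv \refl{}{(a,b)}$, which drives everything that follows.

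For the homotopy $g_{c,d}(\dpair{c}{d}(e)) = e$, do path induction on $e$, reducing to $d \equiv c$ and $e \equiv \refl{}{c}$; since $\dpair{c}{d}$ is itself defined by path induction, $\dpair{c}{c}(\refl{}{c})$ computes to $(\refl{}{\fst{c}},\refl{}{\snd{c}})$, so the goal becomes $g_{c,c}(\refl{}{\fst{c}},\refl{}{\snd{c}}) = \refl{}{c}$; now $\Sigma$-eliminate $c$, reducing to $c \equiv (a,b)$, whence $\fst{c} \equiv a$ and $\snd{c} \equiv b$ and, by the definitional equality above, both sides of the goal are $\refl{}{(a,b)}$, so reflexivity finishes. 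For the homotopy $\dpair{c}{d}(g_{c,d}(w)) = w$, take $w : T(c,d)$, $\Sigma$-eliminate $c$, then $d$, then $w$ itself (writing it as $(p,q)$), and path-induct on $p$ and then on $q$; in the resulting fully canonical case $c \equiv (a,b) \equiv d$, $p \equiv \refl{}{a}$, $q \equiv \refl{}{b}$, the definitional equality gives $g_{(a,b),(a,b)}(\refl{}{a},\refl{}{b}) \equiv \refl{}{(a,b)}$, and then $\dpair{(a,b)}{(a,b)}(\refl{}{(a,b)}) \equiv (\refl{}{a},\refl{}{b})$, so the goal holds by reflexivity; the preceding $\Sigma$- and path inductions carry this conclusion back to an arbitrary $w$.

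I do not anticipate a conceptual obstacle; the one point needing care is precisely the lack of definitional $\eta$ for pairs, which forces both round-trip verifications to be pushed, via repeated $\Sigma$-elimination and path induction, all the way down to a fully canonical base case — $c \equiv (a,b) \equiv d$ with every path a reflexivity — since only there do $g_{c,d}$ and $\dpair{c}{d}$ actually compute and the goals collapse to an equation between a term and itself. Once the base-case reduction $g_{(a,b),(a,b)}(\refl{}{a},\refl{}{b}) \equiv \refl{}{(a,b)}$ is in place, both homotopies drop out, and Proposition~\ref{thm_quasieq} delivers the claim.
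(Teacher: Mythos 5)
Your construction is correct: the quasi-inverse built by $\Sigma$-elimination on $c$, $d$, and the argument, followed by path induction on $p$ and $q$, together with the two round-trip homotopies reduced to the canonical base case and Proposition~\ref{thm_quasieq}, is exactly the standard argument (HoTT book, Thm.~2.7.2) that the paper implicitly relies on, since it states Proposition~\ref{thm_pair_equiv} without proof. Your care about the absence of definitional $\eta$ for pairs, forcing $\Sigma$-elimination rather than projection, is precisely the right adjustment for this paper's setting.
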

We also have an analogous correspondence using a contravariant transport.

We would like for two types $A,B:\U_i$ to be equal precisely when they are equivalent. As before, we can easily obtain a function \[\ideq{A}{B} : (\id{A}{B}) \to (A \simeq B)\] The univalence axiom now states that this map is an equivalence:
\begin{axiom}[Univalence]
The map $\ideq{A}{B}$ is an equivalence for any $A,B:\U_i$.
\end{axiom}
It follows from univalence that \emph{equivalent types are equal} and hence they satisfy the same properties:
\begin{proposition}\label{thm_equiv_same}
For any type family $P : \U_i \to \U_j$, and types $A,B:\U_i$ with $A \simeq B$, we have that $P(A) \simeq P(B)$.
Thus in particular, $P(A)$ is inhabited precisely when $P(B)$ is.
\end{proposition}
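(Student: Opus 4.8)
The plan is to reduce everything to transport along a path manufactured by univalence. First, since the hypothesis gives us an inhabitant of $A \simeq B$, fix a witness $e : A \simeq B$. By the univalence axiom, the map $\ideq{A}{B} : (\id{A}{B}) \to (A \simeq B)$ is an equivalence; hence by Proposition~\ref{thm_quasieq} it admits a quasi-inverse $u : (A \simeq B) \to (\id{A}{B})$. Set $p \defeq u(e) : \id{A}{B}$, so we have converted the equivalence into an honest path between $A$ and $B$.

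Next, I would apply the action of $P$ on paths to $p$, obtaining $\ap{P}{p} : \id{P(A)}{P(B)}$. Now invoke univalence a second time, this time at the universe $\U_j$ and the types $P(A), P(B) : \U_j$: the map $\ideq{P(A)}{P(B)}$ is an equivalence, and feeding it $\ap{P}{p}$ produces an element of $P(A) \simeq P(B)$, which is exactly the first claim. (An equivalent route avoiding the second use of univalence: take $\trans{P}{p} : P(A) \to P(B)$ and $\trans{P}{\opp{p}} : P(B) \to P(A)$, and check using the groupoid laws $\trans{P}{(q \ct r)} = \trans{P}{r} \comp \trans{P}{q}$, $p \ct \opp{p} = \reflsym$, $\opp{p} \ct p = \reflsym$, and $\trans{P}{\reflsym} \equiv \idfun{}$ that these form a quasi-equivalence, then appeal again to Proposition~\ref{thm_quasieq}.)

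For the final sentence, unfold the definition of $\simeq$: the element of $P(A) \simeq P(B)$ just constructed yields in particular a function $f : P(A) \to P(B)$, and by Proposition~\ref{thm_quasieq} (or by rerunning the argument with the equivalence $B \simeq A$, using that equivalence of types is symmetric) a function $P(B) \to P(A)$. Composing with an inhabitant on either side shows $P(A)$ is inhabited precisely when $P(B)$ is.

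I do not expect a genuine obstacle here; the only points requiring care are purely bookkeeping: tracking that the two applications of univalence live at different universe levels ($\U_i$ for $A, B$ and $\U_j$ for $P(A), P(B)$), and remembering that extracting a path from an equivalence requires the quasi-inverse of $\ideq{A}{B}$ rather than $\ideq{A}{B}$ itself.
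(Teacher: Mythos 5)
Your proposal is correct and matches the argument the paper intends (the paper states this as an immediate consequence of univalence without spelling out a proof): convert the equivalence to a path via the quasi-inverse of $\ideq{A}{B}$, then apply $P$ on paths and pass back to an equivalence (or equivalently transport along $p$ and $\opp{p}$). The only cosmetic remark is that your second invocation of univalence is unnecessary, since the map $\ideq{P(A)}{P(B)}$ exists by path induction alone and you only need to apply it, not know that it is an equivalence.
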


Finally, two functions $f,g : \prd{x:A} B(x)$ should be equal precisely when there exists a homotopy between them. Constructing a map \[\happly{f}{g} : (\id{f}{g}) \to (f \sim g)\] is easy. Showing that this map is an equivalence (or even constructing a map in the opposite direction) is much harder, and is in fact among the chief consequences of univalence:
\begin{proposition}
The map $\happly{f}{g}$ is an equivalence for any $f,g:\prd{x:A}B(x)$.
\end{proposition}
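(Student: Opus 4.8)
The plan is to derive this from the univalence axiom by the standard two-step route (see \cite{hott}): first establish \emph{weak function extensionality} --- that $\prd{x:A}P(x)$ is contractible whenever $P : A \to \U_i$ satisfies $\prd{x:A}\iscontr{P(x)}$ --- and then bootstrap it to the stated equivalence. The only place univalence is genuinely used is the fact that post-composition with an equivalence is again an equivalence; the rest is path induction and bookkeeping, and I will have to be careful that the bookkeeping never secretly uses the result being proved.

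\textbf{Weak function extensionality.} Given such a $P$, the first projection $\fstsym : (\sm{x:A}P(x)) \to A$ is an equivalence, since (by path induction together with Propositions~\ref{thm_pair_equiv} and~\ref{thm_quasieq}) its homotopy fiber over $x$ is equivalent to the contractible type $P(x)$. I then want post-composition $\Phi \defeq \fstsym \comp ({-}) : (A \to \sm{x:A}P(x)) \to (A \to A)$ to be an equivalence. Proposition~\ref{thm_equiv_same} does not suffice: it only produces \emph{some} equivalence between these two function types, not the concrete $\Phi$ whose homotopy fibers I can identify. Instead, univalence gives a path $w : (\sm{x:A}P(x)) = A$ whose induced equivalence is identified with $\fstsym$, and I run path induction on $w$: the base case is post-composition with the identity function, which is itself the identity (by $\eta$) and hence an equivalence. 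Instantiating at $w$ and using that being an equivalence respects propositional equality, $\Phi$ is an equivalence, so $\hfiber{\Phi}{\idfun{A}}$ is contractible. Finally I exhibit $\prd{x:A}P(x)$ as a retract of this fiber: the section is $g \mapsto (\lambda x.\,(x, g(x)),\ \refl{}{\idfun{A}})$ --- which typechecks because $\fstsym \comp (\lambda x.\,(x, g(x))) \equiv \idfun{A}$ --- and the retraction is $(h, q) \mapsto \lambda x.\,\trans{P}{e_x}(\snd{h(x)})$, where $e_x : \id{\fst{h(x)}}{x}$ is read off from $q$ by the (easily defined) map sending a path between functions to a homotopy. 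The composite retraction-after-section computes to the identity using $\beta$ for pairs, $\eta$ for functions, the definitional law $\trans{P}{\refl{}{x}} \equiv \idfun{P(x)}$, and the fact that the homotopy associated to a reflexivity path is constantly reflexivity. Since a retract of a contractible type is contractible (its center is the image of the center under the retraction), $\prd{x:A}P(x)$ is contractible.

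\textbf{From weak to full function extensionality.} Fix $f : \prd{x:A}B(x)$, and write $\mathsf{happly}_{f,g}$ for the map $\happly{f}{g}$ of the statement. Each $\sm{y:B(x)}(\id{f(x)}{y})$ is contractible by path induction, so $\prd{x:A}\sm{y:B(x)}(\id{f(x)}{y})$ is contractible by weak function extensionality, and $\sm{g : \prd{x:A}B(x)}(f \sim g)$ is a retract of it via the section $(g, \alpha) \mapsto \lambda x.\,(g(x), \alpha(x))$ and the retraction $k \mapsto (\lambda x.\,\fst{k(x)},\ \lambda x.\,\snd{k(x)})$ --- the relevant composite being the identity by $\beta$ for pairs and $\eta$ for functions --- hence also contractible. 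The based path type $\sm{g}(\id{f}{g})$ is likewise contractible, so the map $(g, p) \mapsto (g, \mathsf{happly}_{f,g}(p))$ from $\sm{g}(\id{f}{g})$ to $\sm{g}(f \sim g)$ is a map between contractible types and therefore an equivalence (the constant map at the center of the domain is a quasi-inverse; apply Proposition~\ref{thm_quasieq}). But this is exactly the map on total spaces induced by the fiberwise family $g \mapsto \mathsf{happly}_{f,g}$, and a fiberwise map is a fiberwise equivalence precisely when it induces an equivalence on total spaces (a routine fibration argument; cf.\ \cite{hott}); hence each $\mathsf{happly}_{f,g}$ is an equivalence.

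\textbf{The main obstacle.} The only non-routine step --- and the only one appealing to univalence --- is showing that $\Phi$ is an equivalence. The subtlety is that the subsequent retract argument needs an equivalence that is \emph{on the nose} post-composition with $\fstsym$, so that $\hfiber{\Phi}{\idfun{A}}$ is literally $\sm{h}(\id{\fstsym \comp h}{\idfun{A}})$ and the retraction onto $\prd{x:A}P(x)$ can actually be written down; an abstractly-produced equivalence (all one gets from Proposition~\ref{thm_equiv_same}) is useless here, which is precisely where univalence must enter --- one converts the equivalence $\fstsym$ into a path and does path induction, paying only the mild price that after the induction $\fstsym$ becomes propositionally, not definitionally, the identity (harmless, since being an equivalence is stable under propositional equality). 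The secondary thing to watch throughout is that no retract construction may invoke full function extensionality while it is still being proved --- only weak function extensionality and the bare ``path-to-homotopy'' map are available --- and here one is lucky: every retraction-after-section computation above collapses to $\beta$ for pairs together with $\eta$ for functions, so it never needs definitional $\eta$ for pairs, which this paper does not assume.
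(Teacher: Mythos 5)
Your argument is exactly the one the paper relies on: the paper's ``proof'' is simply a citation to Chapter 4.9 of the HoTT book, and your proposal reproduces that chapter's argument (univalence gives that post-composition with $\fstsym$ is an equivalence, hence weak function extensionality via the retract of $\hfiber{\Phi}{\idfun{A}}$, then full extensionality via contractibility of $\sm{g}(f \sim g)$ and the fiberwise-versus-total-map lemma), with the side conditions (definitional $\eta$ for functions, no $\eta$ for pairs, no hidden use of funext) handled correctly. So the proposal is correct and takes essentially the same route as the paper's cited source.
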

\begin{proof}
See Chpt.~4.9 of \cite{hott}.
\end{proof}

\section{Higher Inductive Types}\label{hits}
An inductive type $X$ can be understood as being \emph{freely generated} by a collection of constructors: in the familiar case of natural numbers, we have the two constructors for zero and successor. The property of being freely generated can be stated as an induction principle: in order to show that a property $P : \nat \to \U_i$ holds for all $n:\nat$, it suffices to show that it holds for zero and is preserved by the successor operation. As a special case, we get the recursion principle: in order to define a map $f:\nat \to C$, is suffices to determine its value at zero and it’s behavior with respect to successor. 

Higher inductive types generalize ordinary inductive types by allowing constructors involving \emph{path spaces of $X$} rather than just $X$ itself, as the next example shows.

\subsection{The circle} The unit circle $\bf{S}^{1}$, denoted by $\Sn{1}:\U_0$, can be represented as an inductive type with two constructors \cite{licata_shulman}: 
\begin{align*}
  & \base : \Sn{1} \\
	& \lp : \id[\Sn{1}]{\base}{\base}
\end{align*}
pictured as 
\begin{center}	
\begin{tikzpicture}[scale=1.3,cap=round,>=latex]
        \draw[thick] (0cm,0cm) circle(0.7cm);
				\filldraw[black] (270:0.7cm) circle(1pt);           
				\draw (270:1cm) node[fill=white] {$\base$};				
			  \draw (0:1.2cm) node[fill=white] {$\lp$};
\end{tikzpicture}			
\end{center}
This in particular means that we have further paths, such as $\opp{\lp} \ct {\lp} \ct \lp \ct \refl{\Sn{1}}{\base}$
(which is equal to $\lp$).

We can reason about the circle using the principle of \emph{circle recursion}, also called \emph{simple elimination} for $\Sn{1}$, which tells us that in order to construct a function out of $\Sn{1}$ into a type $C$, it suffices to supply a point $c : C$ and a loop $s : \id[C]{c}{c}$.
\begin{mathpar}
\inferrule{C : \U_i \\ c : C \\ s : \id[C]{c}{c}}{\circrec{C}{c}{s} : \Sn{1} \to C}
\end{mathpar}
Furthermore, the recursor has the expected behavior on the 0-cell constructor $\base$ (we omit the premises):
\[ \circrec{C}{c}{s}(\base) \equiv c : C \]
We also have a computation rule for the 1-cell constructor $\lp$:
\[ \id[{\idtype[C]{c}{c}}]{\ap{\circrec{C}{c}{s}}{\lp}}{s} \]
This rule type-checks by virtue of the previous one. We note that in order to record the effect of the recursor on the path $\lp$, we use the ``action-on-paths" construct $\mathsf{ap}$. Since this is a derived notion rather than a primitive one, we state the rule as a propositional rather than definitional equality. 

We also have the more general principle of \emph{circle induction}, also called \emph{dependent elimination} for $\Sn{1}$, which subsumes recursion. Instead of a type $C : \U_i$ we now have a type family $E : \Sn{1} \to \U_i$. Where previously we required a $c : C$, we now need a point $e : E(\base)$. Finally, an obvious generalization of needing a loop $s : \id[C]{c}{c}$ would be to ask for a loop $d : \id[E(\base)]{e}{e}$. However, this would be incorrect: once we have our desired inductor of type $\prd{x:\Sn{1}} E(x)$, its effect on $\lp$ is not a loop at $e$ in the fiber $E(\base)$ but a path from $\trans{E}{\lp}(e)$ to $e$ in $E(\base)$ (or its contravariant version). The induction principle thus takes the following form:
\begin{mathpar}
\inferrule{E : \Sn{1} \to \U_i \\ e : E(\base) \\ d : \id[E(\base)]{\trans{E}{\lp}(e)}{e}}{\circind{E}{e}{d} : \prd{x:\Sn{1}} E(x)}
\end{mathpar}
We have the associated computation rules:
\begin{align*} \circind{E}{e}{d}(\base) & \equiv e : E(\base) \\
\id[{\idtype[E(\base)]{\trans{E}{\lp}(e)}{e}}]{\dap{\circind{E}{e}{d}}{\lp}&}{d}
\end{align*}

\subsection{The circle, round two}
We could have alternatively represented the circle as an inductive type $\Sna{1}:\U_0$ with four constructors:
\begin{align*}
  \north & : \Sna{1} \\
	\south & : \Sna{1} \\
	\e & : \id[\Sna{1}]{\north}{\south} \\
	\w & : \id[\Sna{1}]{\north}{\south}
\end{align*}
pictured as
\begin{center}	
\begin{tikzpicture}[scale=1.3,cap=round,>=latex]
        \draw[thick] (0cm,0cm) circle(0.7cm);
				\filldraw[black] (90:0.7cm) circle(1pt);
				\filldraw[black] (270:0.7cm) circle(1pt);
				\draw (90:1.1cm) node[fill=white] {$\north$};
				\draw (270:1cm) node[fill=white] {$\south$};				
			  \draw (180:1.2cm) node[fill=white] {$\e$};
				\draw (0:1.2cm) node[fill=white] {$\w$};
\end{tikzpicture}			
\end{center}
The corresponding induction principle is
\begin{mathpar}
\inferrule{E : \Sna{1} \to \U_i \\ u : E(\north) \\ v : E(\south) \\ \mu : \id[E(\south)]{\trans{E}{\e}(u)}{v} \\ \nu : \id[E(\south)]{\trans{E}{\w}(u)}{v}\;\;\;\;}{\circaind{E}{u}{v}{\mu}{\nu} : \prd{x:\Sna{1}} E(x)}
\end{mathpar}
with the associated computation rules
\begin{align*}
\circaind{E}{u}{v}{\mu}{\nu}(\north) & \equiv u : E(\north)\\
\circaind{E}{u}{v}{\mu}{\nu}(\south) & \equiv v : E(\south)
\end{align*}
and
\begin{align*}
\id{\dap{\circaind{E}{u}{v}{\mu}{\nu}}{\e}}{\mu} \\
\id{\dap{\circaind{E}{u}{v}{\mu}{\nu}}{\w}}{\nu}
\end{align*}

As expected, the two circle types are equivalent:
\begin{proposition}\label{thm_circeq}
We have $\Sn{1} \simeq \Sna{1}$.
\end{proposition}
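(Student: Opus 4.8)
The plan is to build a quasi-equivalence between $\Sn{1}$ and $\Sna{1}$ and then invoke Proposition~\ref{thm_quasieq}. Define $\phi : \Sn{1}\to\Sna{1}$ by circle recursion, sending the point to $\north$ and the loop to $\e\ct\opp{\w}:\id[\Sna{1}]{\north}{\north}$; so $\phi(\base)\equiv\north$ and $\ap{\phi}{\lp}=\e\ct\opp{\w}$ (note that $\e$ itself is not a loop at $\north$, so some such correction is forced). In the other direction, I would first derive a recursion principle for $\Sna{1}$ by specialising $\Sna{1}$-induction to a constant family $x\mapsto\Sn{1}$ (converting the path computation rules via the standard fact that transport along a path in a constant family is homotopic to the identity), and then apply it with the two points $\base,\base:\Sn{1}$ and the two paths $\lp:\id[\Sn{1}]{\base}{\base}$ and $\refl{\Sn{1}}{\base}$. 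This produces $\psi:\Sna{1}\to\Sn{1}$ with $\psi(\north)\equiv\psi(\south)\equiv\base$, $\ap{\psi}{\e}=\lp$, and $\ap{\psi}{\w}=\refl{\Sn{1}}{\base}$.

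Next I would check $\psi\comp\phi\sim\idfun{\Sn{1}}$. We have $(\psi\comp\phi)(\base)\equiv\base$, and concatenating the computation rules above with the groupoid laws for path inversion, concatenation, and $\mathsf{ap}$ (including $\ap{\idfun{\Sn{1}}}{\lp}=\lp$) gives $\ap{\psi\comp\phi}{\lp}=\ap{\psi}{\e}\ct\opp{\ap{\psi}{\w}}=\lp\ct\refl{\Sn{1}}{\base}=\lp$. Now apply circle induction to $E\defeq\lambda x.\,\id[\Sn{1}]{(\psi\comp\phi)(x)}{x}$: take $\refl{\Sn{1}}{\base}:E(\base)$ as the point over $\base$, and for the datum over $\lp$ use the standard computation of transport in a family of identity types, $\trans{(\lambda x.\,f(x)=g(x))}{p}(q)=\opp{\ap{f}{p}}\ct q\ct\ap{g}{p}$ (proved by path induction), which here reduces $\trans{E}{\lp}(\refl{\Sn{1}}{\base})$ to $\opp{\ap{\psi\comp\phi}{\lp}}\ct\lp=\opp{\lp}\ct\lp=\refl{\Sn{1}}{\base}$; so the required path exists and circle induction yields the homotopy.

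The homotopy $\phi\comp\psi\sim\idfun{\Sna{1}}$ is symmetric, using $\Sna{1}$-induction on $E\defeq\lambda x.\,\id[\Sna{1}]{(\phi\comp\psi)(x)}{x}$, where now $(\phi\comp\psi)(\north)\equiv(\phi\comp\psi)(\south)\equiv\north$, $\ap{\phi\comp\psi}{\e}=\e\ct\opp{\w}$, and $\ap{\phi\comp\psi}{\w}=\refl{\Sna{1}}{\north}$. The one subtlety is the choice of data: the point over $\north$ is $\refl{\Sna{1}}{\north}$, but the point over $\south$ must be a path $\id[\Sna{1}]{\north}{\south}$, and it has to be $\w$ — the choice $\e$ would instead demand a path $\id{\w}{\e}$. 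The two path data then reduce, via the same transport formula, to $\trans{E}{\e}(\refl{\Sna{1}}{\north})=\opp{(\e\ct\opp{\w})}\ct\e=\w$ and $\trans{E}{\w}(\refl{\Sna{1}}{\north})=\opp{\refl{\Sna{1}}{\north}}\ct\w=\w$, so each is a reflexivity path modulo the propositional computation-rule equalities. With both homotopies in hand, Proposition~\ref{thm_quasieq} gives $\Sn{1}\simeq\Sna{1}$.

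\textbf{Main obstacle.} There is no deep difficulty; the work is entirely bookkeeping. Because the computation rules for $\Sn{1}$ and $\Sna{1}$ are merely propositional, every equation $\ap{f}{-}=\cdots$ above is a concatenation of propositional equalities rather than a definitional collapse, so one must track these paths carefully — especially when passing $\ap{\psi}{-}$ through the recursion rule defining $\phi$, and conversely. One also needs the two routine transport lemmas used above (transport in a constant family, and transport in a family of identity types), both by path induction, and the correct choice of the point over $\south$ in the last step. No appeal to univalence or function extensionality is required.
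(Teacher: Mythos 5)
Your proposal is correct and follows essentially the same route as the paper's proof sketch: the same pair of maps ($\base\mapsto\north$, $\lp\mapsto\e\ct\opp{\w}$ in one direction; $\north,\south\mapsto\base$, $\e\mapsto\lp$, $\w\mapsto\refl{\Sn{1}}{\base}$ in the other), with both round-trip homotopies obtained from the respective induction principles applied to the identity-type families, followed by Prop.~\ref{thm_quasieq}. The extra details you supply (the transport-in-identity-family lemma, the constant-family reduction for $\Sna{1}$-recursion, and the forced choice of $\w$ as the datum over $\south$) are exactly the bookkeeping the paper leaves implicit.
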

\begin{proof}[Proof sketch]
From left to right, map $\base$ to $\north$ and $\lp$ to $\e \ct \opp{\w}$. From right to left, map both $\north$ and $\south$ to $\base$, $\e$ to $\lp$, and $\w$ to $\refl{\Sn{1}}{\base}$. Using the respective induction principles, show that these two mappings compose to identity on both sides and apply Prop.~\ref{thm_quasieq}. 
\end{proof}

\subsection{Computation laws, revisited}
Prop.~\ref{thm_circeq} together with univalence imply that the types $\Sn{1}$ and $\Sna{1}$ are equal and hence satisfy the same properties (see Prop.~\ref{thm_equiv_same}). We would thus expect the induction principle for $\Sn{1}$ to carry over to $\Sna{1}$, and vice versa. Indeed, with a little effort we can show the former:
\begin{proposition}\label{thm_snaind}
The type $\Sna{1}$ satisfies the induction and computation laws for $\Sn{1}$, with $\north$ acting as the constructor $\base$ and
$\e \ct \opp{\w}$ acting as the constructor $\lp$.
\end{proposition}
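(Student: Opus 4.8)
The plan is to reduce the claimed $\Sn{1}$-induction principle for $\Sna{1}$ to the genuine $\Sna{1}$-induction principle $\circaindsym$. Suppose we are given a family $E : \Sna{1} \to \U_i$, a point $e : E(\north)$, and a path $d : \id[E(\north)]{\trans{E}{(\e \ct \opp{\w})}(e)}{e}$ --- exactly the data that $\Sn{1}$-induction would demand once $\north$ is read as $\base$ and $\e \ct \opp{\w}$ as $\lp$. From $e$ and $d$ we must manufacture the four inputs of $\circaindsym$ and then check that the resulting section has the right computational behaviour on $\north$ and on $\e \ct \opp{\w}$.

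For the inputs, I would take $u \defeq e : E(\north)$ and $v \defeq \trans{E}{\e}(e) : E(\south)$, so that the third input can be chosen to be the reflexivity path $\mu : \id[E(\south)]{\trans{E}{\e}(u)}{v}$. The fourth input $\nu : \id{\trans{E}{\w}(e)}{\trans{E}{\e}(e)}$ is where $d$ enters. By the groupoid laws for transport, $\trans{E}{(\e \ct \opp{\w})} = \transc{E}{\w} \comp \trans{E}{\e}$ and $\trans{E}{\w} \comp \transc{E}{\w} = \trans{E}{(\opp{\w} \ct \w)} = \idfun{E(\south)}$, both up to canonical paths; rewriting $d$ along the first as a path $\transc{E}{\w}(\trans{E}{\e}(e)) = e$, applying $\trans{E}{\w}$ to it, and cancelling via the second, we obtain a path $\trans{E}{\e}(e) = \trans{E}{\w}(e)$ whose inverse is our $\nu$. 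Now set $f \defeq \circaind{E}{u}{v}{\mu}{\nu} : \prd{x:\Sna{1}} E(x)$.

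It remains to verify the two $\Sn{1}$-computation laws for $f$. The law at $\base$ is free: $f(\north) \equiv e$ holds by the $\north$-computation rule of $\circaindsym$ (indeed definitionally, matching the definitional $\base$-rule of $\Sn{1}$). The law at $\lp$ asks for a path $\id{\dap{f}{(\e \ct \opp{\w})}}{d}$. Here I would invoke the standard coherence lemmas --- each provable by path induction --- that rewrite $\dap{f}{(p \ct q)}$ in terms of $\dap{f}{p}$, $\dap{f}{q}$ and the transport-over-composition path, and that rewrite $\dap{f}{\opp{p}}$ in terms of $\dap{f}{p}$ and the transport-cancellation path. Feeding in the $\Sna{1}$-computation rules $\dap{f}{\e} = \mu$ (reflexivity, so its $\transc{E}{\w}$-image vanishes under the unit law) and $\dap{f}{\w} = \nu$, and then unfolding the definition of $\nu$, the goal collapses to an identity between two composites built out of $d$, the $\mathsf{ap}$-images of the cancellation homotopies for $\trans{E}{\w}$ and $\transc{E}{\w}$, and associativity/unit cells; the crucial step is the naturality of those cancellation homotopies, which lets the $\mathsf{ap}$-images of $d$ cancel back down to $d$.

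The main obstacle is precisely this concluding diagram chase. None of its ingredients is deep, but it requires disciplined bookkeeping of iterated transports and of the coherence $2$-cells relating $\trans{E}{(\e \ct \opp{\w})}$, $\transc{E}{\w}\comp\trans{E}{\e}$ and $\idfun{E(\south)}$ --- which is why the proposition costs ``a little effort'' rather than being immediate. (An alternative is to transport the $\Sn{1}$-induction principle directly along the equality $\Sn{1} = \Sna{1}$ supplied by Prop.~\ref{thm_circeq} and univalence, via Prop.~\ref{thm_equiv_same}; but since that principle quantifies over families $E : \Sna{1} \to \U_i$, carrying it back and forth still forces one to check that $\base$ and $\lp$ are sent to $\north$ and $\e \ct \opp{\w}$, so the coherence cost is comparable.)
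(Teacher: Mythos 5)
The paper states this proposition without proof (``with a little effort we can show\dots''), so there is nothing to compare line by line; your proposal supplies exactly the argument the surrounding text suggests, and it is sound. Reducing to $\circaindsym$ with $u \defeq e$, $v \defeq \trans{E}{\e}(e)$, $\mu \defeq \reflsym$, and $\nu$ obtained by solving $d$ through the transport coherences is the natural choice: all four inputs have the required types, and since the $\Sna{1}$ rule at $\north$ is definitional in this subsection, you really do get the definitional $\base$-law $f(\north)\equiv e$, which is the point of stating the proposition in this direction. For the $\lp$-law, your observation that the canonical operation sending $(\mu,\nu)$ to $\dap{f}{\e\ct\opp{\w}}$ is, at $\mu=\reflsym$, an equivalence in $\nu$ (a composite of $\mathsf{ap}$ along an equivalence, whiskering, and inversion) is what guarantees the chase closes once $\nu$ is defined as the corresponding preimage of $d$; the remaining bookkeeping uses only general lemmas about $\dap{}$ of composites and inverses and naturality of the cancellation homotopies, all provable by path induction on arbitrary paths and then instantiated at $\e,\w$, so no induction on the constructors themselves is ever needed. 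One caveat on your parenthetical alternative: transporting the $\Sn{1}$-induction principle across univalence cannot yield this proposition as stated, because the definitional computation rule at $\base$ is not an internal statement and so cannot be transported --- only a propositional surrogate would survive; this is precisely why the direct construction is the right route here (and why the paper's reverse direction fails in the definitional setting).
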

In the other direction, though, we hit a snag - the only obvious choice we have is to define both points $\north$ and $\south$ to be $\base$, one of the paths $\w$ and $\e$ to be $\lp$, and the other one the identity path at $\base$. This, however, does not give us the desired induction principle: unless the two given points $u : E(\base)$ and $v : E(\base)$ happen to be definitionally equal, we will not be able to map $\base$ to both of them, as required by the computation rules. 

This poses more than just a conceptual problem - in mathematics, we often have several possible definitions of a given notion, all of which are interchangeable from the point of view of a ``user". Having two definitions of a circle which are not (known to be) interchangeable, however, can be problematic: any theorem we establish about or by appealing to $\Sna{1}$ might no longer hold (or even type-check!) when using $\Sn{1}$ instead.

This provides some motivation for considering inductive types with \emph{propositional computation rules} instead. In the case of $\Sn{1}$, the propositional equality at the 0-cell level is witnessed by a path $\beta_{E,e,d}$:
\[ \beta_{E,e,d} : \id[E(\base)]{\circind{E}{e}{d}(\base)}{e} \]
The computation rule at the 1-cell level now states that we have the following commuting diagram:
\begin{center}
\begin{tikzpicture}
\node (N0) at (3,1) {=};
\node (N1) at (0,2) {$\trans{E}{\lp}(\circind{E}{e}{d}(\base))$};
\node (N2) at (6,2) {$\circind{E}{e}{d}(\base)$};
\node (N3) at (0,0) {$\trans{E}{\lp}(e)$};
\node (N4) at (6,0) {$e$};
\draw[-] (N1) -- node[above]{\footnotesize $\dap{\circind{E}{e}{d}}{\lp}$} (N2);
\draw[-] (N1) -- node[left]{\footnotesize $\ap{\trans{E}{\lp}}{\beta_{E,e,d}}$} (N3);
\draw[-] (N2) -- node[right]{\footnotesize $\beta_{E,e,d}$} (N4);
\draw[-] (N3) -- node[below]{\footnotesize $d$} (N4);
\end{tikzpicture}
\end{center}
Similarly, in the case of $\Sna{1}$ we have paths $\gamma_{E,u,v,\mu,\nu}$ and $\delta_{E,u,v,\mu,\nu}$ witnessing the 0-cell propositional equalities:
\[ \gamma_{E,u,v,\mu,\nu} : \id[E(\north)]{\circaind{E}{u}{v}{\mu}{\nu}(\north)}{u} \]
\[ \delta_{E,u,v,\mu,\nu} : \id[E(\south)]{\circaind{E}{u}{v}{\mu}{\nu}(\south)}{v} \]
The computation rule for the constructor $\e$ takes the form of the following commuting diagram:
\begin{center}
\begin{tikzpicture}
\node (N0) at (3.5,1) {=};
\node (N1) at (0,2) {$\trans{E}{\e}(\circaind{E}{u}{v}{\mu}{\nu}(\north))$};
\node (N2) at (7,2) {$\circaind{E}{u}{v}{\mu}{\nu}(\south)$};
\node (N3) at (0,0) {$\trans{E}{\e}(u)$};
\node (N4) at (7,0) {$v$};
\draw[-] (N1) -- node[above]{\footnotesize $\dap{\circaind{E}{u}{v}{\mu}{\nu}}{\e}$} (N2);
\draw[-] (N1) -- node[left]{\footnotesize $\ap{\trans{E}{\e}}{\gamma_{E,u,v,\mu,\nu}}$} (N3);
\draw[-] (N2) -- node[right]{\footnotesize $\delta_{E,u,v,\mu,\nu}$} (N4);
\draw[-] (N3) -- node[below]{\footnotesize $\mu$} (N4);
\end{tikzpicture}
\end{center}
There is an analogous commuting diagram for the constructor $\w$.

It is not too hard to show that Prop.~\ref{thm_circeq} still holds when the computation laws are propositional:
\begin{proposition}
In the setting of inductive types with propositional computation laws, we have $\Sn{1} \simeq \Sna{1}$.
\end{proposition}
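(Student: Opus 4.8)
The strategy follows the proof of Prop.~\ref{thm_circeq}: construct maps in both directions, show the two composites are homotopic to the respective identities, and apply Prop.~\ref{thm_quasieq}. The only new feature is that the recursors now compute only up to a path, so every definitional reduction used silently in the strict argument must be replaced by an explicit computation path.

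I would take $f : \Sn{1} \to \Sna{1}$ to be defined by circle recursion with point $\north$ and loop $\e \ct \opp{\w}$, and $g : \Sna{1} \to \Sn{1}$ to be defined by recursion for $\Sna{1}$ sending both $\north$ and $\south$ to $\base$, the path $\e$ to $\lp$, and the path $\w$ to $\refl{\Sn{1}}{\base}$. Since recursion is induction at a constant family, $f$ carries a $0$-cell path $\beta : \id[\Sna{1}]{f(\base)}{\north}$ and a $1$-cell computation square, while $g$ carries $0$-cell paths $\gamma : \id[\Sn{1}]{g(\north)}{\base}$, $\delta : \id[\Sn{1}]{g(\south)}{\base}$ and $1$-cell squares for $\e$ and $\w$; after discharging the transport-in-a-constant-family coherences, the squares amount (up to paths) to $\ap{f}{\lp} = \beta \ct (\e \ct \opp{\w}) \ct \opp{\beta}$, $\ap{g}{\e} = \gamma \ct \lp \ct \opp{\delta}$, and $\ap{g}{\w} = \gamma \ct \opp{\delta}$.

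For $g \comp f \sim \idfun{\Sn{1}}$ I apply circle induction with $E(x) \defeq \id[\Sn{1}]{g(f(x))}{x}$ and basepoint $\ap{g}{\beta} \ct \gamma$. Since transport in a family of the form $\id{h(x)}{x}$ along $\lp$ is precomposition with $\opp{\ap{h}{\lp}}$ and postcomposition with $\lp$, the loop obligation becomes an equality between $(\ap{g}{\beta}\ct\gamma) \ct \lp$ and $\ap{g\comp f}{\lp} \ct (\ap{g}{\beta}\ct\gamma)$; expanding $\ap{g\comp f}{\lp} = \ap{g}{\ap{f}{\lp}}$ via the three identities above makes all occurrences of $\gamma$ and $\delta$ cancel, and the two sides agree by the groupoid laws. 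The composite $f \comp g \sim \idfun{\Sna{1}}$ is handled symmetrically by induction on $\Sna{1}$, with family $\id{f(g(x))}{x}$, point data $\ap{f}{\gamma}\ct\beta$ at $\north$ and $\ap{f}{\delta}\ct\beta\ct\w$ at $\south$, and the $\e$- and $\w$-obligations discharged the same way. Prop.~\ref{thm_quasieq} then gives $\Sn{1} \simeq \Sna{1}$.

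The construction of $f$, $g$ and the final appeal to Prop.~\ref{thm_quasieq} are routine; the work is in the two homotopy verifications. The delicate point there is that the loop and path obligations of the inductions do not merely require \emph{some} filler but a filler with prescribed endpoints, namely the composites of $\beta$, $\gamma$, $\delta$ and their $\ap$-images chosen as the $0$-cell data of the homotopies --- so the $1$-cell computation squares of $f$ and $g$ are used essentially, to witness exactly that compatibility. Beyond this, everything is a bounded amount of transport-and-groupoid-law bookkeeping; alternatively one can isolate part of it by first establishing the propositional-computation version of Prop.~\ref{thm_snaind}.
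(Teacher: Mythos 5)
Your proposal is correct and follows exactly the route the paper intends: the paper merely asserts this proposition by pointing back to the maps in the proof sketch of Prop.~\ref{thm_circeq} ($\base \mapsto \north$, $\lp \mapsto \e \ct \opp{\w}$; $\north,\south \mapsto \base$, $\e \mapsto \lp$, $\w \mapsto \refl{\Sn{1}}{\base}$) and an appeal to Prop.~\ref{thm_quasieq}, and your account spells out the same construction with the propositional computation paths $\beta,\gamma,\delta$ and their $1$-cell squares handled correctly. The groupoid-law cancellations you describe for both composites do check out, so no gap.
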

At this point it is convenient to establish some terminology.

\subsection{Algebras}
Given a type $C : \U_i$ with a point $c:C$ and path $s : \id[C]{c}{c}$, we can pack the type together with all the operators into a single structure called an
$\Sn{1}$-algebra; we can similarly define an $\Sna{1}$-algebra:
\begin{definition}
We define the type of $\Sn{1}$-algebras on a universe $\U_i$ as \[ \circalg{\U_i} \defeq \sm{C:\U_i} \sm{c:C} (\id{c}{c}) \]
\end{definition}
\begin{definition}
We define the type of $\Sna{1}$-algebras on a universe $\U_i$ as \[ \circaalg{\U_i} \defeq \sm{C:\U_i} \sm{a,b:C} (\id{a}{b}) \times (\id{a}{b}) \]
\end{definition}
\begin{proposition}
We have maps \[\circalgtocircaalg : \circalg{\U_i} \to \circaalg{\U_i}\] \[\circaalgtocircalg : \circaalg{\U_i} \to \circalg{\U_i}\] which form a quasi-equivalence; thus $\circalg{\U_i} \simeq \circaalg{\U_i}$.
\end{proposition}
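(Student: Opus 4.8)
The plan is to give the two maps explicitly, observe that the underlying type $C$ (and, in one of the two round-trips, also the distinguished point) is carried along definitionally, and then reduce each homotopy $g\comp f\sim\idfun{}$ to the groupoid laws of Section~\ref{prelim} via the characterization of paths in $\Sigma$-types (Prop.~\ref{thm_pair_equiv}); a final appeal to Prop.~\ref{thm_quasieq} then delivers the equivalence.

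Concretely, set
\[
\circalgtocircaalg(C,c,s)\defeq(C,c,c,s,\refl{C}{c}),
\qquad
\circaalgtocircalg(C,a,b,p,q)\defeq(C,a,p\ct\opp{q}),
\]
mirroring the maps from the proof sketch of Prop.~\ref{thm_circeq} ($\base\mapsto\north$ and $\lp\mapsto\e\ct\opp{\w}$ one way; $\north,\south\mapsto\base$, $\e\mapsto\lp$, $\w\mapsto\refl{\Sn{1}}{\base}$ the other). For the composite $\circaalgtocircalg\comp\circalgtocircaalg$ we start from $(C,c,s)$ and land on $(C,c,s\ct\opp{\refl{C}{c}})$; since $\opp{\refl{C}{c}}\equiv\refl{C}{c}$ the carrier and its point are definitionally unchanged, so by Prop.~\ref{thm_pair_equiv} (applied twice, for the two layers of the nested $\Sigma$) it suffices to produce a path $s\ct\refl{C}{c}=s$, which is one of the unit laws; the transports appearing are all along reflexivity and hence identities.

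For the other composite $\circalgtocircaalg\comp\circaalgtocircalg$ we start from $(C,a,b,p,q)$ and obtain $(C,a,a,p\ct\opp{q},\refl{C}{a})$; here the second distinguished point has genuinely changed from $b$ to $a$, so the $\Sigma$-path we must build is not degenerate. Writing $\circaalg{\U_i}$ as the threefold nested sum $\sm{C:\U_i}\sm{a:C}\sm{b:C}\big((\id{a}{b})\times(\id{a}{b})\big)$ and applying Prop.~\ref{thm_pair_equiv} layer by layer, we choose reflexivity on $C$ for the carrier slot, $\refl{C}{a}$ for the $a$-slot, and $q:\id{a}{b}$ for the $b$-slot; it then remains to check that transporting $(p\ct\opp{q},\refl{C}{a})$ in the family $b'\mapsto(\id{a}{b'})\times(\id{a}{b'})$ along $q$ produces $(p,q)$. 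Since transport in $b'\mapsto(\id{a}{b'})$ along $q$ is post-composition with $q$ (a one-line path induction) and transport commutes with products, the transported pair is $\big((p\ct\opp{q})\ct q,\ \refl{C}{a}\ct q\big)$, which equals $(p,q)$ by associativity together with $\opp{q}\ct q=\refl{C}{b}$ and the unit laws.

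The only genuine obstacle I anticipate is bookkeeping: tracking which layer of the nested $\Sigma$-types each component of the constructed path belongs to, and threading the correct (mostly trivial) transports through each layer. Once the lemma that transport in an identity family is path composition is in hand, every remaining proof obligation is discharged directly by the groupoid laws recalled in Section~\ref{prelim}.
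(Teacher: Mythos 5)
Your proposal is correct and follows the paper's proof exactly: the paper defines the very same maps $(C,c,s)\mapsto(C,c,c,s,\refl{C}{c})$ and $(C,a,b,p,q)\mapsto(C,a,p\ct\opp{q})$ and leaves the round-trip verification implicit, which is precisely the routine $\Sigma$-path and groupoid-law bookkeeping you spell out (correctly). No gap.
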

\begin{proof}
Define the maps by 
\begin{align*}
(C,c,s) & \mapsto (C,c,c,s,\refl{C}{c}) \\
(C,a,b,p,q) & \mapsto (C,a,p \ct \opp{q})
\end{align*}
\end{proof}
For any such algebra, the satisfaction of the principle of dependent elimination into a universe $\U_j$ is now a property internal to the type theory:
\begin{notation}
Define a predicate on the type $\circalg{\U_i}$ by
\begin{align*}
\hascircind{\U_j}(C,c,s) \defeq \prd{E:C \to \U_j}\prd{e:E(c)}\prd{d:\id{\trans{E}{s}(e)}{e}} 
\sm{f:\prd{x}E(x)}\sm{\beta: \id{f(c)}{e}} (\id{\dap{f}{s} \ct \beta}{\ap{\trans{E}{s}}{\beta}} \ct d)
\end{align*}
\end{notation}
\begin{notation}
Define a predicate on the type $\circaalg{\U_i}$ by
\begin{align*}
\hascircaind{\U_j}(C,a,b,p,q) \defeq \prd{E:C \to \U_j} \prd{u:E(a)} \prd{v:E(b)} \prd{\mu:\id{\trans{E}{p}(u)}{v}} \prd{\nu:\id{\trans{E}{{q}}(u)}{v}} \\
\sm{f:\prd{x}E(x)} \sm{\gamma: \id{f(a)}{u}} \sm{\delta: \id{f(b)}{v}}
(\id{\dap{f}{p} \ct \delta}{\ap{\trans{E}{{p}}}{\gamma}} \ct \mu) \; \times
(\id{\dap{f}{q} \ct \delta}{\ap{\trans{E}{q}}{\gamma}} \ct \nu)
\end{align*}
\end{notation}
We can now show that the two induction principles are indeed equivalent:
\begin{proposition}
For any $\mathcal{X} : \circalg{\U_i}$ and $\mathcal{Y} : \circaalg{\U_i}$,
\begin{align*} 
& \hascircind{\U_j}(\mathcal{X}) \to \hascircaind{\U_j}(\circalgtocircaalg(\mathcal{X})) \\ 
& \hascircaind{\U_j}(\mathcal{Y}) \to \hascircind{\U_j}(\circaalgtocircalg(\mathcal{Y}))
\end{align*}
\end{proposition}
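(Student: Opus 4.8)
The plan is to establish both implications by direct construction: in each case we feed suitably chosen data into the induction principle we are given and then verify the coherence conditions demanded by the other predicate, using only the groupoid laws of Section~\ref{prelim}. Write $\mathcal{X} = (C,c,s)$ and $\mathcal{Y} = (C,a,b,p,q)$; by the definition of the two translation maps, $\circalgtocircaalg(\mathcal{X}) = (C,c,c,s,\refl{C}{c})$ and $\circaalgtocircalg(\mathcal{Y}) = (C,a,p \ct \opp{q})$.

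For the first implication, assume $\hascircind{\U_j}(\mathcal{X})$ and suppose given $E : C \to \U_j$, points $u,v : E(c)$, and paths $\mu : \id{\trans{E}{s}(u)}{v}$ and $\nu : \id{u}{v}$ (here we have silently used $\trans{E}{\refl{C}{c}} \equiv \idfun{E(c)}$ to simplify the type of $\nu$). Applying the $\Sn{1}$-induction hypothesis with the point $u$ and the path $\mu \ct \opp{\nu} : \id{\trans{E}{s}(u)}{u}$ produces $f : \prd{x:C}E(x)$, a path $\beta : \id{f(c)}{u}$, and a coherence $\theta : \id{\dap{f}{s} \ct \beta}{\ap{\trans{E}{s}}{\beta} \ct (\mu \ct \opp{\nu})}$. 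Set $\gamma \defeq \beta$ and $\delta \defeq \beta \ct \nu$. The second of the two coherences required of $\circalgtocircaalg(\mathcal{X})$ collapses, using $\dap{f}{\refl{C}{c}} \equiv \refl{E(c)}{f(c)}$, $\trans{E}{\refl{C}{c}} \equiv \idfun{E(c)}$ and $\ap{\idfun{E(c)}}{\gamma} = \gamma$, to $\id{\beta \ct \nu}{\beta \ct \nu}$; the first follows from $\theta$ by reassociating and cancelling $\opp{\nu} \ct \nu = \refl{E(c)}{v}$, giving $\dap{f}{s} \ct \delta = (\dap{f}{s} \ct \beta) \ct \nu = (\ap{\trans{E}{s}}{\beta} \ct (\mu \ct \opp{\nu})) \ct \nu = \ap{\trans{E}{s}}{\beta} \ct \mu$. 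Hence $\hascircaind{\U_j}(\circalgtocircaalg(\mathcal{X}))$.

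For the second implication, assume $\hascircaind{\U_j}(\mathcal{Y})$ and suppose given $E : C \to \U_j$, a point $e : E(a)$, and a path $d : \id{\trans{E}{p \ct \opp{q}}(e)}{e}$. Apply the $\Sna{1}$-induction hypothesis with $u \defeq e$, $v \defeq \trans{E}{p}(e)$, $\mu \defeq \refl{E(b)}{\trans{E}{p}(e)}$, and $\nu : \id{\trans{E}{q}(e)}{\trans{E}{p}(e)}$ the path extracted from $d$ by applying $\trans{E}{q}$ and simplifying via the transport laws $\trans{E}{p \ct \opp{q}} = \trans{E}{\opp{q}} \comp \trans{E}{p}$ and $\trans{E}{q} \comp \trans{E}{\opp{q}} = \idfun{E(b)}$ (which in turn come from the laws relating transport to concatenation and inversion together with $\opp{q} \ct q = \refl{C}{a}$). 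This yields $f : \prd{x:C}E(x)$, paths $\gamma : \id{f(a)}{e}$ and $\delta : \id{f(b)}{\trans{E}{p}(e)}$, and coherences $\theta_1 : \id{\dap{f}{p} \ct \delta}{\ap{\trans{E}{p}}{\gamma}}$ and $\theta_2 : \id{\dap{f}{q} \ct \delta}{\ap{\trans{E}{q}}{\gamma} \ct \nu}$. Put $\beta \defeq \gamma$; it remains to build the coherence $\id{\dap{f}{p \ct \opp{q}} \ct \beta}{\ap{\trans{E}{p \ct \opp{q}}}{\beta} \ct d}$ required of $\circaalgtocircalg(\mathcal{Y})$. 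For this we expand $\dap{f}{p \ct \opp{q}}$ in terms of $\dap{f}{p}$ and $\dap{f}{q}$ using the standard identities --- provable by path induction --- for the dependent action of $f$ on a concatenation and on an inverse, carrying along the transport-functoriality corrections; rewriting the two resulting occurrences of $\dap{f}{\cdot}$ by $\theta_1$ and $\theta_2$ and cancelling via the groupoid laws produces $\ap{\trans{E}{p \ct \opp{q}}}{\beta} \ct d$, the path $d$ reappearing precisely because $\nu$ was defined from it. Hence $\hascircind{\U_j}(\circaalgtocircalg(\mathcal{Y}))$.

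The purely mechanical parts are the manipulations with reflexivity, associativity, and inverse cancellation in the two coherence checks. The genuine obstacle is the coherence computation in the second implication: since $\circaalgtocircalg$ compresses the two paths $p, q$ into the single composite $p \ct \opp{q}$, one must reconcile $\dap{f}{p \ct \opp{q}}$ with the independently supplied $\dap{f}{p}$ and $\dap{f}{q}$, which forces one to unfold and keep track of several nested layers of transport- and $\dap$-coherence; this is where essentially all the difficulty of the proposition resides.
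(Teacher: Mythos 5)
Your constructions are correct. The paper states this proposition without proof, so there is no argument of the author's to compare against; your route --- feeding $(u,\;\mu\ct\opp{\nu})$ into the given $\Sn{1}$-induction for the first implication, and $(e,\;\trans{E}{p}(e),\;\reflsym,\;\nu)$ with $\nu$ extracted from $d$ into the given $\Sna{1}$-induction for the second --- is a sound way to fill that gap, and the first coherence checks go through exactly as you say. One caution on the second implication: the final verification is not achieved by groupoid laws alone. Besides the path-induction identities expressing $\dap{f}{p\ct\opp{q}}$ in terms of $\dap{f}{p}$ and $\dap{f}{q}$, you need the \emph{naturality} of the canonical homotopies $\trans{E}{(p\ct\opp{q})} \sim \trans{E}{\opp{q}}\comp\trans{E}{p}$ and $\trans{E}{\opp{q}}\comp\trans{E}{q}\sim\idfun{E(a)}$ along the paths $\gamma$ and $\opp{\tau} \ct d$ (where $\tau$ is the transport-composition path at $e$), together with the triangle-style coherence identifying $\ap{\trans{E}{\opp{q}}}{\iota(y)}$ with the cancellation path $\trans{E}{\opp{q}}(\trans{E}{q}(\trans{E}{\opp{q}}(y))) = \trans{E}{\opp{q}}(y)$, where $\iota$ witnesses $\trans{E}{q}\comp\trans{E}{\opp{q}}\sim\idfun{E(b)}$; all of these are provable by path induction and are presumably what you intend by ``transport-functoriality corrections,'' but they should be named explicitly, since with them the composite does reduce to $\ap{\trans{E}{(p\ct\opp{q})}}{\beta}\ct d$ and without them the ``cancelling'' step is where a referee would push back.
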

\begin{corollary}
The type $\Sn{1}$ satisfies the induction and propositional computation laws for $\Sna{1}$, with $\base$, $\base$, $\lp$, $\refl{\Sn{1}}{\base}$ acting as the constructors $\north$, $\south$, $\e$, $\w$ respectively.
\end{corollary}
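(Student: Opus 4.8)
The plan is to obtain the corollary as a direct instance of the first implication in the preceding proposition. First I would note that $\Sn{1}$ together with its constructors forms the $\Sn{1}$-algebra $(\Sn{1},\base,\lp) : \circalg{\U_i}$, and that the propositional induction and computation laws for $\Sn{1}$ recalled in Section~\ref{hits} amount precisely to a proof of $\hascircind{\U_j}(\Sn{1},\base,\lp)$: given $E$, $e$, $d$, the inductor $\circind{E}{e}{d}$ supplies the component $f : \prd{x:\Sn{1}}E(x)$, the path $\beta_{E,e,d}$ supplies the $0$-cell witness $\id{f(\base)}{e}$, and the displayed commuting square for $\lp$ is exactly the remaining component $\id{\dap{f}{\lp}\ct\beta}{\ap{\trans{E}{\lp}}{\beta}\ct d}$.

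Next I would instantiate the first part of the proposition at $\mathcal{X}\defeq(\Sn{1},\base,\lp)$, which yields $\hascircaind{\U_j}(\circalgtocircaalg(\Sn{1},\base,\lp))$. Since $\circalgtocircaalg$ sends $(C,c,s)$ to $(C,c,c,s,\refl{C}{c})$, this is the statement $\hascircaind{\U_j}(\Sn{1},\base,\base,\lp,\refl{\Sn{1}}{\base})$.

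Finally I would unfold the definition of $\hascircaind{\U_j}$ at this algebra and observe that it is, term for term, the assertion that $\Sn{1}$ satisfies the $\Sna{1}$ induction principle together with its propositional computation laws, where $\base$, $\base$, $\lp$, $\refl{\Sn{1}}{\base}$ play the roles of $\north$, $\south$, $\e$, $\w$ respectively: the first projection gives the $\Sna{1}$-inductor $\circaind{E}{u}{v}{\mu}{\nu}$, the components $\gamma$ and $\delta$ are the two $0$-cell path witnesses, and the last two components are the commuting squares for $\e$ and $\w$. There is no real obstacle here; the only thing to watch is the bookkeeping, namely checking that substituting $\lp$ for the first path and $\refl{\Sn{1}}{\base}$ for the second collapses the corresponding transports in $\hascircaind{\U_j}$ so that the unfolded statement lines up with the $\Sna{1}$ rules exactly as displayed earlier.
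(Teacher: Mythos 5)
Your proposal is correct and follows exactly the route the paper intends: the corollary is stated as an immediate instance of the preceding proposition, obtained by observing that the propositional induction and computation rules for $\Sn{1}$ give $\hascircind{\U_j}(\Sn{1},\base,\lp)$ and then applying the first implication, since $\circalgtocircaalg$ sends $(\Sn{1},\base,\lp)$ to $(\Sn{1},\base,\base,\lp,\refl{\Sn{1}}{\base})$. The unfolding and bookkeeping you describe (including the definitional collapse of the transport along $\refl{\Sn{1}}{\base}$) is exactly what is needed, so nothing is missing.
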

\begin{corollary}
The type $\Sna{1}$ satisfies the induction and propositional computation laws for $\Sn{1}$, with $\north$, $\e \ct \opp{\w}$ acting as the constructors $\base$, $\lp$ respectively.
\end{corollary}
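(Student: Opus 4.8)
The plan is to derive the corollary as an immediate specialization of the preceding proposition, applied to the \emph{canonical} $\Sna{1}$-algebra carried by $\Sna{1}$ itself. Recall that $\Sna{1}$ together with its four constructors determines $\mathcal{Y} \defeq (\Sna{1}, \north, \south, \e, \w) : \circaalg{\U_0}$, and that, in the setting of higher inductive types with propositional computation laws, the dependent elimination principle for $\Sna{1}$ --- the inductor $\circaindsym$, the path-witnesses $\gamma$ and $\delta$ at $\north$ and $\south$, and the two commuting squares recording the behaviour of $\circaindsym$ on $\e$ and $\w$ --- is, after unfolding, exactly a proof of $\hascircaind{\U_j}(\mathcal{Y})$ for every universe $\U_j$.

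The steps are then as follows. First I would instantiate the second implication of the preceding proposition at $\mathcal{Y} = (\Sna{1}, \north, \south, \e, \w)$, obtaining $\hascircind{\U_j}\bigl(\circaalgtocircalg(\Sna{1}, \north, \south, \e, \w)\bigr)$. Next I would compute the underlying algebra: by the formula $(C,a,b,p,q) \mapsto (C, a, p \ct \opp{q})$ that defines $\circaalgtocircalg$ in the quasi-equivalence of algebras, $\circaalgtocircalg(\Sna{1}, \north, \south, \e, \w) \equiv (\Sna{1}, \north, \e \ct \opp{\w})$, so we have $\hascircind{\U_j}(\Sna{1}, \north, \e \ct \opp{\w})$. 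Finally, unfolding the definition of $\hascircind{\U_j}$ shows that this term is precisely the data asserting that $\Sna{1}$, with $\north$ playing the role of $\base$ and $\e \ct \opp{\w}$ that of $\lp$, satisfies the induction principle for $\Sn{1}$ together with the propositional $0$-cell computation law (the path-witness $\beta$) and the $1$-cell computation law (the associated commuting square); since $\U_j$ was arbitrary, this is the claim.

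I do not expect a genuine obstacle inside this corollary: all the real work --- transporting the induction principle across the maps $\circalgtocircaalg$ and $\circaalgtocircalg$, and checking that these maps form a quasi-equivalence --- has already been done in the two preceding propositions, and here we merely read off a special case. The only point requiring a moment's attention is that the statement names these exact constructors, so we need $\circaalgtocircalg$ applied to the canonical algebra to be \emph{definitionally} --- not merely propositionally --- equal to $(\Sna{1}, \north, \e \ct \opp{\w})$; this is immediate from the explicit formula for $\circaalgtocircalg$. (The companion corollary for $\Sn{1}$ is obtained symmetrically, by feeding the canonical $\Sn{1}$-algebra $(\Sn{1}, \base, \lp)$ to the first implication of the preceding proposition and noting $\circalgtocircaalg(\Sn{1}, \base, \lp) \equiv (\Sn{1}, \base, \base, \lp, \refl{\Sn{1}}{\base})$.)
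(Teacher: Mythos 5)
Your proposal is correct and matches the paper's intent exactly: the corollary is meant as an immediate instance of the preceding proposition, obtained by feeding the canonical $\Sna{1}$-algebra $(\Sna{1},\north,\south,\e,\w)$ (whose own propositional induction principle is literally an inhabitant of $\hascircaind{\U_j}$) into the second implication and observing that $\circaalgtocircalg$ sends it, definitionally, to $(\Sna{1},\north,\e\ct\opp{\w})$. The paper offers no further argument, so there is nothing to compare beyond this.
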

Finally, we point out that if a given algebra admits the principle of dependent elimination into a universe $\U_j$, it does so in a unique way:
\begin{proposition}
The types $\hascircind{\U_j}(\mathcal{X})$ and $\hascircaind{\U_j}(\mathcal{Y})$ are mere propositions for any $\mathcal{X} : \circalg{\U_i}$ and $\mathcal{Y} : \circaalg{\U_i}$.
\end{proposition}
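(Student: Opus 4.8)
The plan is to prove both claims by a single argument, which I describe for $\hascircind{\U_j}(\mathcal{X})$; the case of $\hascircaind{\U_j}(\mathcal{Y})$ is entirely parallel. First I would invoke the standard fact that a type $P$ is a mere proposition as soon as $P \to \isprop{P}$, so that I may assume given an inductor $i_0 : \hascircind{\U_j}(\mathcal{X})$ for $\mathcal{X} = (C,c,s)$ and need only show $\hascircind{\U_j}(\mathcal{X})$ is a mere proposition. Since $\hascircind{\U_j}(\mathcal{X})$ is a $\Pi$-type over $E$, $e$, $d$, and a $\Pi$-type is a mere proposition as soon as all of its fibers are, it suffices to fix a motive $E : C \to \U_j$, a point $e : E(c)$ and a path $d : \id{\trans{E}{s}(e)}{e}$, and to show that
\[ T \;\defeq\; \sm{f:\prd{x:C}E(x)}\;\sm{\beta:\id{f(c)}{e}}\;\bigl(\id{\dap{f}{s}\ct\beta}{\ap{\trans{E}{s}}{\beta}\ct d}\bigr) \]
is a mere proposition. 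All the motives appearing below will again be families $C \to \U_j$, so $i_0$ applies to them and no enlargement of universes is needed — this is what makes the statement meaningful internally.

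Next I would take two elements $(f,\beta,\theta)$ and $(f',\beta',\theta')$ of $T$ and, by iterating the characterization of paths in $\Sigma$-types (Prop.~\ref{thm_pair_equiv}), reduce the goal $\id{(f,\beta,\theta)}{(f',\beta',\theta')}$ to producing three pieces of data: (i) a path $p : \id{f}{f'}$; (ii) a path witnessing that $\beta'$ is the transport of $\beta$ along $p$ in the family $g \mapsto \id{g(c)}{e}$; and (iii) a $3$-cell in $E(c)$ relating $\theta$ with the transport of $\theta'$ along the pair $(p,\text{the path from (ii)})$. For (i) I would use function extensionality (the equivalence $\happly{f}{f'}$) and instead construct a homotopy $H : f \sim f'$, i.e.\ a dependent function of type $\prd{x:C}\id[E(x)]{f(x)}{f'(x)}$, by applying $i_0$ to the motive $E'(x) \defeq \id[E(x)]{f(x)}{f'(x)}$. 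This requires a base point $e' : \id{f(c)}{f'(c)}$, for which I take $e' \defeq \beta \ct \opp{\beta'}$, and a path $d' : \id{\trans{E'}{s}(e')}{e'}$, which I obtain from the groupoid laws together with the hypotheses $\theta$ and $\theta'$, after rewriting $\trans{E'}{s}$ via the standard formula for transport in a family of identity types (which expresses it in terms of $\dap{f}{s}$, $\dap{f'}{s}$, and $\ap{\trans{E}{s}}{-}$). This computation is the technical core of step (i).

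The key point is that $i_0$ returns not just $H$ but its accompanying computation data for the motive $E'$: a path $\beta^{E'} : \id{H(c)}{e'}$ and a coherence of $H$ over $s$. I would feed $\beta^{E'}$ into step (ii): transporting $\beta$ along $p$ in $g \mapsto \id{g(c)}{e}$ computes, up to the groupoid laws and the funext computation rule, to $\opp{H(c)} \ct \beta$, so the goal of (ii) reduces to $\id{H(c)}{\beta \ct \opp{\beta'}}$ — which is exactly $\beta^{E'}$. Likewise, the coherence of $H$ over $s$ is, relative to $E(c)$, a $3$-cell, and after the usual rearrangement it supplies the $3$-cell demanded by step (iii). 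Assembling (i)–(iii) through Prop.~\ref{thm_pair_equiv} yields the desired path $\id{(f,\beta,\theta)}{(f',\beta',\theta')}$, so $T$, and hence $\hascircind{\U_j}(\mathcal{X})$, is a mere proposition.

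I expect the main obstacle to be the path-algebra bookkeeping concentrated in steps (i) and (iii): correctly deriving $d'$ from $\theta$, $\theta'$ and the transport-in-identity-families formula, and then matching the coherence of $H$ over $s$ returned by $i_0$ against the precise $3$-cell demanded by the $\Sigma$-path characterization — each step requires several applications of the coherence laws for composition, inversion, $\mathsf{ap}$, and transport, all of which are available from Section~\ref{prelim}. The argument for $\hascircaind{\U_j}(\mathcal{Y})$ is completely parallel: one fixes a motive together with the corresponding point-and-path data, performs the same reductions, and runs the inductor on $x \mapsto \id{f(x)}{f'(x)}$ with base points at the two $0$-cell constructors built from the associated $\gamma$'s and $\delta$'s, handling the two $1$-cell constructors simultaneously.
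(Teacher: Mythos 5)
Your proposal is correct, but it reaches the result by a genuinely different route than the paper. The paper treats this proposition as a corollary of its general machinery: by Props.~\ref{thm_circ_enc} and~\ref{thm_circa_enc} the circle algebras correspond to $\W$-suspension algebras with $\hascircind{\U_j}(\mathcal{X}) \simeq \haswsuspind{\U_j}(\circalgtowsuspalg(\mathcal{X}))$ (and similarly for $\Sna{1}$), and Thm.~\ref{thm_main} asserts that $\haswsuspind{\U_j}$ is a mere proposition, so propositionality transfers across the equivalence. You instead prove propositionality directly: assuming an inductor, you compare two packages $(f,\beta,\theta)$ and $(f',\beta',\theta')$ for a fixed motive by running the inductor on $E'(x)\defeq \id{f(x)}{f'(x)}$ with base point $\beta\ct\opp{\beta'}$ and a loop datum $d'$ extracted from $\theta,\theta'$ via the transport-in-identity-families formula, then feeding the propositional computation data back through the $\Sigma$- and $\Pi$-path characterizations. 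This is precisely the specialization to $\Sn{1}$ of what the paper does for general $\W$-suspensions in its appendix --- induction implies the dependent uniqueness condition (\ref{pf_ind_imp_coh}), using the same motive and the base point $\gamma(c)\ct\opp{\delta(c)}$, combined with the identification of algebra 2-cells with paths between homomorphisms (\ref{pf_two_cell_path}) --- so your ``usual rearrangement'' in steps (i) and (iii) conceals essentially the same multi-page diagram algebra, just without the extra $\ap{\trans{E}{s}}{-}$ bookkeeping being avoidable. What your route buys is self-containedness and economy of hypotheses: no encoding of the circles as $\W$-suspensions, no transfer of the $\mathsf{has}$-$\mathsf{ind}$ predicates along the algebra equivalences, and no universe side condition (the $k\geq j$ hypothesis attached to Thm.~\ref{thm_main} plays no role, which your argument makes visible since the identity-type motive $E'$ stays in $\U_j$). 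What the paper's route buys is that a single general computation settles propositionality for every $\W$-suspension at once, with both circle statements falling out of the algebra-level equivalences instead of being re-proved by hand; the price is the encoding step and carrying the heavier coherence apparatus of Not.~\ref{nat_coh_dep}.
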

This follows from Props.~\ref{thm_circ_enc},~\ref{thm_circa_enc}, and Thm.~\ref{thm_main} in Section~\ref{algebras}.

\subsection{Propositional truncation}
Another example of a higher inductive type is the \emph{propositional truncation} $\trunc{A}:\U_i$ of a type $A:\U_i$, investigated in \cite{awodey_bauer} in an extensional setting under the name of bracket types. Intuitively, $\trunc{A}$ represents the ``squashing" of $A$ which makes all the elements in $A$ equal. The need for propositional truncation arises when we wish to hide information: we want to indicate that $A$ is inhabited without having to give the actual witness $a : A$. For instance, let $P : A \to \U_j$ be a family of a mere propositions. Having a $b : \sm{x:A} P(x)$ is very different from having a $b : \trunc{\sm{x:A} P(x)}$; in the former case, we can directly construct a point in $A$ for which $P$ holds, namely $\fst{b}$. In the latter case, we only know $P$ must hold for \emph{some} point in $A$ but we do not have a generic way of accessing it.

Specifically, we define $\trunc{A}$ as the higher inductive type generated by a constructor $\inj{\cdot}$, which projects a given element of $A$ down to $\trunc{A}$, and a truncation constructor, which states that $\trunc{A}$ is indeed a mere proposition\footnote{Hence the name \emph{propositional} truncation; see Chpt.~6 of \cite{hott} for other kinds of truncation.}:
\begin{align*}
\inj{\cdot} & : A \to \trunc{A} \\
\squash &: \prd{x,y:\trunc{A}} (\id[\trunc{A}]{x}{y})
\end{align*}

As usual, the recursion principle states that given a structure of the same form, we have a function out of $\trunc{A}$ which preserves the constructors:
\begin{mathpar}
\inferrule{C : \U_j \\ c : A \to C \\ s : \prd{x,y:C} (\id[C]{x}{y})}{\truncrec{C}{c}{s} : \trunc{A} \to C}
\end{mathpar}
where for each $a : A$ and $k,l : \trunc{A}$ we have 
\begin{align*}
\truncrec{C}{c}{s}(\inj{a}) & \equiv c(a) : C \\
\id{\ap{\truncrec{C}{c}{s}}{\squash(k,l)} &}{s(\truncrec{C}{c}{s}(k), \truncrec{C}{c}{s}(l))}
\end{align*}
We note that we are only able to eliminate into types which are themselves mere propositions. This together with Prop.~\ref{thm_contr_path} implies that the second computation law always holds.

To state the induction principle, we need to suitably generalize the last hypothesis. As before, we note that once the desired map $f : \prd{x:\trunc{A}} E(x)$ is constructed, it will give us a path from $\trans{E}{\squash(k,l)}(f(k))$ to $f(l)$ in $E(l)$ for any $k,l : \trunc{A}$. Hence, $E$ should already come equipped with such a family of paths - except, of course, we have no way of referring to $f(k)$ and $f(l)$ before $f$ is constructed. Thus, we simply require that such a path exists for \emph{any} points $u : E(k)$ and $v : E(l)$:
\begin{mathpar}
\inferrule{E : \trunc{A} \to \U_j \\ e : \prd{a:A} E(\inj{a}) \\ d : \prd{x,y:\trunc{A}}\prd{u:E(x)}\prd{v:E(y)}(\id[E(y)]{\trans{E}{\squash(x,y)}(u)}{v})}{\truncind{E}{e}{d} : \prd{x:\trunc{A}} E(x)}
\end{mathpar}
For each $a : A$ and $k,l : \trunc{A}$, we have the computation rules
\begin{align*}
\truncind{E}{e}{d}(\inj{a}) & \equiv e(a) : E(\inj{a}) \\
\id{\dap{\truncind{E}{e}{d}}{\squash(k,l)} &}{d(k,l,\truncind{E}{e}{d}(k), \truncind{E}{e}{d}(l))}
\end{align*}

The second rule again turns out to always hold, as we will see shortly; however, we first note that this definition of $\trunc{A}$ has its share of problems. For instance, as the type $\nat$ of natural numbers is inhabited, it follows that $\trunc{\nat} = \one$. It is not obvious, however, how to turn $\one$ itself into a truncation of $\nat$, since the first computation law ought to hold \emph{definitionally}. More disturbing yet is the observation by N. Kraus in \cite{trunc_inverse} that there exists a map $f$ such that $f \circ \inj{\cdot} \equiv \idfun{\nat}$; this is another surprising side effect of definitional computation law for $\inj{\cdot}$.

Both issues can be avoided by using propositional computation rules instead, i.e., we have
\begin{align*}
\prd{a:A}(\id[C]{\truncrec{C}{c}{s}(\inj{a})&}{c(a)}) \\
\prd{a:A}(\id[E(\inj{a})]{\truncind{E}{e}{d}(\inj{a}) &}{e(a)})
\end{align*}

\subsection{Algebra homomorphisms}
We can again pack all the operators into a single structure:
\begin{definition}
Define the type of $\trunc{A}$-algebras on a universe $\U_j$ as \[\truncalg{\U_j} \defeq \sm{C:\U_j} (A \to C) \times \isprop{A}\]
\end{definition}
We can also talk about homomorphisms between two $\trunc{A}$-algebras, which are mappings that preserve all operators:
\begin{definition}
For  $\mathcal{X} : \truncalg{\U_j}$ and $\mathcal{Y} : \truncalg{\U_k}$, define the type of homomorphisms from $\mathcal{X}$ to $\mathcal{Y}$ by
\begin{align*}
& \trunchom \; (C,c,s) \; (D,d,r) \defeq \sm{f:C \to D} (\prd{a:A} (f(c \; a) = d \; a)) \times \prd{k,l : C} (\ap{f}{s \; k \; l} = r(f \; k,f \; l))
\end{align*}
\end{definition}
The recursion principle into $\U_k$ can thus be expressed as:
\begin{notation}
For $\mathcal{X} : \truncalg{\U_j}$, define \[\hastruncrec{\U_k}(\mathcal{X}) \defeq \prd{\mathcal{Y} : \truncalg{\U_k}} \trunchom \; \mathcal{X} \; \mathcal{Y}\]
\end{notation}
We also have a dependent version of these concepts:
\begin{definition}
Define the type of fibered $\trunc{A}$-algebras on a universe $\U_k$ over $\mathcal{X} : \truncalg{\U_j}$ by
\begin{align*}
\truncfibalg{\U_k} \; (C,c,s) \defeq \sm{E:C \to \U_k} (\prd{x:A} E(c \; x)) \times \prd{x,y:C}\prd{u:E(x)}\prd{v:E(y)}(\id{\trans{E}{s(x,y)}(u)}{v})
\end{align*}
\end{definition}
\begin{definition}
For  $\mathcal{X} : \truncalg{\U_j}$ and $\mathcal{Y} : \truncfibalg{\U_k} \; \mathcal{X}$, define the type of fibered homomorphisms from $\mathcal{X}$ to $\mathcal{Y}$ by
\begin{align*}
& \truncfibhom \; (C,c,s) \; (E,e,d) \defeq \sm{f:\prd{x:C} E(x)} \\ & \;\;\; (\prd{a:A} (f(c \; a) = e \; a)) \times \prd{k,l:C} (\dap{f}{s \; k \; l} = d(f \; k,f \; l))
\end{align*}
\end{definition}
The induction principle into $\U_k$ can thus be expressed as:
\begin{notation}
For $\mathcal{X} : \truncalg{\U_j}$, define \[\hastruncind{\U_k}(\mathcal{X}) \defeq \prd{(\mathcal{Y} : \truncfibalg{\U_k} \; \mathcal{X})} \truncfibhom \; \mathcal{X} \; \mathcal{Y}\]
\end{notation}
We now observe that the complicated last expression in the definition of fibered algebras can be replaced by saying that $E$ is a family of mere propositions:
\begin{proposition}\label{thm_fibalg}
Define the type of fibered $\trunc{A}$-algebras on a universe $\U_k$ over $\mathcal{X} : \truncalg{\U_j}$ alternatively by
\begin{align*}
\truncfibalg{\U_k}' \; (C,c,s) \defeq \sm{E:C \to \U_k} (\prd{x:A} E(c \; x)) \times \prd{x:C} \isprop{E(x)}
\end{align*}
Then for any $\mathcal{X}$, $\truncfibalg{\U_k} \; \mathcal{X} \simeq \truncfibalg{\U_k}' \; \mathcal{X}$.
\end{proposition}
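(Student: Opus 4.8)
The plan is to exhibit a fiberwise equivalence. Since both $\truncfibalg{\U_k}$ and $\truncfibalg{\U_k}'$ are $\Sigma$-types with the same first component $E : C \to \U_k$ and the same second component $\prd{x:A} E(c\;x)$, it suffices to show that the third components are equivalent as types depending on $E$. That is, for a fixed $E : C \to \U_k$, I would establish the equivalence
\[
\Big(\prd{x,y:C}\prd{u:E(x)}\prd{v:E(y)}(\id{\trans{E}{s(x,y)}(u)}{v})\Big) \;\simeq\; \Big(\prd{x:C}\isprop{E(x)}\Big),
\]
and then lift this to an equivalence of the two $\Sigma$-types using the fact (following from Props.~\ref{thm_pair_equiv} and~\ref{thm_quasieq}) that an equivalence on one component of a $\Sigma$-type, leaving the others fixed, induces an equivalence of the total types.

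For the core equivalence I would use Prop.~\ref{thm_quasieq} and construct maps in both directions. From right to left: given a proof that $E$ is a family of mere propositions, and given $x,y:C$, $u:E(x)$, $v:E(y)$, the element $\trans{E}{s(x,y)}(u)$ and $v$ both lie in $E(y)$, which is a mere proposition, so they are equal; this produces the required path. From left to right: given the family of transport-paths, fix $x:C$ and two points $u,v:E(x)$; I would instantiate the hypothesis at $x,x,u,v$ to get a path $\id{\trans{E}{s(x,x)}(u)}{v}$, so it remains to identify $\trans{E}{s(x,x)}(u)$ with $u$. This last step is where the argument has content: $s(x,x) : \id[C]{x}{x}$ need not be $\refl{C}{x}$, so transport along it need not be the identity. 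However, since $E(x)$ will turn out to be a mere proposition (or, more directly, since we may use the hypothesis again to compare $\trans{E}{s(x,x)}(u)$ with any chosen point of $E(x)$), we can bridge this gap: applying the hypothesis at $x,x$ with both arguments equal to $u$ gives $\id{\trans{E}{s(x,x)}(u)}{u}$, and concatenating yields $\id[E(x)]{u}{v}$, as needed.

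Finally, since both sides are mere propositions — the codomain $\isprop{E(x)}$ is a mere proposition by Cor.~\ref{thm_contr_prop}, hence so is the $\Pi$-type over it, and the left-hand $\Pi$-type of $\id$-types is likewise a mere proposition once one knows $E(x)$ is propositional — it actually suffices by the remark preceding Cor.~\ref{thm_contr_char} to verify \emph{logical} equivalence rather than full equivalence, which streamlines the verification: no round-trip homotopies need to be checked. The main obstacle is the observation already flagged above, namely that $\trans{E}{s(x,x)}$ is not definitionally the identity; the resolution is to feed the hypothesis back into itself (taking $u=v$ in the left-to-right direction) rather than attempting to simplify the transport directly. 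Once the fiberwise logical equivalence is in hand, Props.~\ref{thm_pair_equiv} and~\ref{thm_quasieq} assemble it into the claimed equivalence $\truncfibalg{\U_k} \; \mathcal{X} \simeq \truncfibalg{\U_k}' \; \mathcal{X}$.
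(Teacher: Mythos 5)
Your proposal is correct, and since the paper states Prop.~\ref{thm_fibalg} without giving a proof, your argument -- a fiberwise logical equivalence between the transport-path family $\prd{x,y:C}\prd{u:E(x)}\prd{v:E(y)}(\id{\trans{E}{s(x,y)}(u)}{v})$ and $\prd{x:C}\isprop{E(x)}$, upgraded to an equivalence because both sides are mere propositions and then lifted to the $\Sigma$-types -- is precisely the routine argument the paper leaves implicit, including the key device of using $d(x,x,u,u)$ to cancel the non-trivial transport along $s(x,x)$. The only step worth spelling out is that the left-hand side is a mere proposition because it is one as soon as it is inhabited (an inhabitant makes each $E(x)$ a proposition, so by Prop.~\ref{thm_contr_path} each identity type in the codomain is contractible and any two inhabitants agree by function extensionality), which is exactly what your clause ``once one knows $E(x)$ is propositional'' amounts to.
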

By Props.~\ref{thm_contr_path} and~\ref{thm_fibalg}, the induction and recursion principles now take a particularly simple form, as implied by:
\begin{proposition}\label{thm_truncequiv}
Given any algebras $\mathcal{X} : \truncalg{\U_j}$, $\mathcal{Y} : \truncalg{\U_k}$, $\mathcal{Z} : \truncfibalg{\U_k} \; \mathcal{X}$,
we have 
\begin{align*}
\trunchom \; \mathcal{X} \; \mathcal{Y} & \simeq \fst{\mathcal{X}} \to \fst{\mathcal{Y}} \\
\truncfibhom \; \mathcal{X} \; \mathcal{Z} & \simeq \prd{x:\fst{\mathcal{X}}} \fst{\mathcal{Z}}(x)
\end{align*}
\end{proposition}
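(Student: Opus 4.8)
In both cases the plan is to show that the first projection out of the $\Sigma$-type defining the relevant homomorphisms has contractible fibres, hence is an equivalence onto its base — which is exactly $\fst{\mathcal{X}}\to\fst{\mathcal{Y}}$, resp.\ $\prd{x:\fst{\mathcal{X}}}\fst{\mathcal{Z}}(x)$. Two standard facts will be used that are not recorded explicitly in the excerpt: first, a dependent product $\prd{x:A}Q(x)$ of contractible types is again contractible (take $\lambda x.\fst{h\;x}$ as centre, where $h:\prd{x:A}\iscontr{Q(x)}$, and contract any other element pointwise via $\snd{h\;x}$, using function extensionality; the binary product case is analogous); second, if $P(a)$ is contractible for every $a:A$ then the first projection $\fstsym:\sm{a:A}P(a)\to A$ is an equivalence. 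Given these, it suffices in each case to prove that the ``coherence'' component of the homomorphism type is contractible for every underlying (dependent) function.

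For the first equivalence, write $\mathcal{X}=(C,c,s)$ and $\mathcal{Y}=(D,d,r)$, so that $s$ witnesses $\isprop{C}$ and $r$ witnesses $\isprop{D}$, and $\trunchom\;\mathcal{X}\;\mathcal{Y}=\sm{f:C\to D}P(f)$ with $P(f)\defeq\bigl(\prd{a:A}(f(c\;a)=d\;a)\bigr)\times\bigl(\prd{k,l:C}(\ap{f}{s\;k\;l}=r(f\;k,f\;l))\bigr)$. I would show $P(f)$ is contractible for each $f$: for the left factor, each $\id[D]{f(c\;a)}{d\;a}$ is contractible by Prop.~\ref{thm_contr_path} since $D$ is a mere proposition, so the $\Pi$ over $a:A$ is contractible; for the right factor, each path $\ap{f}{s\;k\;l}=r(f\;k,f\;l)$ lies in $\id[D]{f\;k}{f\;l}$, which is contractible by Prop.~\ref{thm_contr_path} (as $D$ is a mere proposition), hence a mere proposition, hence has contractible path types by Prop.~\ref{thm_contr_path} once more, so the $\Pi$ over $k,l:C$ is contractible. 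Then $P(f)$ is a product of contractible types and hence contractible, which gives $\trunchom\;\mathcal{X}\;\mathcal{Y}\simeq(C\to D)$.

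For the second equivalence, write $\mathcal{X}=(C,c,s)$ and $\mathcal{Z}=(E,e,d)$, where $d$ is the transport component. First I would invoke Prop.~\ref{thm_fibalg} to conclude that $E$ is a family of mere propositions, $\prd{x:C}\isprop{E(x)}$ (this is in any case immediate: given $u,v:E(x)$, contractibility of $\id[C]{x}{x}$ forces $s\;x\;x=\refl{C}{x}$, whence $d$ supplies a path $\id{\trans{E}{s\;x\;x}(u)}{v}$, i.e.\ $\id{u}{v}$). Then $\truncfibhom\;\mathcal{X}\;\mathcal{Z}=\sm{f:\prd{x:C}E(x)}Q(f)$ with $Q(f)\defeq\bigl(\prd{a:A}(f(c\;a)=e\;a)\bigr)\times\bigl(\prd{k,l:C}(\dap{f}{s\;k\;l}=d(f\;k,f\;l))\bigr)$, and I would again show $Q(f)$ is contractible: the left factor is a $\Pi$ of the types $\id[E(c\;a)]{f(c\;a)}{e\;a}$, each contractible since $E(c\;a)$ is a mere proposition (Prop.~\ref{thm_contr_path}); the right factor is a $\Pi$ of the types $\dap{f}{s\;k\;l}=d(f\;k,f\;l)$, each a path in $\id[E(l)]{\trans{E}{s\;k\;l}(f\;k)}{f\;l}$, an identity type of the mere proposition $E(l)$ and therefore contractible, hence again a mere proposition with contractible path types. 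So $Q(f)$ is contractible and $\truncfibhom\;\mathcal{X}\;\mathcal{Z}\simeq\prd{x:C}E(x)$.

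The argument is largely routine once the set-up is in place; the only points requiring a little care are packaging the two generic lemmas (a total space with contractible fibres is equivalent to its base, and a dependent/binary product of contractibles is contractible) and noticing that the transport data carried by a fibered $\trunc{A}$-algebra already makes its fibres mere propositions (Prop.~\ref{thm_fibalg}), since this is precisely the observation that collapses the otherwise unwieldy coherence conditions appearing in $\trunchom$ and $\truncfibhom$.
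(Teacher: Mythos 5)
Your proposal is correct and is essentially the argument the paper intends: the paper states this proposition without a detailed proof, indicating only that it follows from Props.~\ref{thm_contr_path} and~\ref{thm_fibalg}, i.e.\ exactly your observation that all the coherence components are paths (or paths between paths) in mere propositions, hence contractible, so projecting onto the underlying (dependent) function is an equivalence. Your direct verification that the transport data of a fibered $\trunc{A}$-algebra already forces each fiber to be a mere proposition is the same point made by Prop.~\ref{thm_fibalg}, so nothing in your route genuinely diverges from the paper's.
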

\begin{corollary}
For any $\mathcal{X} : \truncalg{\U_j}$ and $\U_k$, the types $\hastruncrec{\U_k}(\mathcal{X})$ and $\hastruncind{\U_k}(\mathcal{X})$ are mere propositions.
\end{corollary}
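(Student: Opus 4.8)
The plan is to reduce both statements to two standard closure properties of mere propositions: they are closed under dependent products, i.e.\ $\prd{x:A} B(x)$ is a mere proposition whenever each $B(x)$ is (so in particular $A \to B$ is a mere proposition whenever $B$ is), and a type equivalent to a mere proposition is again a mere proposition (an instance of Prop.~\ref{thm_equiv_same} applied to the family $\isprop{\cdot}$). With these in hand, Prop.~\ref{thm_truncequiv} does the real work by replacing the homomorphism types with much simpler equivalent ones.

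For $\hastruncrec{\U_k}(\mathcal{X})$: unfolding the definition gives the dependent product $\prd{\mathcal{Y} : \truncalg{\U_k}} \trunchom \; \mathcal{X} \; \mathcal{Y}$, so by closure under dependent products it suffices to show that $\trunchom \; \mathcal{X} \; \mathcal{Y}$ is a mere proposition for every $\mathcal{Y}$. By Prop.~\ref{thm_truncequiv} we have $\trunchom \; \mathcal{X} \; \mathcal{Y} \simeq (\fst{\mathcal{X}} \to \fst{\mathcal{Y}})$, and since $\fst{\mathcal{Y}}$ is the carrier of a $\trunc{A}$-algebra it is a mere proposition; hence the function type $\fst{\mathcal{X}} \to \fst{\mathcal{Y}}$ is a mere proposition, and therefore so is anything equivalent to it.

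For $\hastruncind{\U_k}(\mathcal{X})$: I would again unfold to $\prd{\mathcal{Z} : \truncfibalg{\U_k} \; \mathcal{X}} \truncfibhom \; \mathcal{X} \; \mathcal{Z}$ and reduce to showing $\truncfibhom \; \mathcal{X} \; \mathcal{Z}$ is a mere proposition for every fibered algebra $\mathcal{Z}$. By Prop.~\ref{thm_truncequiv}, $\truncfibhom \; \mathcal{X} \; \mathcal{Z} \simeq \prd{x:\fst{\mathcal{X}}} \fst{\mathcal{Z}}(x)$, so it is enough to know that $\fst{\mathcal{Z}}$ is a family of mere propositions; then $\prd{x:\fst{\mathcal{X}}} \fst{\mathcal{Z}}(x)$ is a dependent product of mere propositions, hence a mere proposition, and so is everything equivalent to it. That $\fst{\mathcal{Z}}$ is prop-valued is the fact underlying Prop.~\ref{thm_fibalg}: writing $\mathcal{X}$ as $(C,c,s)$ and $\mathcal{Z}$ as $(E,e,d)$, instantiating $d$ at $x=y$ and $u,v : E(x)$ yields $\id{\trans{E}{s(x,x)}(u)}{v}$; since $C$ is a mere proposition, $\id[C]{x}{x}$ is contractible by Prop.~\ref{thm_contr_path}, so $s(x,x) = \refl{C}{x}$ and transport along it is (propositionally) the identity, whence $E(x)$ is a mere proposition.

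Once Props.~\ref{thm_truncequiv} and~\ref{thm_fibalg} are available the argument is mostly bookkeeping. The one point that needs slight care is the appeal in the induction case to the fact that the fibers of a fibered $\trunc{A}$-algebra are themselves mere propositions --- not merely the bare equivalence of the two notions of fibered algebra --- and this in turn rests on the carrier of any $\trunc{A}$-algebra being a mere proposition.
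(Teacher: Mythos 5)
Your proposal is correct and follows the same route the paper intends: the corollary is stated as an immediate consequence of Prop.~\ref{thm_truncequiv} (together with Props.~\ref{thm_contr_path} and~\ref{thm_fibalg}), replacing the homomorphism types by $\fst{\mathcal{X}} \to \fst{\mathcal{Y}}$ and $\prd{x:\fst{\mathcal{X}}}\fst{\mathcal{Z}}(x)$ and then using closure of mere propositions under (dependent) products and under equivalence. Your explicit check that the fibers of a fibered $\trunc{A}$-algebra are prop-valued is exactly the content the paper delegates to Prop.~\ref{thm_fibalg}, so nothing is missing.
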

Finally, we can show that induction and recursion for $\trunc{A}$ are in fact equivalent. We note that since universe levels are cumulative, the technical restriction that $k \geq j$ does not pose a problem.
\begin{proposition}\label{thm_trunc_main}
For any $\mathcal{X} : \truncalg{\U_j}$ and $\U_k$ with $k \geq j$, we have \[\hastruncrec{\U_k}(\mathcal{X}) \simeq \hastruncind{\U_k}(\mathcal{X})\]
\end{proposition}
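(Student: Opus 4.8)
The plan is to invoke the preceding Corollary, which tells us that both $\hastruncrec{\U_k}(\mathcal{X})$ and $\hastruncind{\U_k}(\mathcal{X})$ are mere propositions; since logically equivalent mere propositions are equivalent (Prop.~\ref{thm_quasieq}), it then suffices to exhibit functions in both directions. Throughout, let $\mathcal{X}$ have components $C : \U_j$, $c : A \to C$, and $s : \isprop{C}$.

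The direction $\hastruncind{\U_k}(\mathcal{X}) \to \hastruncrec{\U_k}(\mathcal{X})$ expresses recursion as the special case of induction into constant families, and is routine. Given a term of type $\hastruncind{\U_k}(\mathcal{X})$ together with an arbitrary algebra $\mathcal{Y} : \truncalg{\U_k}$ with carrier $D$, point map $d$, and propositionality witness $r : \isprop{D}$, I would form the fibered algebra over $\mathcal{X}$ whose family is the constant family $x \mapsto D$, whose point over each $c(a)$ is $d(a)$, and whose fibrewise-propositionality proof is $x \mapsto r$ --- using the alternative presentation of fibered algebras from Prop.~\ref{thm_fibalg}, in which the last component is merely a proof that each fiber is a mere proposition. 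Feeding this fibered algebra to the induction hypothesis and applying the second clause of Prop.~\ref{thm_truncequiv} yields a section of the constant family, i.e. a function $C \to D$; pushing this back through the first clause of Prop.~\ref{thm_truncequiv} produces the required element of $\trunchom\;\mathcal{X}\;\mathcal{Y}$. As $\mathcal{Y}$ was arbitrary, this gives $\hastruncrec{\U_k}(\mathcal{X})$.

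The substantive direction is $\hastruncrec{\U_k}(\mathcal{X}) \to \hastruncind{\U_k}(\mathcal{X})$. Given a recursion hypothesis $R$ and an arbitrary fibered algebra over $\mathcal{X}$, I would first use Prop.~\ref{thm_fibalg} to view it as a family $E : C \to \U_k$ with each $E(x)$ a mere proposition, equipped with a point $e : \prd{a:A} E(c(a))$. The idea is then to re-encode this fibered data as an \emph{ordinary} algebra by passing to the total space: put $D \defeq \sm{x:C} E(x)$, which lies in $\U_k$ precisely because $C : \U_j$ and $k \geq j$; give it the point map $a \mapsto (c(a), e(a)) : A \to D$; and note that $D$ is a mere proposition, since $C$ is one and the fibers $E(x)$ are. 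Call this algebra $\mathcal{Y}$. Applying $R$ to $\mathcal{Y}$ and extracting its underlying function yields $g : C \to \sm{x:C} E(x)$. Since $C$ is a mere proposition, for each $x : C$ we have the path $p_x \defeq s(\fst{g(x)}, x) : \id{\fst{g(x)}}{x}$, along which I would straighten $g$ into an honest section by transporting the second components: set $h(x) \defeq \trans{E}{p_x}(\snd{g(x)}) : E(x)$, so $h : \prd{x:C} E(x)$. Finally, the second clause of Prop.~\ref{thm_truncequiv} turns $h$ into a fibered homomorphism from $\mathcal{X}$ into the original fibered algebra, the point- and path-preservation coherences being automatic because each $E(x)$, being a mere proposition, has contractible path types by Prop.~\ref{thm_contr_path}. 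As the fibered algebra was arbitrary, we obtain $\hastruncind{\U_k}(\mathcal{X})$.

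The hard part will be this last direction, and in particular the total-space maneuver: it is licensed only once Prop.~\ref{thm_fibalg} guarantees that the fibration is pointwise a mere proposition, and it is exactly here that the hypothesis $k \geq j$ is needed, so that $\sm{x:C} E(x)$ still inhabits $\U_k$; the subsequent straightening of the map produced by recursion into a genuine section then rests on nothing more than $C$ being a mere proposition. The remaining checks --- that the constructed data really form $\trunc{A}$-algebras, and that the homomorphism coherences hold --- are routine given the truncation-specific simplifications already established.
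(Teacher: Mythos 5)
Your proposal is correct and follows essentially the same route as the paper: the induction-to-recursion direction via constant families, and the substantive direction by applying recursion to the total space $\sm{x:C}E(x)$ (a mere proposition, lying in $\U_k$ because $k \geq j$), then straightening the resulting map into a section by transporting along the path $\fst{g(x)} = x$ supplied by $\isprop{C}$, with the coherences disposed of by Props.~\ref{thm_truncequiv} and~\ref{thm_contr_path}. No gaps.
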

\begin{proof}
The direction from right to left is obvious. For the other direction, let the algebras $(C,c,s) : \truncalg{\U_j}$ and $(E,e,d) : \truncfibalg{\U_k} \; (C,c,s)$ be given. The total space $\sm{x:C} E(c) : \U_k$ is a mere proposition, we can thus apply recursion with the term $\lam{x:A} (c \; x,e \; x)$ to get a function $u : \trunc{A} \to \sm{x:C} E(c)$. We have a homotopy $\alpha : \mathsf{fst} \circ u \sim \idfun{C}$ as $C$ is a mere proposition. 
Applying second projection and transporting gives us a function $\lam{x:\trunc{A}} \trans{E}{\alpha(x)}(\snd{u \; x})$.
\end{proof}

\section{Homotopy-Initial Algebras}\label{algebras}
Here we develop an equivalent characterization of higher inductive types as \emph{homotopy-initial} (``\emph{h-initial}") algebras \cite{wtypes}. We will work with a slightly larger class of HITs, which we call \emph{$\W$-suspensions}. Informally, a $\W$-suspension is generated by any number of \emph{points} and any number of paths (\emph{cells}) between any two generating points. Formally, given types $B,C : \U_i$, a type family $A : B \to \U_i$, and functions $f,g : B \to C$, the $\W$-suspension $\wsusp{A}{f}{g} : \U_i$ is the higher inductive type generated by the constructors
\begin{align*}
& \point : C \to \wsusp{A}{f}{g} \\
& \cell : \prd{b:B} A(b) \to \id[\wsusp{A}{f}{g}]{\point(f\;b)}{\point(g\;b)}
\end{align*}
Thus, $C$ can be thought of as the index type for points, $B$ as the index type for the different endpoint configurations, $f$ and $g$ as determining the start- and endpoints of a particular configuration, and $A(b)$ as the index type for the different paths between the two points specified by $b$.

We can encode the circle $\Sn{1}$ by taking $C,B \defeq \one$, $A \defeq \lam{\_:\one}\one$, and $f,g \defeq \lam{\_:\one} \star$. The circle $\Sna{1}$ arises when we take $C \defeq \two$, $B \defeq \one$, $A \defeq \lam{\_:\one}\two$, $f \defeq \lam{\_:\one} \top$, $g \defeq \lam{\_:\one} \bot$. Other types which can be represented in this form include the interval type and suspensions (Chpt.~6 of \cite{hott}), hence in particular all the higher spheres $\mathbb{S}^n$.

We have the expected recursion principle:
\begin{mathpar}
\inferrule{X : \U_j \\ p : C \to X \\ s : \prd{b:B} A(b) \to \id[X]{p(f\;b)}{p(g\;b)}}{\wsusprec{X}{p}{s} : \wsusp{A}{f}{g} \to X}
\end{mathpar}
with the computation laws
\[ \beta^{A,f,g}_{X,p,s} : \prd{c:C} (\wsusprec{X}{p}{s}(\point(c)) = p(c)) \]
and
\begin{center}
\begin{tikzpicture}
\node (N0) at (3.5,1) {=};
\node (N1) at (0,2) {$\wsusprec{X}{p}{s}(\point(f\;b))$};
\node (N2) at (7,2) {$\wsusprec{X}{p}{s}(\point(g\;b))$};
\node (N3) at (0,0) {$p(f\;b)$};
\node (N4) at (7,0) {$p(g\;b)$};
\draw[-] (N1) -- node[above]{\footnotesize $\ap{\wsusprec{X}{p}{s}}{\cell(b,a)}$} (N2);
\draw[-] (N1) -- node[left]{\footnotesize $\beta^{A,f,g}_{X,p,s}(f\;b)$} (N3);
\draw[-] (N2) -- node[right]{\footnotesize $\beta^{A,f,g}_{X,p,s}(g\;b)$} (N4);
\draw[-] (N3) -- node[below]{\footnotesize $s(b,a)$} (N4);
\end{tikzpicture}
\end{center}
for each $b:B$, $a : A(b)$. Similarly, we have the induction principle
\begin{mathpar}
\inferrule{E : \wsusp{A}{f}{g} \to \U_j \\ e : \prd{c:C} E(\point(c)) \\ d : \prd{b:B} \prd{a:A(b)} (\id[E(\point(g \; b))]{\trans{E}{\cell(b,a)}(e(f\;b))}{e(g\;b)})}{\wsuspind{E}{e}{d} : \prd{x:\wsusp{A}{f}{g}} E(x)}
\end{mathpar}
with the computation laws
\[ \beta^{A,f,g}_{E,e,d} : \prd{c:C} (\wsuspind{E}{e}{d}(\point(c)) = p(c)) \]
and
\begin{center}
\begin{tikzpicture}
\node (N0) at (3.5,1) {=};
\node (N1) at (0,2) {$\trans{E}{\cell(b,a)}(\wsuspind{E}{e}{d}(\point(f\;b)))$};
\node (N2) at (7,2) {$\wsuspind{E}{e}{d}(\point(g\;b))$};
\node (N3) at (0,0) {$\trans{E}{\cell(b,a)}(p(f\;b))$};
\node (N4) at (7,0) {$p(g\;b)$};
\draw[-] (N1) -- node[above]{\footnotesize $\dap{\wsuspind{E}{e}{d}}{\cell(b,a)}$} (N2);
\draw[-] (N1) -- node[left]{\footnotesize $\ap{\trans{E}{\cell(b,a)}}{\beta^{A,f,g}_{E,e,d}(f\;b)}$} (N3);
\draw[-] (N2) -- node[right]{\footnotesize $\beta^{A,f,g}_{E,e,d}(g\;b)$} (N4);
\draw[-] (N3) -- node[below]{\footnotesize $s(b,a)$} (N4);
\end{tikzpicture}
\end{center}
for each $b:B$, $a:B(a)$.

Following the now-familiar pattern, we define W-suspension algebras and homomorphisms:
\begin{definition}
We define the type of $\wsusp{A}{f}{g}$-algebras on a universe $\U_j$ to be 
\begin{align*}
\wsuspalg{\U_j} \defeq \sm{X:\U_j} \sm{p:C \to X} \prd{b:B} A(b) \to \id{p(f\;b)}{p(g\;b)}
\end{align*}
\end{definition}
\begin{definition}
Define the type of fibered $\wsusp{A}{f}{g}$-algebras on a universe $\U_k$ over $\mathcal{X} : \wsuspalg{\U_j}$ by
\begin{align*}
\wsuspfibalg{\U_k} \; (X,p,s) \defeq \sm{E:X \to \U_k} \sm{(e: \prd{c:C} E(p \; c))} \prd{b:B} \prd{a:A(b)} (\id{\trans{E}{s(b,a)}(e(f\;b))}{e(g\;b)})
\end{align*}
\end{definition}
\begin{definition}
For  $\mathcal{X} : \wsuspalg{\U_j}$ and $\mathcal{Y} : \wsuspalg{\U_k}$, define the type of homomorphisms from $\mathcal{X}$ to $\mathcal{Y}$ by
\begin{align*}
& \wsusphom \; (X,p,s) \; (Y,q,r) \defeq \sm{h:X \to Y} \sm{\beta:\prd{c:C} (\id{h(p(c))}{q(c)})} \\ & \;\; \prd{b:B}\prd{a:A} (\ap{h}{s(b,a)} \ct \beta(g\;b) = \beta(f\;b) \ct r(b,a))
\end{align*}
\end{definition}
\begin{definition}\label{def_fibhom}
For  $\mathcal{X} : \wsuspalg{\U_j}$ and $\mathcal{Y} : \wsuspfibalg{\U_k} \; \mathcal{X}$, define the type of fibered homomorphisms from $\mathcal{X}$ to $\mathcal{Y}$ by
\begin{align*}
& \wsuspfibhom \; (X,p,s) \; (E,e,d) \defeq \sm{(h:\prd{x:X} E(x))} \sm{(\beta:\prd{c:C} (\id{h(p(c))}{e(c)}))} \\ & \;\;\; \prd{b:B}\prd{a:A} (\dap{h}{s(b,a)} \ct \beta(g\;b) = \ap{\trans{E}{s(b,a)}}{\beta(f\;b)} \ct d(b,a))
\end{align*}
\end{definition}
\begin{notation}
For $\mathcal{X} : \wsuspalg{\U_j}$, define
\begin{align*}
& \haswsusprec{\U_k}(\mathcal{X}) \defeq \prd{\mathcal{Y} : \wsuspalg{\U_k}} \wsusphom \; \mathcal{X} \; \mathcal{Y}\\
& \haswsuspind{\U_k}(\mathcal{X}) \defeq \prd{(\mathcal{Y} : \wsuspfibalg{\U_k} \; \mathcal{X})} \wsuspfibhom \; \mathcal{X} \; \mathcal{Y}
\end{align*}
\end{notation}

We can now show that our encodings of the circles $\Sn{1}$ and $\Sna{1}$ as $\W$-suspensions are indeed correct:
\begin{proposition}\label{thm_circ_enc}
Let $A \defeq \lam{\_:\one}\one$ and $f,g \defeq \lam{\_:\one} \star$. There are functions $\circalgtowsuspalg$ and $\wsuspalgtocircalg$ between $\circalg{\U_i}$ and $\wsuspalg{\U_i}$ which comprise a quasi-equivalence. Also, 
\begin{align*}
\hascircind{\U_k}(\mathcal{X}) & \simeq \haswsuspind{\U_k}(\circalgtowsuspalg(\mathcal{X})) \\
\haswsuspind{\U_k}(\mathcal{Y}) & \simeq \hascircind{\U_k}(\wsuspalgtocircalg(\mathcal{Y}))
\end{align*}
\end{proposition}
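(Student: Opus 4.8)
The plan is to observe that, for the data $B \defeq \one$, $C \defeq \one$, $A \defeq \lam{\_:\one}\one$, $f,g \defeq \lam{\_:\one}\star$ at hand, every (dependent) function out of $\one$ appearing in the relevant definitions can be traded for its value at the unique point $\star$. Concretely, I would first record the auxiliary equivalences $(\one \to Z) \simeq Z$ and, dependently, $\prd{b:\one} P(b) \simeq P(\star)$ — hence, iterating, $\prd{b:\one}\prd{a:\one} W(b,a) \simeq W(\star,\star)$ — with quasi-inverse $z \mapsto \lam{\_:\one}z$; the round-trip $z \mapsto \lam{\_:\one}z \mapsto (\lam{\_:\one}z)(\star)$ is definitional by $\beta$, while the other, $h \mapsto h(\star) \mapsto \lam{\_:\one}(h\,\star)$, requires $\one$-induction (equivalently, function extensionality). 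Unfolding the definitions, the present instance of $\wsuspalg{\U_i}$ is $\sm{X:\U_i}\sm{p:\one\to X}\prd{b:\one}(\one \to \id{p(\star)}{p(\star)})$; three applications of the auxiliary equivalences together with $\Sigma$-congruence reduce this to $\sm{X:\U_i}\sm{x:X}(\id{x}{x}) \equiv \circalg{\U_i}$. Reading off the underlying maps, I would set $\circalgtowsuspalg \defeq \lam{w}(\fst{w},\, \lam{\_:\one}\fst{\snd{w}},\, \lam{\_:\one}\lam{\_:\one}\snd{\snd{w}})$ and $\wsuspalgtocircalg \defeq \lam{w}(\fst{w},\, (\fst{\snd{w}})(\star),\, (\snd{\snd{w}})(\star)(\star))$, and verify via Prop.~\ref{thm_quasieq} that they form a quasi-equivalence: one composite reduces by $\beta$ to $w \mapsto (\fst{w},\fst{\snd{w}},\snd{\snd{w}})$, equal to the identity by the propositional $\eta$-rule for $\Sigma$-types (provable by $\Sigma$-elimination), and the other reduces similarly and is then collapsed to the identity using $\one$-induction to rewrite subterms of the form $\lam{\_:\one}(g\,\star)$ as $g$.

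For the first displayed equivalence, fix $\mathcal{X} = (X,x,\ell) : \circalg{\U_i}$, so $\circalgtowsuspalg(\mathcal{X}) = (X,\lam{\_:\one}x,\lam{\_:\one}\lam{\_:\one}\ell)$, and I would unfold $\haswsuspind{\U_k}(\circalgtowsuspalg(\mathcal{X}))$ as $\prd{\mathcal{Y}}\wsuspfibhom\,(\ldots)\,\mathcal{Y}$ and simplify its two ingredients. The index type $\wsuspfibalg{\U_k}(X,\lam{\_:\one}x,\lam{\_:\one}\lam{\_:\one}\ell)$ unfolds to $\sm{E:X\to\U_k}\sm{e:\prd{c:\one}E(x)}\prd{b:\one}\prd{a:\one}(\id{\trans{E}{\ell}(e(\star))}{e(\star)})$, which by the auxiliary equivalences and $\Sigma$-congruence is equivalent to $\sm{E:X\to\U_k}\sm{e:E(x)}(\id{\trans{E}{\ell}(e)}{e})$ — precisely the type over which $\hascircind{\U_k}(\mathcal{X})$ is a product, with $(E,\lam{\_:\one}e,\lam{\_:\one}\lam{\_:\one}d)$ corresponding to $(E,e,d)$. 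For matching data, $\wsuspfibhom\,(X,\lam{\_:\one}x,\lam{\_:\one}\lam{\_:\one}\ell)\,(E,\lam{\_:\one}e,\lam{\_:\one}\lam{\_:\one}d)$ unfolds to $\sm{h:\prd{z}E(z)}\sm{\beta:\prd{c:\one}(\id{h(x)}{e})}\prd{b:\one}\prd{a:\one}(\dap{h}{\ell}\ct\beta(\star) = \ap{\trans{E}{\ell}}{\beta(\star)}\ct d)$, which the same tools turn into $\sm{h:\prd{z}E(z)}\sm{\beta:\id{h(x)}{e}}(\dap{h}{\ell}\ct\beta = \ap{\trans{E}{\ell}}{\beta}\ct d)$ — precisely the fibre of $\hascircind{\U_k}(\mathcal{X})$ at $(E,e,d)$. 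Re-indexing the outer $\prd{\mathcal{Y}}$ along the equivalence of index types and applying this fibrewise equivalence yields $\haswsuspind{\U_k}(\circalgtowsuspalg(\mathcal{X})) \simeq \hascircind{\U_k}(\mathcal{X})$.

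For the second displayed equivalence, given $\mathcal{Y} : \wsuspalg{\U_i}$ I would instantiate the first equivalence at $\mathcal{X} \defeq \wsuspalgtocircalg(\mathcal{Y})$, obtaining $\haswsuspind{\U_k}(\circalgtowsuspalg(\wsuspalgtocircalg(\mathcal{Y}))) \simeq \hascircind{\U_k}(\wsuspalgtocircalg(\mathcal{Y}))$, and then transport along the path $\circalgtowsuspalg(\wsuspalgtocircalg(\mathcal{Y})) = \mathcal{Y}$ supplied by the quasi-equivalence; since transport in the family $\haswsuspind{\U_k}$ over $\wsuspalg{\U_i}$ is an equivalence, this gives $\haswsuspind{\U_k}(\mathcal{Y}) \simeq \hascircind{\U_k}(\wsuspalgtocircalg(\mathcal{Y}))$.

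The main obstacle is organizational rather than conceptual: since the $\wsuspfibhom$-type sits under the outer $\prd{\mathcal{Y}}$, one must check that the equivalence of index types carries $\wsuspfibhom$ to the simplified homomorphism type \emph{coherently}, so that re-indexing produces $\hascircind{\U_k}(\mathcal{X})$ on the nose rather than up to a further equivalence; this means keeping careful track, through the nested $\Sigma$'s, of exactly where each copy of $\star$ is substituted. A minor related point is that the ``easy'' direction of the auxiliary $\one$-evaluation equivalence silently uses function extensionality (to identify $\lam{\_:\one}(h\,\star)$ with $h$), a dependency that must be threaded through the argument. Neither difficulty is deep, but together they constitute the bulk of the work.
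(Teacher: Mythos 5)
Your proposal is correct and follows essentially the same route as the paper: the paper's proof is only a sketch consisting of the two maps $(C,c,s) \mapsto (C,\lam{\_:\one} c,\lam{\_:\one}\lam{\_:\one} s)$ and $(C,p,s) \mapsto (C,p(\star),s(\star,\star))$, which coincide with yours. The remaining work you describe -- the $\one$-evaluation equivalences, the $\Sigma$-congruences, the re-indexing of the outer $\Pi$, and the transport along the quasi-equivalence for the second displayed equivalence -- is exactly the routine verification the paper leaves implicit.
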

\begin{proof}[Proof sketch]
Define the maps by 
\begin{align*}
(C,c,s) & \mapsto (C,\lam{\_:\one} c,\lam{\_:\one} \lam{\_:\one} s) \\
(C,p,s) & \mapsto (C,p(\star),s(\star\,\star))
\end{align*}
\end{proof}
\begin{proposition}\label{thm_circa_enc}
Let $A \defeq \lam{\_:\one}\two$, $f \defeq \lam{\_:\one} \top$, $g \defeq \lam{\_:\one} \bot$. There are maps $\circaalgtowsuspalg$ and $\wsuspalgtocircaalg$ between $\circaalg{\U_i}$ and $\wsuspalg{\U_i}$ which comprise a quasi-equivalence. Also, 
\begin{align*}
\hascircaind{\U_k}(\mathcal{X}) & \simeq \haswsuspind{\U_k}(\circaalgtowsuspalg(\mathcal{X})) \\
\haswsuspind{\U_k}(\mathcal{Y}) & \simeq \hascircaind{\U_k}(\wsuspalgtocircaalg(\mathcal{Y}))
\end{align*}
\end{proposition}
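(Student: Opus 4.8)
First I would define the two maps, mimicking the proof of Prop.~\ref{thm_circ_enc} but with the point-index type $\one$ replaced by $\two$. Let $\circaalgtowsuspalg$ send $(X,a,b,p,q):\circaalg{\U_i}$ to the $\wsusp{A}{f}{g}$-algebra $(X,p_0,s_0)$, where $p_0 \defeq \tworec{X}{a}{b} : \two \to X$ and $s_0 \defeq \lam{\_:\one} k$ with $k : \two \to \id{a}{b}$ the map sending $\top$ to $p$ and $\bot$ to $q$; let $\wsuspalgtocircaalg$ send $(X,p_0,s_0):\wsuspalg{\U_i}$ to $(X,p_0(\top),p_0(\bot),s_0(\star)(\top),s_0(\star)(\bot))$. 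By $\beta$-reduction and the computation rules for $\two$ and $\one$, the composite $\wsuspalgtocircaalg \comp \circaalgtowsuspalg$ reduces to the identity on $\circaalg{\U_i}$, hence is connected to it by reflexivity. In the other direction, $\circaalgtowsuspalg \comp \wsuspalgtocircaalg$ sends $(X,p_0,s_0)$ to an algebra with the same carrier, point map $\tworec{X}{p_0(\top)}{p_0(\bot)}$, and cell datum built analogously from $s_0(\star)(\top)$ and $s_0(\star)(\bot)$; this is equal to $(X,p_0,s_0)$ by a path in the ambient $\Sigma$-type (Prop.~\ref{thm_pair_equiv}) whose first component is reflexivity, whose second is the equality $\tworec{X}{p_0(\top)}{p_0(\bot)} = p_0$ given by $\two$-induction and function extensionality, and whose third is the corresponding equality for $s_0$ after the induced transport, again by function extensionality with $\one$- and $\two$-induction. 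Prop.~\ref{thm_quasieq} then yields $\circaalg{\U_i} \simeq \wsuspalg{\U_i}$.

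For the first displayed equivalence, fix $\mathcal{X} = (X,a,b,p,q)$ and write $(X,p_0,s_0) = \circaalgtowsuspalg(\mathcal{X})$, so that $p_0(\top) \equiv a$, $p_0(\bot) \equiv b$, $s_0(\star)(\top) \equiv p$, $s_0(\star)(\bot) \equiv q$. Unfolding $\haswsuspind{\U_k}(X,p_0,s_0)$ and currying the outer $\Pi$ over the $\Sigma$ in $\wsuspfibalg{\U_k}$ exhibits it as an iterated $\Pi$- and $\Sigma$-type whose successive components are a family $E : X \to \U_k$; a point datum $e : \prd{c:\two} E(p_0\,c)$; a cell datum $d$ quantified over $\one$ and $\two$; a section $h : \prd{x:X} E(x)$; a datum $\beta : \prd{c:\two}(\id{h(p_0\,c)}{e\,c})$; and a coherence quantified over $\one$ and $\two$. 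The plan is to rewrite this tower into $\hascircaind{\U_k}(\mathcal{X})$ using the currying equivalences $\prd{x:\two} P(x) \simeq P(\top) \times P(\bot)$ and $\prd{x:\one} P(x) \simeq P(\star)$ (each a consequence of the finite-type elimination rules and function extensionality), the fact that $\Sigma$- and $\Pi$-types respect equivalences in both their index and fibre arguments, and the four definitional equalities above to make the types line up literally.

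Carried out: currying over $\two$ turns $e$ into a pair $(u,v)$ with $u : E(a)$, $v : E(b)$, whence $e(f\,b) \equiv u$ and $e(g\,b) \equiv v$; currying over $\one$ and then $\two$ turns $d$ into a pair $(\mu,\nu)$ with $\mu : \id{\trans{E}{p}(u)}{v}$ and $\nu : \id{\trans{E}{q}(u)}{v}$; the section $h$ takes the place of $f$; currying $\beta$ over $\two$ gives $(\gamma,\delta)$ with $\gamma : \id{h(a)}{u}$ and $\delta : \id{h(b)}{v}$; and currying the final coherence over $\one$ and $\two$ gives the product of the two squares $\dap{h}{p} \ct \delta = \ap{\trans{E}{p}}{\gamma} \ct \mu$ and $\dap{h}{q} \ct \delta = \ap{\trans{E}{q}}{\gamma} \ct \nu$. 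Composing these equivalences gives $\haswsuspind{\U_k}(\circaalgtowsuspalg(\mathcal{X})) \simeq \hascircaind{\U_k}(\mathcal{X})$. The second displayed equivalence then follows formally: the quasi-equivalence supplies a path $\circaalgtowsuspalg(\wsuspalgtocircaalg(\mathcal{Y})) = \mathcal{Y}$, so transport of $\haswsuspind{\U_k}$ along it is an equivalence $\haswsuspind{\U_k}(\mathcal{Y}) \simeq \haswsuspind{\U_k}(\circaalgtowsuspalg(\wsuspalgtocircaalg(\mathcal{Y})))$, and composing with the first equivalence at $\wsuspalgtocircaalg(\mathcal{Y})$ gives $\haswsuspind{\U_k}(\mathcal{Y}) \simeq \hascircaind{\U_k}(\wsuspalgtocircaalg(\mathcal{Y}))$.

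I expect the main obstacle to be the bookkeeping in the previous step rather than any conceptual difficulty: one must check that after the $\two$- and $\one$-currying isomorphisms the coherence datum of $\wsuspfibhom$ matches on the nose the two squares appearing in $\hascircaind$, and in particular that the definitional computation rules for $\two$ and $\one$ make the transports along $p_0(\top) \equiv a$, $s_0(\star)(\top) \equiv p$ and their $\bot$-counterparts vanish. If one keeps propositional computation rules for $\two$ and $\one$ as well, the same argument applies with these equalities replaced by paths; the associated transports then have to be propagated through the $\Sigma$- and $\Pi$-rewrites, which adds naturality bookkeeping but leaves the structure of the proof unchanged.
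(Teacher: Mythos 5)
Your overall plan is the same as the paper's: define the two translations, check they form a quasi-equivalence, and then obtain the predicate equivalences by currying the $\Pi$'s and $\Sigma$'s over $\one$ and $\two$. But there is a mismatch with the paper's ambient assumptions that affects the very definition of $\circaalgtowsuspalg$. In the paper's setting the computation rules for $\recsym^\two$ are \emph{propositional} (this is exactly why the paper's own construction introduces witnesses $\gamma : \recsym^\two_{C,a,b}(\top) = a$ and $\delta : \recsym^\two_{C,a,b}(\bot) = b$ and takes the cell data to be $\gamma \ct p \ct \delta^{-1}$ and $\gamma \ct q \ct \delta^{-1}$). Under that assumption your cell component $s_0 \defeq \lam{\_:\one}k$ with $k : \two \to \id{a}{b}$ does not type-check: the required codomain is $\id{p_0(\top)}{p_0(\bot)}$, and $p_0(\top)$, $p_0(\bot)$ are only propositionally equal to $a$, $b$, so the data must be conjugated by the computation witnesses exactly as in the paper. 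Likewise your claim that $\wsuspalgtocircaalg \comp \circaalgtowsuspalg$ ``reduces to the identity \ldots{} by reflexivity'' relies on definitional $\beta$-rules for $\two$ (and, even then, on $\eta$ for pairs, which the paper explicitly does not assume); in the intended setting this composite is only propositionally equal to the identity, via $\gamma$, $\delta$, and a $\Sigma$-path built with Prop.~\ref{thm_pair_equiv}.

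Your closing remark does acknowledge the propositional case, but it understates the consequence: it is not merely that ``equalities are replaced by paths'' in the verification — the definition of the forward map itself has to change (conjugation by $\gamma,\delta$), and then the four definitional equalities you invoke to make $\hascircaind{\U_k}(\mathcal{X})$ and $\haswsuspind{\U_k}(\circaalgtowsuspalg(\mathcal{X}))$ ``line up literally'' are no longer available, so the currying step must be supplemented by transport along $\gamma$, $\delta$ and along the conjugated cell data. This is exactly the bookkeeping the paper's proof sketch absorbs into its choice of maps. Your derivation of the second displayed equivalence from the first by transporting $\haswsuspind{\U_k}$ along the path $\circaalgtowsuspalg(\wsuspalgtocircaalg(\mathcal{Y})) = \mathcal{Y}$ is fine and is a reasonable way to finish once the first equivalence and the quasi-equivalence are in place.
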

\begin{proof}[Proof sketch]
Define the maps by 
\begin{align*}
(C,a,b,p,q) & \mapsto (C,\recsym^\two_{C,a,b},\lam{\_:\one} \recsym^\two_{\recsym^\two_{C,a,b}(\top) \to \recsym^\two_{C,a,b}(\bot),\; \gamma\ct p \ct \delta^{-1},\; \gamma \ct q \ct \delta^{-1}}) \\
(C,p,s) & \mapsto (C,p(\top),p(\bot),s(\star,\top),s(\star,\bot))
\end{align*}
where $\gamma : \recsym^\two_{C,a,b}(\top) = a$ and $\delta : \recsym^\two_{C,a,b}(\bot) = b$ witness the two computation rules for $\recsym^\two$. 
\end{proof}

\subsection{Main theorem}
First we define the universal property of homotopy-initiality \cite{wtypes}, which translates the notion of uniqueness into the homotopical setting as contractibility:
\begin{definition}
We call an algebra $\mathcal{X}:\wsuspalg{\U_j}$ \emph{homotopy-initial} on the universe $\U_k$ if the space of homomorphisms from $\mathcal{X}$ to any other algebra on $\U_k$ is contractible:
\[ \hinitial{\U_k}(\mathcal{X}) \defeq \prd{\mathcal{Y}:\wsuspalg{\U_k}} \iscontr{\wsusphom \; \mathcal{X} \; \mathcal{Y}}\]
\end{definition}

We now want to show that homotopy-initiality is in fact equivalent to the induction principle. As an intermediate step, we show that the induction principle can be reduced to the recursion principle plus a certain uniqueness condition, which we call $\hasuniq{\U_k}( \mathcal{X})$. 

A uniqueness condition is needed since in general, the recursion principle does not fully determine an inductive type: the recursion principle for the circle, for example, is also satisfied by the disjoint union of \emph{two} circles. Additionally, by Cor.~\ref{thm_contr_char} we see that the property of $\mathcal{X}$ being h-initial means that for any $\mathcal{Y}$, there exists a homomorphism from $\mathcal{X}$ to $\mathcal{Y}$, and furthermore, that any two such homomorphisms are equal. The existence assertion is precisely the recursion principle; the equality assertion turns out to be equivalent to our uniqueness condition, which is presented in a more explicit form.

To phrase the uniqueness condition in a compact way, we introduce some more notation:
\begin{notation}
Given $p : C \to X$, $e : \prd{c:C} E(x)$, define the type of pointed functions as
\[\poinfun{C}{E}{p}{e} \defeq \sm{(h : \prd{x:X} E(x))} \prd{c:C} (h(p \; x) = e(x)) \]
Define the type of homotopies between two pointed functions $\theta,\phi : \poinfun{C}{E}{p}{e}$ by
\[ (h,\gamma) \sim (i,\delta) \defeq \sm{\alpha : h \sim i} \prd{c:C} (\gamma(c) = \alpha(p\;c) \ct \delta(c)) \]
\end{notation}
Of course, any (fibered) homomorphisms $\mu,\nu$ determine pointed functions; denote by $\mu \sim \nu$ the type of homotopies between these pointed functions. 
\begin{notation}\label{nat_coh}
Given $\mathcal{X} : \wsuspalg{\U_j}$, $\mathcal{Y} : \wsuspalg{\U_k}$, $\mu,\nu : \wsusphom \; \mathcal{X} \; \mathcal{Y}$, and $b:B$, $a:A(b)$, we define a type family on $\mu \sim \nu$ by mapping
\[ \wsuspcoh \; b\; a\; (X,p,s) \; (Y,q,r) \; (h,\gamma,\Theta) \; (i,\delta,\Phi) \; (\alpha, \eta)\]
to the type asserting the commutativity of the following diagram:

\begin{center}
\begin{tikzpicture}
\node (N0) at (3.5,3) {=};
\node (N1) at (0,6) {$\alpha_{p(f \; b)} \ct \ap{i}{s(b,a)}$};
\node (N2) at (7,6) {$\ap{h}{s(b,a)} \ct \alpha_{p(g\;b)}$};
\node (N3) at (0,4) {$(\gamma_{f(b)} \ct \delta^{-1}_{f(b)}) \ct \ap{i}{s(b,a)}$};
\node (N4) at (7,4) {$\ap{h}{s(b,a)} \ct (\gamma_{g(b)} \ct \delta^{-1}_{g(b)})$};
\node (N5) at (0,2) {$\gamma_{f(b)} \ct (\delta^{-1}_{f(b)} \ct \ap{i}{s(b,a)})$};
\node (N6) at (7,2) {$(\ap{h}{s(a,b)} \ct \gamma_{g(b)}) \ct \delta^{-1}_{g(b)}$};
\node (N7) at (0,0) {$\gamma_{f(b)} \ct (r(b,a) \ct \delta^{-1}_{g(b)})$};
\node (N8) at (7,0) {$(\gamma_{f(b)} \ct r(b,a)) \ct \delta^{-1}_{g(b)}$};
\draw[-] (N1) -- node[above]{\scriptsize \emph{naturality of} $\alpha$} (N2);
\draw[-] (N1) -- node[left]{\scriptsize \emph{via} $\invtri(\eta_{f(b)})$} (N3);
\draw[-] (N2) -- node[right]{\scriptsize \emph{via} $\invtri(\eta_{g(b)})$} (N4);
\draw[-] (N3) -- node[above]{\footnotesize} (N5);
\draw[-] (N4) -- node[above]{\footnotesize} (N6);
\draw[-] (N5) -- node[left]{\scriptsize \emph{via} $\invsq(\Phi(a,b))$} (N7);
\draw[-] (N6) -- node[right]{\scriptsize \emph{via} $\Theta(a,b)$} (N8);
\draw[-] (N7) -- node[above]{\footnotesize} (N8);
\end{tikzpicture}
\end{center}
\end{notation}
For brevity, we will usually leave out some of the arguments to $\wsuspcoh$ as appropriate. The maps $\invtri : (u = v \ct w) \to (v = u \ct \opp{w})$ and $\invsq : (u \ct v = w \ct z) \to (\opp{w} \ct u = z \ct \opp{v})$ perform the obvious manipulations of diagrams. 
It is useful to fix a specific definition: we define $\invtri$ by path induction on $w$ so that for $\epsilon : u =_{x=y} (v \ct \mathsf{refl}(y))$, the path $\invtri(\epsilon)$ is definitionally equal to
\begin{center}
\begin{tikzpicture}
\node (N0) at (0,0) {$v$};
\node (N1) at (2,0) {$v \ct \mathsf{refl}(y)$};
\node (N2) at (4,0) {$u$};
\node (N3) at (6,0) {$u \ct \mathsf{refl}(y)$};
\draw[-] (N0) -- node[above]{} (N1);
\draw[-] (N1) -- node[above]{\scriptsize via $\epsilon$} (N2);
\draw[-] (N2) -- node[above]{} (N3);
\end{tikzpicture}
\end{center}
Similarly, we define $\invsq$ by path induction on $v$ and $w$ so that for $\epsilon : (u \ct \mathsf{refl}(y)) =_{x=y} (\mathsf{refl}(x) \ct z)$, the path $\invsq(\epsilon)$ is definitionally equal to
\begin{center}
\begin{tikzpicture}
\node (N0) at (0,0) {$ \mathsf{refl}(x) \ct u$};
\node (N1) at (2,0) {$u$};
\node (N2) at (4,0) {$u \ct \mathsf{refl}(y)$};
\node (N3) at (7,0) {$\mathsf{refl}(x) \ct z$};
\node (N4) at (9,0) {$z$};
\node (N5) at (11,0) {$z \ct \mathsf{refl}(y)$};
\draw[-] (N0) -- node[above]{} (N1);
\draw[-] (N1) -- node[above]{} (N2);
\draw[-] (N2) -- node[above]{\scriptsize $\epsilon$} (N3);
\draw[-] (N3) -- node[above]{} (N4);
\draw[-] (N4) -- node[above]{} (N5);
\end{tikzpicture}
\end{center}
There are maps $\invtri^{-1} :  (v = u \ct \opp{w}) \to (u = v \ct w)$ and $\invsq^{-1} : (\opp{w} \ct u = z \ct \opp{v}) \to (u \ct v = w \ct z)$ which form quasi-equivalences with $\invtri$ and $\invsq$ respectively.
 
\begin{definition}
Given $\mathcal{X} : \wsuspalg{\U_j}$, $\mathcal{Y} : \wsuspalg{\U_k}$ and $\mu,\nu : \wsusphom \; \mathcal{X} \; \mathcal{Y}$, define the type of algebra 2-cells between $\mu$ and $\nu$ as
\[ \twocell \; \mu \; \nu = \sm{\mathfrak{p} : \mu \sim \nu} \prd{b:B} \prd{a:A(b)} \wsuspcoh \; b \; a  \; \mathfrak{p}\]
\end{definition}
Our uniqueness condition then says that for any algebra $\mathcal{Y}$ and homomorphisms $\mu,\nu$ from $\mathcal{X}$ to $\mathcal{Y}$, there exists an algebra \emph{2-cell} between $\mu$ and $\nu$:
\begin{notation}
For $\mathcal{X} : \wsuspalg{\U_j}$, define
\begin{align*}
\hasuniq{\U_k}(\mathcal{X}) \defeq \prd{\mathcal{Y} : \wsuspalg{\U_k}} \prd{(\mu,\nu : \wsusphom \; \mathcal{X} \; \mathcal{Y})} \twocell \; \mu \; \nu
\end{align*}
\end{notation}
We now come to the main theorem:
\begin{theorem}\label{thm_main}
For any algebra $\mathcal{X} : \wsuspalg{\U_j}$, we have
\begin{align*}
 \haswsuspind{\U_k}(\mathcal{X}) \simeq 
 \haswsusprec{\U_k}(\mathcal{X}) \times \hasuniq{\U_k}(\mathcal{X}) \simeq 
 \hinitial{\U_k}(\mathcal{X})
\end{align*}
for $k \geq j$ and the three types above are mere propositions.
\end{theorem}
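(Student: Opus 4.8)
The plan is to produce mutual logical equivalences among the three types, check that each is a mere proposition, and then conclude using the fact (noted after Prop.~\ref{thm_quasieq}) that a logical equivalence between mere propositions is an equivalence, the two required homotopies being automatic. The trivial endpoint is $\hinitial{\U_k}(\mathcal{X})$: it is a dependent product, over $\mathcal{Y}:\wsuspalg{\U_k}$, of the types $\iscontr{\wsusphom \; \mathcal{X} \; \mathcal{Y}}$, each a mere proposition by Cor.~\ref{thm_contr_prop}, and $\Pi$ preserves mere propositions.

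For $\haswsusprec{\U_k}(\mathcal{X}) \times \hasuniq{\U_k}(\mathcal{X}) \simeq \hinitial{\U_k}(\mathcal{X})$ I would argue directly. By Cor.~\ref{thm_contr_char} each fiber $\iscontr{\wsusphom \; \mathcal{X} \; \mathcal{Y}}$ is equivalent to $\wsusphom \; \mathcal{X} \; \mathcal{Y} \times \isprop{\wsusphom \; \mathcal{X} \; \mathcal{Y}}$; distributing the leading $\prd{\mathcal{Y}}$ over this product (a standard equivalence) identifies the first factor with $\haswsusprec{\U_k}(\mathcal{X})$, so it remains to exhibit $\prd{\mathcal{Y}} \isprop{\wsusphom \; \mathcal{X} \; \mathcal{Y}} \simeq \hasuniq{\U_k}(\mathcal{X})$. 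This reduces to a path-characterization lemma for homomorphisms: for $\mu,\nu : \wsusphom \; \mathcal{X} \; \mathcal{Y}$, unfolding $\wsusphom$ as an iterated $\Sigma$-type and repeatedly applying Prop.~\ref{thm_pair_equiv} and function extensionality shows $(\mu = \nu) \simeq \twocell \; \mu \; \nu$ — first the homotopy $\alpha$, then the pointedness clause relating $\gamma,\delta$ to $\alpha$, then the coherence family, which matches $\wsuspcoh$ once the explicit definitions of $\invtri$ and $\invsq$ are substituted. This step is lengthy but mechanical. In particular $\haswsusprec{\U_k}(\mathcal{X}) \times \hasuniq{\U_k}(\mathcal{X})$ is now also a mere proposition.

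For the remaining equivalence I would first give maps both ways. From induction to recursion-plus-uniqueness: to get recursion, apply $\haswsuspind{\U_k}(\mathcal{X})$ to the \emph{constant} fibered algebra over $\mathcal{X} = (X,p,s)$ determined by a given $\mathcal{Y} = (Y,q,r)$ — fiber $Y$ at every point, point operator $q$, and cell operator obtained from $r$ corrected by the standard path computing transport in a constant family — and read off a homomorphism; to get uniqueness, given $\mu = (h,\gamma,\Theta),\nu = (i,\delta,\Phi) : \wsusphom \; \mathcal{X} \; \mathcal{Y}$, apply $\haswsuspind{\U_k}(\mathcal{X})$ to the fibered algebra with fiber $h(x) = i(x)$ over $x$, point operator $c \mapsto \gamma(c) \ct \opp{\delta(c)}$, and cell operator assembled from $\Theta,\Phi$ via the characterization of transport in a family of identity types, and read off an algebra 2-cell. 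Conversely, from recursion-plus-uniqueness to induction: given a fibered algebra $(E,e,d)$ over $\mathcal{X}$, form the total algebra carried by $\sm{x:X} E(x)$, which lies on $\U_k$ since $k \geq j$ (its cell operator pairs $s$ and $d$ using the inverse of the equivalence of Prop.~\ref{thm_pair_equiv}); apply $\haswsusprec{\U_k}(\mathcal{X})$ to obtain a homomorphism $\widetilde h$ into it; post-compose its carrier with $\fstsym$ to obtain an endo-homomorphism of $\mathcal{X}$; invoke $\hasuniq{\U_k}(\mathcal{X})$ to obtain an algebra 2-cell, with underlying homotopy $\alpha$, between this endo-homomorphism and the identity homomorphism; and define the section by $x \mapsto \trans{E}{\alpha(x)}(\snd{\widetilde h(x)})$, where $\widetilde h(x)$ denotes the value of the carrier of $\widetilde h$. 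Finally, $\haswsuspind{\U_k}(\mathcal{X})$ is itself a mere proposition: given $H : \haswsuspind{\U_k}(\mathcal{X})$ and two fibered homomorphisms with carriers $h_1,h_2$ over some fibered $\mathcal{Y}$, applying $H$ to the fibered algebra with fiber $h_1(x) = h_2(x)$ yields (after the analogous path-characterization) an equality of the two fibered homomorphisms, so $\wsuspfibhom \; \mathcal{X} \; \mathcal{Y}$ is a mere proposition for every $\mathcal{Y}$; hence $\haswsuspind{\U_k}(\mathcal{X}) \to \isprop{\haswsuspind{\U_k}(\mathcal{X})}$, which forces $\haswsuspind{\U_k}(\mathcal{X})$ to be a mere proposition. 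With all three types mere propositions and mutually logically equivalent, the displayed equivalences follow.

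The main obstacle is the coherence verification in the construction of the section above: showing that $x \mapsto \trans{E}{\alpha(x)}(\snd{\widetilde h(x)})$, with the point datum extracted from the pointedness clause of the 2-cell, really satisfies the cell-coherence condition of a fibered homomorphism. When unwound, that condition is a large pasting of naturality squares, computations describing how $\fstsym$ and $\snd{\cdot}$ act on a path in a $\Sigma$-type, and $\invtri$/$\invsq$ manipulations — precisely the octagon packaged as $\wsuspcoh$ — and the entire definition of $\twocell$ has been engineered so that the coherence carried by the 2-cell supplies exactly what this verification needs. This is the step that genuinely extends the W-type argument of \cite{wtypes} to the higher-dimensional setting, and essentially all of the real difficulty of the theorem is concentrated there.
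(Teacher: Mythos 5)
Your proposal is correct and follows essentially the same route as the paper: induction yields recursion via the constant fibered algebra and yields (dependent and non-dependent) uniqueness via the identity-type fibered algebra over $\lam{x:X}(h(x)=i(x))$; recursion plus uniqueness yields induction via the total-space algebra on $\sm{x:X}E(x)$, projection along $\fstsym$, and transport along the 2-cell homotopy $\alpha$; and the characterization $(\mu=\nu)\simeq\twocell\;\mu\;\nu$ (and its fibered analogue) supplies both the equivalence with h-initiality via Cor.~\ref{thm_contr_char} and the mere-propositionality of the induction principle. You also correctly locate the real work in the cell-coherence verification for $\lam{x} \trans{E}{(\alpha_x)}(\snd{u_x})$, which is exactly where the paper expends its Parts I--III.
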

We point out that the uniqueness condition was not construed in an ad hoc way; rather, it is systematically derived from the induction principle. We recall that a homomorphism between two algebras $(X,p,s)$, $(Y,q,r)$ is a triple $(h,\beta,\Theta)$, where $h : X \to Y$ is a function between the carrier types, $\beta$ specifies the behavior of $h$ on the \emph{0-cells}, i.e., the value of $h(p(c))$, and $\Theta$ specifies the behavior of $h$ on the \emph{1-cells}, i.e., the value of $\ap{h}{s(b,a)}$. The existence of such a homomorphism for any $(Y,q,r)$ is of course precisely the recursion principle.
Similarly, the uniqueness condition itself can be viewed as a certain form of induction, albeit a very specific one. We recall that an algebra \emph{2-cell} between $(h,\gamma,\Theta)$ and $(i,\delta,\Phi)$ is a triple $(\alpha,\eta,\Psi)$, where $\alpha : h \sim i$ relates the two underlying mappings, $\eta$ relates the path families $\gamma$ and $\delta$, and $\Psi$ relates the proof families $\Theta$ and $\Phi$ with the diagram in Not.~\ref{nat_coh}. The existence of such an algebra \emph{2-cell} between any $(h,\gamma,\Theta)$ and $(i,\delta,\Phi)$ thus guarantees the existence of a dependent function $\alpha : \prd{x:X} (h(x) = i(x))$ - the ``inductor". 
The behavior of $\alpha$ on the \emph{0-cells}, i.e., the value of $\alpha(p(c))$, is specified by the term $\eta$, which thus serves as the first ``computation rule". Finally, the behavior of $\alpha$ on the \emph{1-cells}, i.e., the value of $\dap{\alpha}{s(b,a)}$, is specified by the family of diagrams $\Psi$\footnote{Although the diagram scheme in Not.~\ref{nat_coh} uses an equivalent formulation that
does not explicitly mention the term $\dap{\alpha}{s(b,a)}$.}, which hence serves as the second ``computation rule." 

As a sanity check, we look at the analogue of the main theorem in the case of propositional truncations. By Prop.~\ref{thm_truncequiv}, homomorphisms between $\trunc{A}$-algebras are just maps between the carrier types, and the elimination principles for $\trunc{A}$ do not postulate any computation rules. A \emph{2-cell} between homomorphisms $h$ and $i$ is thus just a homotopy $\alpha : h \sim i$.
The existence of such $\alpha$ is of course a moot point in the setting of mere propositions and the uniqueness condition reduces to the unit type $\one$. Similarly, h-initiality reduces to the recursion principle by virtue of Prop.~\ref{thm_contr_char}. The rest follows from Prop.~\ref{thm_trunc_main}.

Before we proceed to the proof of the main theorem, we consider a dependent version of the uniqueness condition, $\hasinduniq{\U_k}(\mathcal{X})$, which uses the fibered version of algebras, homomorphisms, and algebra \emph{2-cells}:

\begin{notation}\label{nat_coh_dep}
For $\mathcal{X} : \wsuspalg{\U_j}$, $\mathcal{Y} : \wsuspfibalg{\U_k}\;\mathcal{X}$, homomorphisms $\mu,\nu : \wsuspfibhom \; \mathcal{X} \; \mathcal{Y}$ and $b:B$, $a:A(b)$, we define a type family on $\mu \sim \nu$ by mapping
\[ \wsuspfibcoh \; b\; a\; (X,p,s) \; (E,e,d) \; (h,\gamma,\Theta) \; (i,\delta,\Phi) \; (\alpha, \eta)\]
to the type asserting the commutativity of the following diagram:

\begin{center}
\begin{tikzpicture}
\node (N0)  at (4.4,2) {=};

\node (N1)  at (0,6) {$\ap{\trans{E}{s(b,a)}}{\alpha_{p(f \; b)}} \ct \dap{i}{s(b,a)}$};
\node (N1a) at (0,4) {$\ap{\trans{E}{s(b,a)}}{\gamma_{f(b)} \ct \delta^{-1}_{f(b)}} \ct \dap{i}{s(b,a)}$};
\node (N3)  at (0,2) {$(\ap{\trans{E}{s(b,a)}}{\gamma_{f(b)}} \ct \ap{\trans{E}{s(a,b)}}{\delta_{f(b)}}^{-1}) \ct \dap{i}{s(b,a)}\;\;\;\;\;\;\;\;\;\;\;\;\;\;\;\;\;\;\;\;$};
\node (N5)  at (0,0) {$\ap{\trans{E}{s(b,a)}}{\gamma_{f(b)}} \ct (\ap{\trans{E}{s(a,b)}}{\delta_{f(b)}}^{-1} \ct \dap{i}{s(b,a)})\;\;\;\;\;\;\;\;\;\;\;\;\;\;\;\;\;\;\;\;$};
\node (N7)  at (0,-2) {$\ap{\trans{E}{s(b,a)}}{\gamma_{f(b)}} \ct (d(b,a) \ct \delta^{-1}_{g(b)})$};

\node (N2)  at (8,6) {$\dap{h}{s(b,a)} \ct \alpha_{p(g\;b)}$};
\node (N4)  at (8,2) {$\dap{h}{s(b,a)} \ct (\gamma_{g(b)} \ct \delta^{-1}_{g(b)})$};
\node (N6)  at (8,0) {$(\dap{h}{s(a,b)} \ct \gamma_{g(b)}) \ct \delta^{-1}_{g(b)}$};
\node (N8)  at (8,-2) {$(\ap{\trans{E}{s(b,a)}}{\gamma_{f(b)}} \ct d(b,a)) \ct \delta^{-1}_{g(b)}$};
\draw[-] (N1) -- node[above]{\scriptsize \emph{naturality of} $\alpha$} (N2);
\draw[-] (N1) -- node[left]{\scriptsize \emph{via} $\invtri(\eta_{f(b)})$} (N1a);
\draw[-] (N1a) -- node[left]{\scriptsize} (N3);
\draw[-] (N2) -- node[right]{\scriptsize \emph{via} $\invtri(\eta_{g(b)})$} (N4);
\draw[-] (N3) -- node[above]{\footnotesize} (N5);
\draw[-] (N4) -- node[above]{\footnotesize} (N6);
\draw[-] (N5) -- node[left]{\scriptsize \emph{via} $\invsq(\Phi(a,b))$} (N7);
\draw[-] (N6) -- node[right]{\scriptsize \emph{via} $\Theta(a,b)$} (N8);
\draw[-] (N7) -- node[above]{\footnotesize} (N8);
\end{tikzpicture}
\end{center}
\end{notation}

\begin{definition}
Given $\mathcal{X} : \wsuspalg{\U_j}$, $\mathcal{Y} : \wsuspfibalg{\U_k} \; \mathcal{X}$ and $\mu,\nu : \wsuspfibhom \; \mathcal{X} \; \mathcal{Y}$, define the type of algebra 2-cells between $\mu$ and $\nu$ as
\[ \twofibcell \; \mu \; \nu = \sm{\mathfrak{p} : \mu \sim \nu} \prd{b:B} \prd{a:A(b)} \wsuspfibcoh \; b \; a  \; \mathfrak{p}\]
\end{definition}
\begin{notation}
For $\mathcal{X} : \wsuspalg{\U_j}$, define
\begin{align*}
\hasinduniq{\U_k}(\mathcal{X}) \defeq \prd{(\mathcal{Y} : \wsuspalg{\U_k} \; \mathcal{X})} \prd{(\mu,\nu : \wsuspfibhom \; \mathcal{X} \; \mathcal{Y})} \twofibcell \; \mu \; \nu
\end{align*}
\end{notation}

At last, we outline the proof of the main theorem.
\begin{proof}[Proof outline]
The proof consists of the following steps:
\begin{enumerate}
\item[\emph{1)}] \emph{Show that the induction principle implies the recursion principle, see \ref{pf_ind_imp_rec}}.

\item[\emph{2)}] \emph{Show that the induction principle implies both uniqueness conditions, see \ref{pf_ind_imp_coh}}.

\item[\emph{3)}] \emph{Show that the recursion plus uniqueness principles imply the induction principle, see \ref{pf_rec_coh_imp_ind}.}

\item[\emph{4)}] \emph{Show that the space of (fibered) 2-cells between two (fibered) homomorphisms $\mu,\nu$ is equivalent to the path space $\mu = \nu$, see \ref{pf_two_cell_path}}.
\end{enumerate}

The last step together with Prop.~\ref{thm_contr_char} establishes the equivalence of \emph{h-initiality} and \emph{recursion +    
 uniqueness}. Since the former is a mere proposition by Prop.~\ref{thm_contr_prop}, so is the latter. The first three steps establish logical equivalence between \emph{induction} and \emph{recursion + uniqueness}. It remains to prove the former is a mere proposition.

It is sufficient to do so under the assumption that the type $\haswsuspind{\U_k}\;\mathcal{X}$ is inhabited. Thus, the second step tells us that the dependent uniqueness principle holds. By the third step, this means that for any $\mathcal{Y}$, any two fibered homomorphisms from $\mathcal{X}$ to $\mathcal{Y}$ are equal. But of course, this implies that any two inhabitants of $\haswsuspind{\U_k}\;\mathcal{X}$ are equal.
\end{proof}

The relationships between the various properties are depicted in the following diagram:

\begin{center}
\begin{tikzpicture}
\node (N10) at (7,-1.5) {\small $\hinitial{\U_k}(\mathcal{X})$};
\node (N11) at (7,4.5) {\small $\haswsuspind{\U_k}(\mathcal{X})$};
\node (N12) at (7,1.5) {\small $\haswsusprec{\U_k}(\mathcal{X})$};
\node (N13) at (2,3) {\small $\hasuniq{\U_k}(\mathcal{X})$};
\node (N14) at (12,3) {\small $\hasinduniq{\U_k}(\mathcal{X})$};
\node (N15) at (2,0) {\small $\prd{\mathcal{Y}} \isprop{\wsusphom\;\mathcal{X}\;\mathcal{Y}}$};
\node (N16) at (12,0) {\small $\prd{\mathcal{Y}} \isprop{\wsuspfibhom\;\mathcal{X}\;\mathcal{Y}}$};
\node (N15b) at (2,0.20) {};
\node (N16b) at (12,0.15) {};
\node (N17) at (7,3) {\footnotesize $\times$};
\node (N18) at (7,0) {\footnotesize $\times$};
\node (N11b) at (7.1,4.3) {};
\node (N12b) at (7.1,1.7) {};
\node (N12a) at (7,1.3) {};

\draw[->] (N11) -- node[above]{} (N14);
\draw[->] (N11) -- node[above]{} (N13);
\draw[->] (N17) -- node[above]{} (N11);
\draw[->] (N13) -- node[above]{} (N17);
\draw[->] (N12) -- node[above]{} (N17);
\draw[double distance = 1pt] (N14) -- node[above]{} (N16b);
\draw[double distance = 1pt] (N13) -- node[above]{} (N15b);
\draw[->] (N12a) -- node[above]{} (N18);
\draw[->] (N15) -- node[above]{} (N18);
\draw[double distance = 1pt] (N18) -- node[above]{} (N10);
\draw[->] (N11b.south) to[out=315,in=45] (N12b.north);
\end{tikzpicture}
\end{center}
Single arrow indicates implication; double line indicates equivalence. The symbol $\times$ indicates the product operator.

\section{Conclusion}\label{conc}
We have investigated higher inductive types with propositional computational behavior and shown that they can be equivalently characterized as homotopy-initial algebras. We have stated and proved this result for propositional truncations and for the so-called $\W$-suspensions, which subsume a number of other interesting cases - the unit circle $\mathbf{S}^1$, the interval type $\mathbf{I}$, all the higher spheres $\mathbf{S}^n$, and all suspensions. The characterization of these individual types as homotopy-initial algebras can be easily obtained as a corollary to our main theorem. Furthermore, we can readily apply the method presented here to obtain an analogous result for set truncations and set quotients. We conjecture that similar results can be established for other higher inductive types - such as homotopy (co)limits, tori, group quotients, or real numbers - following the same methodology. We are planning to formalize the results presented here in the Coq proof assistant.

Finally, we remark that the use of propositional computation rules instead of definitional ones alters the meaning of computation, which can now only be expressed up to a higher homotopy. The very same issue arises by postulating the univalence axiom itself, as well as any higher-dimensional constructors such as $\lp$. The precise computational interpretation of HoTT is currently a subject of intense research.

\section*{Acknowledgment}
The author would like to thank her advisors, Profs. Steve Awodey and Frank Pfenning, for their help.

\bibliographystyle{plain}
\bibliography{references}

\newpage
\appendix
\section{Proof of The Main Theorem}
\subsection{\emph{Preliminaries}}
We list here a few propositions which will be needed later. We omit the proofs as anyone reasonably familiar with HoTT should have no trouble verifying these statements.
\begin{proposition}\label{thm_cong_eq}
Given paths $u,v,w,z : b =_X c$ and $p : a =_X b$, and higher paths $\alpha : u = v$, $\beta : v = z$, $\gamma : u = w$, $\delta : w = z$, the commutativity of the diagram
\begin{center}
\begin{tikzpicture}
\node (N0) at (0,2) {$p \ct u$};
\node (N1) at (2,2) {$p \ct v$};
\node (N2) at (0,0) {$p \ct w$};
\node (N3) at (2,0) {$p \ct z$};
\draw[-] (N0) -- node[above]{\scriptsize \emph{via} $\alpha$} (N1);
\draw[-] (N0) -- node[left]{\scriptsize \emph{via} $\gamma$} (N2);
\draw[-] (N1) -- node[right]{\scriptsize \emph{via} $\beta$} (N3);
\draw[-] (N2) -- node[below]{\scriptsize \emph{via} $\delta$} (N3);
\end{tikzpicture}
\end{center}
is equivalent to the commutativity of
\begin{center}
\begin{tikzpicture}
\node (N0) at (0,2) {$u$};
\node (N1) at (2,2) {$v$};
\node (N2) at (0,0) {$w$};
\node (N3) at (2,0) {$z$};
\draw[-] (N0) -- node[above]{\scriptsize $\alpha$} (N1);
\draw[-] (N0) -- node[left]{\scriptsize $\gamma$} (N2);
\draw[-] (N1) -- node[right]{\scriptsize $\beta$} (N3);
\draw[-] (N2) -- node[below]{\scriptsize $\delta$} (N3);
\end{tikzpicture}
\end{center}
\end{proposition}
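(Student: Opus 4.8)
Write $\phi$ for the function $q \mapsto p \ct q : (b =_X c) \to (a =_X c)$. Unfolding ``commutativity of a square'' into the evident type of $3$-cells, the first diagram asserts $\ap{\phi}{\alpha} \ct \ap{\phi}{\beta} = \ap{\phi}{\gamma} \ct \ap{\phi}{\delta}$ and the second asserts $\alpha \ct \beta = \gamma \ct \delta$, so the proposition claims these two types are equivalent. The organizing observation is that $\phi$ is an equivalence, its quasi-inverse being $q \mapsto \opp{p} \ct q$ by the groupoid laws of Section~\ref{prelim} and Prop.~\ref{thm_quasieq}; hence it should neither create nor destroy any higher structure.

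The most direct route is \emph{path induction on $p$}: it suffices to treat $p \equiv \refl{X}{b}$. In that case $\phi$ is pointwise equal to the identity via the unit law $\ell_q : \refl{X}{b} \ct q = q$, and for each higher path $\rho$ the naturality square of $\ell$ (the naturality property of Section~\ref{prelim}) relates $\ap{\phi}{\rho}$ to $\ap{\idfun{}}{\rho}$, which equals $\rho$ by a one-line path induction; thus $\ap{\phi}{\rho} = \ell_{\mathrm{src}} \ct \rho \ct \opp{\ell_{\mathrm{tgt}}}$. Rewriting the four edges of the first diagram by these identities and cancelling the matching copies of $\ell_v$ and $\ell_w$ turns it into $\ell_u \ct (\alpha \ct \beta) \ct \opp{\ell_z} = \ell_u \ct (\gamma \ct \delta) \ct \opp{\ell_z}$; cancelling $\ell_u$ and $\opp{\ell_z}$ then gives the second diagram. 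Since composition with a fixed path is an equivalence (Prop.~\ref{thm_quasieq} together with the groupoid laws), each of these rewrites is an equivalence on the types of fillers, and composing them proves the claim.

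I would probably present instead a case-free version. Because $\phi$ is an equivalence, $\ap{\phi}{-} : (x = y) \to (\phi\,x = \phi\,y)$ is an equivalence for all $x,y$ --- the standard fact that an equivalence induces equivalences on all iterated path spaces \cite{hott}. Then commutativity of the first diagram is equivalent, via the functoriality paths $\ap{\phi}{\rho \ct \sigma} = \ap{\phi}{\rho} \ct \ap{\phi}{\sigma}$ (Section~\ref{prelim}) and the fact that composition with a fixed path is an equivalence, to $\ap{\phi}{\alpha \ct \beta} = \ap{\phi}{\gamma \ct \delta}$; and applying the displayed fact once more, now to the equivalence $\ap{\phi}{-} : (u = z) \to (\phi\,u = \phi\,z)$ at the points $\alpha \ct \beta$ and $\gamma \ct \delta$, shows this last type is equivalent to $\alpha \ct \beta = \gamma \ct \delta$, the commutativity of the second diagram. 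Composing the chain of equivalences finishes the proof.

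I expect the only real difficulty to be bookkeeping: keeping track of which $2$- and $3$-cells are being concatenated, and checking that the edges labelled ``via~$\rho$'' in the two squares are genuinely $\ap{\phi}{\rho}$ and $\rho$ on the nose, so that the chain of equivalences composes without a mismatch. There is no conceptually hard step here --- this is exactly the kind of routine coherence the preliminaries flag as safe to omit --- but the argument does rest on the standard result that equivalences induce equivalences on path spaces, which ought to be cited rather than re-derived.
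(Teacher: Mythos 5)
Your proposal is correct. Note that the paper deliberately omits proofs of the appendix preliminaries, Prop.~\ref{thm_cong_eq} included, so there is no official argument to compare against; both of your routes are sound and standard. The path-induction-on-$p$ version does need exactly the bookkeeping you flag --- after reducing to $p \equiv \reflsym$ the composite $\reflsym \ct u$ is only propositionally equal to $u$, so the unit-law homotopy and its naturality must be threaded through as you describe --- whereas the case-free version, which rests on left composition with $p$ being an equivalence and on the standard fact that an equivalence induces equivalences on all iterated path spaces, is the cleaner citation-level argument and is presumably the sort of routine verification the author intended the reader to supply.
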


\begin{proposition}\label{thm_diag_rev}
Given paths $u : a =_X b$, $v : b =_X d$, $w : a =_X c$, $z : c =_X d$ and higher paths $\Phi, \Theta : u \ct v = w \ct z$, the commutativity of the diagram
\begin{center}
\begin{tikzpicture}
\node (N0) at (3,7) {$u$};
\node (N1) at (0,6) {$\refl{}{a} \ct u$};
\node (N2) at (6,6) {$u \ct \refl{}{b}$};
\node (N3) at (0,4) {$(w \ct \opp{w}) \ct u$};
\node (N4) at (6,4) {$u \ct (v \ct \opp{v})$};
\node (N5) at (0,2) {$w \ct (\opp{w} \ct u$)};
\node (N6) at (6,2) {$(u \ct v) \ct \opp{v}$};
\node (N7) at (0,0) {$w \ct (z \ct \opp{v})$};
\node (N8) at (6,0) {$(w \ct z) \ct \opp{v}$};
\draw[-] (N0) -- node[above]{} (N1);
\draw[-] (N0) -- node[left]{} (N2);
\draw[-] (N1) -- node[left]{} (N3);
\draw[-] (N2) -- node[left]{} (N4);
\draw[-] (N3) -- node[left]{} (N5);
\draw[-] (N4) -- node[left]{} (N6);
\draw[-] (N5) -- node[left]{\scriptsize \emph{via} $\invsq(\Phi)$} (N7);
\draw[-] (N6) -- node[right]{\scriptsize \emph{via} $\Theta$} (N8);
\draw[-] (N7) -- node[left]{} (N8);
\end{tikzpicture}
\end{center}
is equivalent to the path space $\Phi = \Theta$.
\end{proposition}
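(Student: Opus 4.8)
The plan is to strip the octagon of its purely coherence-theoretic padding and reduce its commutativity to an equation visibly equivalent to $\id{\Phi}{\Theta}$. We read the two composites as parallel paths from $u$ to $(w\ct z)\ct\opp v$ in the type $a =_X b$.

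First observe that the right-hand composite of the octagon is, by construction, $c \ct \ap{(\lambda q.\, q\ct\opp v)}{\Theta}$, where $c : \id{u}{(u\ct v)\ct\opp v}$ is the fixed path built from the right-unit law, the cancellation law $\id{v\ct\opp v}{\reflsym}$, and associativity (i.e.\ the three edges forming the right side of the octagon above the node $(u\ct v)\ct\opp v$). The crux of the argument is to show that the left-hand composite equals $c \ct \ap{(\lambda q.\, q\ct\opp v)}{\Phi}$ for the \emph{same} prefix $c$. Using the chosen definition of $\invsq$ together with the groupoid laws, this reduces --- after path induction on $v$ and $w$, which replaces $\invsq$ by its definitional form and collapses every coherence two-cell (the unit and cancellation laws, associativity, and the coherences internal to $\invsq$) to reflexivity --- to a short direct calculation. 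Prop.~\ref{thm_cong_eq} can be applied beforehand to discard the outermost $w\ct(-)$ whiskering on the left side, which shortens this calculation.

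Granting this identification, commutativity of the octagon is equivalent to
\[ \id{c \ct \ap{(\lambda q.\, q\ct\opp v)}{\Phi}}{\,c \ct \ap{(\lambda q.\, q\ct\opp v)}{\Theta}}. \]
Now $\lambda \rho.\, c \ct \rho$ is an equivalence of path types, with inverse $\lambda \sigma.\, \opp c \ct \sigma$, and $\mathsf{ap}$ of an equivalence is again an equivalence, so the displayed equation is equivalent to $\id{\ap{(\lambda q.\, q\ct\opp v)}{\Phi}}{\ap{(\lambda q.\, q\ct\opp v)}{\Theta}}$. Finally $\lambda q.\, q\ct\opp v : (a =_X d) \to (a =_X b)$ is an equivalence, with quasi-inverse $\lambda q.\, q\ct v$; hence the induced map $\ap{(\lambda q.\, q\ct\opp v)}{-} : (\id{\Phi}{\Theta}) \to \bigl(\id{\ap{(\lambda q.\, q\ct\opp v)}{\Phi}}{\ap{(\lambda q.\, q\ct\opp v)}{\Theta}}\bigr)$ is an equivalence. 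Composing the three equivalences yields the claim. The main obstacle is the middle step: matching the specific definition of $\invsq$ with the specific coherence edges appearing in the octagon is a bookkeeping exercise, and the path inductions on $v$ and $w$ must be arranged so that every coherence collapses to reflexivity and only the comparison of $\Phi$ with $\Theta$ survives.
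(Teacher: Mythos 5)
The paper gives you nothing to compare against here: this proposition lives in the appendix preliminaries, whose proofs are explicitly omitted, so your argument has to stand on its own. Its overall architecture does stand: reading both composites as elements of $\id{u}{(w \ct z)\ct\opp{v}}$, noting that the right-hand column is by construction $c \ct \ap{\lambda q.\,q\ct\opp{v}}{\Theta}$, proving that the left-hand column equals $c \ct \ap{\lambda q.\,q\ct\opp{v}}{\Phi}$ with the same prefix $c$, and then composing the equivalences ``concatenation with a fixed path'' and ``$\mathsf{ap}$ of an equivalence acting on identity types'' is a sound route to the stated equivalence with $\id{\Phi}{\Theta}$, and the middle identity is indeed true.

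The gap is in how you propose to prove that middle identity. Path induction on $v$ and $w$ does \emph{not} collapse every coherence cell to reflexivity: after setting $v \defeq \refl{}{b}$ and $w \defeq \refl{}{a}$, the paths $u$ and $z$ remain arbitrary, so the unit-law edges $\id{u}{\refl{}{a}\ct u}$ and $\id{u}{u\ct\refl{}{b}}$ and the associativity edges are genuine non-trivial 2-cells (the theory only provides $\refl{}{a}\ct\refl{}{a}\equiv\refl{}{a}$ definitionally, not the unit laws at general paths). Moreover, at that stage $\Phi$ cannot be inducted away, since neither of its endpoints $u\ct\refl{}{b}$, $\refl{}{a}\ct z$ is a variable. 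So the advertised ``short direct calculation'' needs additional moves: for instance, generalize $z,\Phi$ and path-induct on $u$ as well, and eliminate $\Phi$ via contractibility of $\sm{z:a=b}\,\big(\id{u\ct\refl{}{b}}{\refl{}{a}\ct z}\big)$ (a based path space reindexed along the equivalence $\refl{}{a}\ct(-)$), or equivalently transport $\Phi$ along a unit law to a path with a free endpoint before inducting. Similarly, Prop.~\ref{thm_cong_eq} does not apply verbatim to strip the $w\ct(-)$ whiskering, because the outer unit and associativity edges of the left column are not themselves whiskered by $w\ct(-)$; some massaging is needed before it can be invoked. These repairs are routine and your factorization strategy survives them, but as literally described the crux step would not go through.
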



\subsection{\emph{Induction implies recursion}}\label{pf_ind_imp_rec}
Fix an algebra $(X,p,s) : \wsuspalg{\U_j}$ and assume that $\haswsuspind{\U_k}(X,p,s)$ holds. To show that $\haswsusprec{\U_k}(X,p,s)$ holds, fix any other algebra $(Y,q,r) : \wsuspalg{\U_k}$. In order to apply the induction principle, we need to turn this into a fibered algebra $(E,e,d)$. The first two components are easy: put $E \defeq \lam{\_:X} Y$ and $e \defeq q$. For the last component, we note that the transport between any two fibers of a constant type family is constant. We can thus define $d(b,a)$ to be the path
\begin{center}
\begin{tikzpicture}
\node (N0) at (0,0) {$\trans{\lam{\_:X} Y}{s(b,a)}(q(f\;b))$};
\node (N1) at (3.7,0) {$q(f\;b)$};
\node (N2) at (6.3,0) {$q(g\;b)$};
\draw[-] (N0) -- node[above]{} (N1);
\draw[-] (N1) -- node[above]{\footnotesize $r(b,a)$} (N2);
\end{tikzpicture}
\end{center}
The induction principle then gives us a map $h : X \to Y$ and a path family $\beta : \prd{c:C} (h(p(c)) = q(c))$ such that the following diagram commutes for any $b:B$, $a:B(a)$:

\begin{center}
\begin{tikzpicture}
\node (Nc) at (3.5,1) {=};
\node (N0) at (0,2) {$\trans{\lam{\_:X} Y}{s(b,a)}(h(p(f\;b)))$};
\node (N1) at (7,2) {$h(p(g\;b))$};
\node (N2) at (0,0) {$\trans{\lam{\_:X} Y}{s(b,a)}(q(f\;b))$};
\node (N3) at (4,0) {$q(f\;b)$};
\node (N4) at (7,0) {$q(g\;b)$};
\draw[-] (N0) -- node[above]{\footnotesize $\dap{h}{s(b,a)}$} (N1);
\draw[-] (N0) -- node[left]{\footnotesize $\ap{\trans{\lam{\_:X} Y}{s(b,a)}}{\beta(f\;b)}$} (N2);
\draw[-] (N2) -- node[above]{} (N3);
\draw[-] (N3) -- node[below]{\footnotesize $r(b,a)$} (N4);
\draw[-] (N1) -- node[right]{\footnotesize $\beta(g\;b)$} (N4);
\end{tikzpicture}
\end{center}
Using path induction we can express $\dap{h}{s(b,a)}$ equivalently as the path

\begin{center}
\begin{tikzpicture}
\node (N0) at (-0.1,0) {$\trans{\lam{\_:X} Y}{s(b,a)}(h(p(f\;b)))$};
\node (N1) at (4.2,0) {$h(p(f\;b))$};
\node (N2) at (8,0) {$h(p(g\;b))$};
\draw[-] (N0) -- node[above]{} (N1);
\draw[-] (N1) -- node[above]{\footnotesize $\ap{h}{s(b,a)}$} (N2);
\end{tikzpicture}
\end{center}
Thus the outer rectangle in the following diagram commutes:

\begin{center}
\begin{tikzpicture}
\node (NA) at (2.15,1) {\emph{A}};
\node (NB) at (6.1,1) {\emph{B}};
\node (N0) at (-0.1,2) {$\trans{\lam{\_:X} Y}{s(b,a)}(h(p(f\;b)))$};
\node (N1a) at (4.2,2) {$h(p(f\;b))$};
\node (N1) at (8,2) {$h(p(g\;b))$};
\node (N2) at (-0.1,0) {$\trans{\lam{\_:X} Y}{s(b,a)}(q(f\;b))$};
\node (N3) at (4.2,0) {$q(f\;b)$};
\node (N4) at (8,0) {$q(g\;b)$};
\draw[-] (N0) -- node[above]{} (N1a);
\draw[-] (N1a) -- node[above]{\footnotesize $\ap{h}{s(b,a)}$} (N1);
\draw[-] (N1a) -- node[left]{\footnotesize $\beta(f\;b)$} (N3);
\draw[-] (N0) -- node[left]{\footnotesize $\ap{\trans{\lam{\_:X} Y}{s(b,a)}}{\beta(f\;b)}$} (N2);
\draw[-] (N2) -- node[above]{} (N3);
\draw[-] (N3) -- node[below]{\footnotesize $r(b,a)$} (N4);
\draw[-] (N1) -- node[right]{\footnotesize $\beta(g\;b)$} (N4);
\end{tikzpicture}
\end{center}
Suitable path induction shows that rectangle \emph{A} commutes; hence rectangle \emph{B} commutes too and we are done.


\subsection{\emph{Induction implies uniqueness}}\label{pf_ind_imp_coh}
Fix an algebra $(X,p,s) : \wsuspalg{\U_j}$ and assume that $\haswsuspind{\U_k}(X,p,s)$ holds. We first show the dependent case, i.e., that $\hasinduniq{\U_k}(X,p,s)$ holds. Fix any fibered algebra $(E,e,d) : \wsuspfibalg{\U_k} \; (X,p,s)$ and homomorphisms
$(h,\gamma,\Theta), (i,\delta,\Phi) : \wsuspfibhom \; (X,p,s) \; (E,e,d)$. To construct a homotopy between $(h,\gamma,\Theta)$ and $(i,\delta,\Phi)$, we first need a homotopy $\alpha : h \sim i$. We can obtain $\alpha$ from the induction principle applied to a suitable fibered algebra of the form $(\lam{x:X} (h(x) = i(x)),e',d')$. The term $e'$ thus must be of the type $\prd{c:C}(h(p\;c) = i(p\;c))$. This is easy to get since we know how the maps $h$ and $i$ behave on constructors: we put $e'(c) \defeq \gamma(c) \ct \delta(c)^{-1}$. Finally, for $b:B$, $a:A(b)$, the term $d'(b,a)$ must be a path from $\trans{\lam{x:X} (h(x) = i(x))}{s(b,a)}(\gamma_{f(b)} \ct \delta_{f(b)}^{-1})$ to $\gamma_{g(b)} \ct \delta_{g(b)}^{-1}$. We note that for any $u : x =_X y$ and $v : h(x) = i(x)$, the transport $\trans{\lam{x:X} (h(x) = i(x))}{u}(v)$ can be expressed as $\opp{\dap{h}{u}} \ct (\ap{\trans{E}{u}}{v} \ct \dap{i}{u})$. We can thus define $d'(b,a)$ to be the path

\begin{center}
\begin{tikzpicture}
\node (N0) at (0,8) {$\trans{\lam{x:X} (h(x) = i(x))}{s(b,a)}(\gamma_{f(b)} \ct \delta_{f(b)}^{-1})$};
\node (N1) at (0,6.4) {$\opp{\dap{h}{s(b,a)}} \ct (\ap{\trans{E}{s(b,a)}}{\gamma_{f(b)} \ct \delta_{f(b)}^{-1}} \ct \dap{i}{s(b,a)})$};
\node (N2) at (0,4.8) {$\opp{\dap{h}{s(b,a)}} \ct (\dap{h}{s(b,a)} \ct (\gamma_{g(b)} \ct \delta_{g(b)}^{-1}))$};
\node (N3) at (0,3.2) {$(\opp{\dap{h}{s(b,a)}} \ct \dap{h}{s(b,a)}) \ct (\gamma_{g(b)} \ct \delta_{g(b)}^{-1})$};
\node (N4) at (0,1.6) {$\reflsym \ct (\gamma_{g(b)} \ct \delta_{g(b)}^{-1})$};
\node (N5) at (0,0) {$\gamma_{g(b)}) \ct \delta_{g(b)}^{-1}$};
\draw[-] (N0) -- node[above]{} (N1);
\draw[-] (N1) -- node[right]{\scriptsize via $\mathcal{H}$} (N2);
\draw[-] (N2) -- node[above]{} (N3);
\draw[-] (N3) -- node[above]{} (N4);
\draw[-] (N4) -- node[above]{} (N5);
\end{tikzpicture}
\end{center}
where $\mathcal{H}$ is the lower part of the diagram in Not.~\ref{nat_coh_dep}, i.e., the path
\begin{center}
\begin{tikzpicture}
\node (N1a) at (0,3.2) {$\ap{\trans{E}{s(b,a)}}{\gamma_{f(b)} \ct \delta^{-1}_{f(b)}} \ct \dap{i}{s(b,a)}$};
\node (N3)  at (0,1.6) {$(\ap{\trans{E}{s(b,a)}}{\gamma_{f(b)}} \ct \ap{\trans{E}{s(b,a)}}{\delta_{f(b)}}^{-1}) \ct \dap{i}{s(b,a)}\;\;\;\;\;\;\;\;\;\;\;\;\;\;\;\;\;\;\;\;$};
\node (N5)  at (0,0) {$\ap{\trans{E}{s(b,a)}}{\gamma_{f(b)}} \ct (\ap{\trans{E}{s(b,a)}}{\delta_{f(b)}}^{-1} \ct \dap{i}{s(b,a)})\;\;\;\;\;\;\;\;\;\;\;\;\;\;\;\;\;\;\;\;$};
\node (N7)  at (0,-1.6) {$\ap{\trans{E}{s(b,a)}}{\gamma_{f(b)}} \ct (d(b,a) \ct \delta^{-1}_{g(b)})$};

\node (N4)  at (8,1.6) {$\dap{h}{s(b,a)} \ct (\gamma_{g(b)} \ct \delta^{-1}_{g(b)})$};
\node (N6)  at (8,0) {$(\dap{h}{s(b,a)} \ct \gamma_{g(b)}) \ct \delta^{-1}_{g(b)}$};
\node (N8)  at (8,-1.6) {$(\ap{\trans{E}{s(b,a)}}{\gamma_{f(b)}} \ct d(b,a)) \ct \delta^{-1}_{g(b)}$};
\draw[-] (N1a) -- node[left]{\scriptsize} (N3);
\draw[-] (N3) -- node[above]{\footnotesize} (N5);
\draw[-] (N4) -- node[above]{\footnotesize} (N6);
\draw[-] (N5) -- node[left]{\scriptsize via $\invsq(\Phi(a,b))$} (N7);
\draw[-] (N6) -- node[right]{\scriptsize via $\Theta(a,b)$} (N8);
\draw[-] (N7) -- node[above]{\footnotesize} (N8);
\end{tikzpicture}
\end{center}
The induction principle then gives us $\alpha : h \sim i$ as desired; moreover, the first computation rule gives us a path family $\eta : \prd{c:C} (\alpha_{p(c)} = \gamma_c \ct \delta^{-1}_c)$. Thus $(\alpha, \lam{c:C} \invtri^{-1}(\eta_c)) : (h,\gamma,\Theta) \sim (i,\delta,\Phi)$. All that remains to show now is that the following diagram commutes for each $b:B$, $a:B(a)$:
\begin{center}
\begin{tikzpicture}
\node (N1)  at (0,6) {$\ap{\trans{E}{s(b,a)}}{\alpha_{p(f \; b)}} \ct \dap{i}{s(b,a)}$};
\node (N1a) at (0,4) {$\ap{\trans{E}{s(b,a)}}{\gamma_{f(b)} \ct \delta^{-1}_{f(b)}} \ct \dap{i}{s(b,a)}$};
\node (N2)  at (8,6) {$\dap{h}{s(b,a)} \ct \alpha_{p(g\;b)}$};
\node (N4)  at (8,4) {$\dap{h}{s(b,a)} \ct (\gamma_{g(b)} \ct \delta^{-1}_{g(b)})$};

\draw[-] (N1) -- node[above]{\scriptsize naturality of $\alpha$} (N2);
\draw[-] (N1) -- node[left]{\scriptsize via $\invtri(\invtri^{-1}(\eta_{f(b)}))$} (N1a);
\draw[-] (N2) -- node[right]{\scriptsize via $\invtri(\invtri^{-1}(\eta_{g(b)}))$} (N4);
\draw[-] (N1a) -- node[below]{\scriptsize $\mathcal{H}$} (N4);
\end{tikzpicture}
\end{center}     
The commutativity of the above diagram is equivalent to the commutativity of
\begin{center}
\begin{tikzpicture}
\node (N1)  at (0,6) {$\ap{\trans{E}{s(b,a)}}{\alpha_{p(f \; b)}} \ct \dap{i}{s(b,a)}$};
\node (N1a) at (0,4) {$\ap{\trans{E}{s(b,a)}}{\gamma_{f(b)} \ct \delta^{-1}_{f(b)}} \ct \dap{i}{s(b,a)}$};
\node (N2)  at (8,6) {$\dap{h}{s(b,a)} \ct \alpha_{p(g\;b)}$};
\node (N4)  at (8,4) {$\dap{h}{s(b,a)} \ct (\gamma_{g(b)} \ct \delta^{-1}_{g(b)})$};

\draw[-] (N1) -- node[above]{\scriptsize naturality of $\alpha$} (N2);
\draw[-] (N1) -- node[left]{\footnotesize {\scriptsize via} $\eta_{f(b)}$} (N1a);
\draw[-] (N2) -- node[right]{\footnotesize {\scriptsize via} $\eta_{g(b)}$} (N4);
\draw[-] (N1a) -- node[below]{\scriptsize $\mathcal{H}$} (N4);
\end{tikzpicture}
\end{center}     
To show this, we use the second computation rule, which tells us that the diagram below commutes for any $b:B$, $a:B(a)$:
\begin{center}
\begin{tikzpicture}
\node (N0a) at (4.5,5) {=};
\node (N0a) at (0,8) {$\trans{\lam{x:X} (h(x) = i(x))}{s(b,a)}(\alpha_{p(f\;b)})$};
\node (N5a) at (0,0) {$\alpha_{p(g\;b)}$};

\node (N0) at (9,8) {$\trans{\lam{x:X} (h(x) = i(x))}{s(b,a)}(\gamma_{f(b)} \ct \delta_{f(b)}^{-1})$};
\node (N1) at (9,6.4) {$\opp{\dap{h}{s(b,a)}} \ct (\ap{\trans{E}{s(b,a)}}{\gamma_{f(b)} \ct \delta_{f(b)}^{-1}} \ct \dap{i}{s(b,a)})$};
\node (N2) at (9,4.8) {$\opp{\dap{h}{s(b,a)}} \ct (\dap{h}{s(b,a)} \ct (\gamma_{g(b)} \ct \delta_{g(b)}^{-1}))$};
\node (N3) at (9,3.2) {$(\opp{\dap{h}{s(b,a)}} \ct \dap{h}{s(b,a)}) \ct (\gamma_{g(b)} \ct \delta_{g(b)}^{-1})$};
\node (N4) at (9,1.6) {$\reflsym \ct (\gamma_{g(b)} \ct \delta_{g(b)}^{-1})$};
\node (N5) at (9,0) {$\gamma_{g(b)} \ct \delta_{g(b)}^{-1}$};
\draw[-] (N0a) -- node[above]{{\scriptsize via} \footnotesize $\eta_{f(b)}$} (N0);
\draw[-] (N5a) -- node[below]{\footnotesize $\eta_{g(b)}$} (N5);
\draw[-] (N0a) -- node[left]{\footnotesize $\dap{\alpha}{s(b,a)}$} (N5a);

\draw[-] (N0) -- node[above]{} (N1);
\draw[-] (N1) -- node[right]{\scriptsize via $\mathcal{H}$} (N2);
\draw[-] (N2) -- node[above]{} (N3);
\draw[-] (N3) -- node[above]{} (N4);
\draw[-] (N4) -- node[above]{} (N5);
\end{tikzpicture}
\end{center}
We observe that for any $u : x =_X y$, we can express $\dap{\alpha}{u}$ as the path
\begin{center}
\begin{tikzpicture}
\node (N0) at (0,8) {$\trans{\lam{x:X} (h(x) = i(x))}{u}(\alpha_x)$};
\node (N1) at (0,6.4) {$\opp{\dap{h}{u}} \ct (\ap{\trans{E}{u}}{\alpha_x} \ct \dap{i}{u})$};
\node (N2) at (0,4.8) {$\opp{\dap{h}{u}} \ct (\dap{h}{u} \ct \alpha_y)$};
\node (N3) at (0,3.2) {$(\opp{\dap{h}{u}} \ct \dap{h}{u}) \ct \alpha_y$};
\node (N4) at (0,1.6) {$\reflsym \ct \alpha_y$};
\node (N5) at (0,0) {$\alpha_y$};
\draw[-] (N0) -- node[above]{} (N1);
\draw[-] (N1) -- node[right]{\scriptsize via naturality of $\alpha$} (N2);
\draw[-] (N2) -- node[above]{} (N3);
\draw[-] (N3) -- node[above]{} (N4);
\draw[-] (N4) -- node[above]{} (N5);
\end{tikzpicture}
\end{center}
Thus, the second computation rule may be expressed as saying that the outer parallellogram in the diagram below commutes:
\begin{center}
\begin{tikzpicture}
\node (NA) at (4,8) {\emph{A}};
\node (NA) at (4,4.5) {\emph{B}};
\node (NA) at (4,1.5) {\emph{C}};
\node (NA) at (4,-1.5) {\emph{D}};
\node (NA) at (4,-4.158) {\emph{E}};

\node (N0a) at (0,10) {$\trans{\lam{x:X} (h(x) = i(x))}{s(b,a)}(\alpha_{p(f\;b)})$};
\node (N1a) at (0,7) {\small $\opp{\dap{h}{s(b,a)}} \ct (\ap{\trans{E}{s(b,a)}}{\alpha_{p(f\;b)}} \ct \dap{i}{s(b,a)})$};
\node (N2a) at (0,4) {\small $\opp{\dap{h}{s(b,a)}} \ct (\dap{h}{s(b,a)} \ct \alpha_{p(g\;b)})$ \;\;\;\;\;\;\;\;\;\;\;\;\;\;};
\node (N3a) at (0,1) {\small $(\opp{\dap{h}{s(b,a)}} \ct \dap{h}{s(b,a)}) \ct \alpha_{p(g\;b)}$ \;\;\;\;\;\;\;\;\;\;\;\;\;\;};
\node (N4a) at (0,-2) {\small $\reflsym \ct \alpha_{p(g\;b)}$};
\node (N5a) at (0,-5) {$\alpha_{p(g\;b)}$};

\node (N0) at (7.7,8.5) {$\trans{\lam{x:X} (h(x) = i(x))}{s(b,a)}(\gamma_{f(b)} \ct \delta_{f(b)}^{-1})$};
\node (N1) at (7.7,5.5) {\small $\opp{\dap{h}{s(b,a)}} \ct (\ap{\trans{E}{s(b,a)}}{\gamma_{f(b)} \ct \delta_{f(b)}^{-1}} \ct \dap{i}{s(b,a)})$};
\node (N2) at (7.7,2.5) {\small $\opp{\dap{h}{s(b,a)}} \ct (\dap{h}{s(b,a)} \ct (\gamma_{g(b)} \ct \delta_{g(b)}^{-1}))$};
\node (N3) at (7.7,-0.5) {\small $(\opp{\dap{h}{s(b,a)}} \ct \dap{h}{s(b,a)}) \ct (\gamma_{g(b)} \ct \delta_{g(b)}^{-1})$};
\node (N4) at (7.7,-3.5) {\small $\reflsym \ct (\gamma_{g(b)} \ct \delta_{g(b)}^{-1})$};
\node (N5) at (7.7,-6.5) {$\gamma_{g(b)} \ct \delta_{g(b)}^{-1}$};
\draw[-] (N0a) -- node[above]{\;\;\;\;\;\;\;\;\;\;\;\;{\scriptsize via} \scriptsize $\eta_{f(b)}$} (N0);
\draw[-] (N1a) -- node[above]{\;\;\;\;\;\;\;\;\;\;\;\;{\scriptsize via} \scriptsize $\eta_{f(b)}$} (N1);
\draw[-] (N2a) -- node[above]{\;\;\;\;\;\;\;\;\;\;\;\;{\scriptsize via} \scriptsize $\eta_{g(b)}$} (N2);
\draw[-] (N3a) -- node[above]{\;\;\;\;\;\;\;\;\;\;\;\;{\scriptsize via} \scriptsize $\eta_{g(b)}$} (N3);
\draw[-] (N4a) -- node[above]{\;\;\;\;\;\;\;\;\;\;\;\;{\scriptsize via} \scriptsize $\eta_{g(b)}$} (N4);
\draw[-] (N5a) -- node[above]{\;\;\;\;\;\;\;\;\;\;\;\;{\scriptsize via} \scriptsize $\eta_{g(b)}$} (N5);
\draw[-] (N0a) -- node[above]{} (N1a);
\draw[-] (N1a) -- node[left]{\scriptsize via naturality of $\alpha$} (N2a);
\draw[-] (N2a) -- node[above]{} (N3a);
\draw[-] (N3a) -- node[above]{} (N4a);
\draw[-] (N4a) -- node[above]{} (N5a);

\draw[-] (N0) -- node[above]{} (N1);
\draw[-] (N1) -- node[right]{\scriptsize via $\mathcal{H}$} (N2);
\draw[-] (N2) -- node[above]{} (N3);
\draw[-] (N3) -- node[above]{} (N4);
\draw[-] (N4) -- node[above]{} (N5);
\end{tikzpicture}
\end{center}
We can easily show that the parallellograms \emph{A, C, D, E} commute. Thus \emph{B} commutes as well. By Prop.~\ref{thm_cong_eq} we conclude that the following diagram commutes
\begin{center}
\begin{tikzpicture}
\node (N1)  at (0,6) {$\ap{\trans{E}{s(b,a)}}{\alpha_{p(f \; b)}} \ct \dap{i}{s(b,a)}$};
\node (N1a) at (8,6) {$\ap{\trans{E}{s(b,a)}}{\gamma_{f(b)} \ct \delta^{-1}_{f(b)}} \ct \dap{i}{s(b,a)}$};
\node (N2)  at (0,4) {$\dap{h}{s(b,a)} \ct \alpha_{p(g\;b)}$};
\node (N4)  at (8,4) {$\dap{h}{s(b,a)} \ct (\gamma_{g(b)} \ct \delta^{-1}_{g(b)})$};

\draw[-] (N1) -- node[left]{\scriptsize naturality of $\alpha$} (N2);
\draw[-] (N1) -- node[above]{\footnotesize {\scriptsize via} $\eta_{f(b)}$} (N1a);
\draw[-] (N2) -- node[below]{\footnotesize {\scriptsize via} $\eta_{g(b)}$} (N4);
\draw[-] (N1a) -- node[right]{\scriptsize $\mathcal{H}$} (N4);
\end{tikzpicture}
\end{center}
which is precisely what we wanted to show. \bigskip

The non-dependent case, i.e., showing $\hasuniq{\U_k}(X,p,s)$, proceeds by an entirely analogous argument, further simplified by the fact that we no longer need to transport along the fibers of the codomain type $E$. 

\emph{Remark:} With some effort, we could obtain the non-dependent case from the result we have just proved. However, due to the presence of superfluous transports, it is much simpler to establish the non-dependent result directly, following the same methodology.


\subsection{\emph{Characterizing the path space of homomorphisms}}\label{pf_two_cell_path}
We first cover the dependent case: for any fibered homomorphisms $\mu,\nu$, the path space $\mu = \nu$ is equivalent to $\twofibcell \; \mu \; \nu$. To show this, fix an algebra $(X,p,s) : \wsuspalg{\U_j}$ and a fibered algebra $(E,e,d) : \wsuspfibalg{\U_k} \; (X,p,s)$. A homomorphism from $(X,p,s)$ to $(E,e,d)$ is thus a triple $(h,\gamma,\Theta)$ as given in Def.~\ref{def_fibhom}. In this section, it will be more useful for us to consider the representation $((h,\gamma),\Theta)$ instead, i.e., associated to the left rather than to the right. The pair $(h,\gamma) : \poinfun{C}{E}{p}{e}$ then represents a pointed function. For convenience, we also name the type of $\Theta$: we define a coherence condition on $\poinfun{C}{E}{p}{e}$ by
\[ \poinfuncoh{s}{d}(h,\gamma) \defeq \prd{b:B}\prd{a:A(b)} (\dap{h}{s(b,a)} \ct \gamma(g\;b) = \ap{\trans{E}{s(b,a)}}{\beta(f\;b)} \ct d(b,a)) \]
Homomorphisms from $(X,p,s)$ to $(E,e,d)$ are precisely those pointed maps satisfying the coherence condition:
\[ \wsuspfibhom^{\text{L}} \; (X,p,s) \; (E,e,d) \defeq \sm{(\theta: \poinfun{C}{E}{p}{e})} \poinfuncoh{s}{d}(\theta) \]
We likewise have the ``left-associated" versions of $\mu \sim \nu$, $\wsuspfibcoh$, and $\twofibcell$:
\begin{align*}
(\theta,\Theta) \sim_{\text{L}}(\phi,\Phi) & \defeq \theta \sim \phi \\
\wsuspfibcoh^\text{L} \; b \; a \; ((h,\gamma),\Theta) \; ((i,\delta),\Phi) \; \mathfrak{p} & \defeq \wsuspfibcoh \; b \; a \; (h,\gamma,\Theta) \; (i,\delta,\Phi) \; \mathfrak{p} \\
\twofibcell^\text{L} \; \mu \; \nu & \defeq \sm{(\mathfrak{p} : \mu \sim_{\text{L}} \nu)} \prd{b:B} \prd{a:A(b)} \wsuspfibcoh^{\text{L}} \; b \; a \; \mu \; \nu \; \mathfrak{p}
\end{align*}
It now suffices to show that for any homomorphisms $\mu,\nu : \wsuspfibhom^{\text{L}} \; (X,p,s) \; (E,e,d)$, we have $(\mu = \nu) \simeq \twofibcell^\text{L} \; \mu \; \nu$. Fix two such homomorphisms $(\theta,\Theta)$ and $(\phi,\Phi)$. We have
\begin{align}\label{eq}
((\theta,\Theta) = (\phi,\Phi)) \simeq \sm{(\mathfrak{p} : \theta = \phi)} (\Theta = \transc{\poinfuncoh{s}{d}}{\mathfrak{p}}(\Phi))
\end{align}
It is easy to see that the path space $\theta = \phi$ is equivalent to the space of homotopies $\theta \sim \phi$: for any $(h,\gamma)$ and $(i,\delta)$ we have the chain of equivalences
\begin{alignat*}{4}
& (h,\gamma) = (i,\delta) & \; \; \; \simeq \\
& \sm{(\alpha : h = i)} (\gamma = \transc{\lam{j} \prd{c:C} (j(p(c)) = e(c))}{\alpha} \delta) & \; \; \; \simeq \\
& \sm{(\alpha : h = i)} (\gamma = \lam{c:C} \happly{h}{i}(\alpha)(p(c)) \ct \delta(c)) & \; \; \; \simeq \\
& \sm{(\alpha : h = i)}\prd{c:C} (\gamma(c) = \happly{h}{i}(\alpha)(p(c)) \ct \delta(c)) & \; \; \; \simeq \\
& \sm{(\alpha : h \sim i)} \prd{c:C} (\gamma(c) = \alpha(p(c)) \ct \delta(c)) & \; \; \; \equiv \\
& (h,\gamma) \sim (i,\delta) &
\end{alignat*}
Let $\mathcal{P}_{\theta,\phi} : (\theta = \phi) \to (\theta \sim \phi)$ denote the composition of these equivalences. We now show that for any $\theta,\phi,\mathfrak{p} : \theta = \phi$, $\Theta$, $\Phi$, we have
\begin{align*} (\Theta = \transc{\poinfuncoh{s}{d}}{\mathfrak{p}}(\Phi)) \simeq \prd{b:B}\prd{a:A} \wsuspfibcoh^{\text{L}} \; b \; a \; (\theta,\Theta) \; (\phi,\Phi) \; \mathcal{P}_{\theta,\phi}(\mathfrak{p}) \tag{$\star$} \end{align*}
We proceed by path induction on $\mathfrak{p}$. We thus need to show that for any $\theta$, $\Theta$, $\Phi$, we have
\[ (\Theta = \Phi) \simeq \prd{b:B}\prd{a:A} \wsuspfibcoh^{\text{L}} \; b \; a \; (\theta,\Theta) \; (\theta,\Phi) \; \mathcal{P}_{\theta,\theta}(\refl{}{\theta}) \]
It suffices to show that for any $h,\gamma,b,a$, we have
\[ (\Theta(b,a) = \Phi(b,a)) \simeq \wsuspfibcoh^{\text{L}} \; b \; a \; ((h,\gamma),\Theta) \; ((h,\gamma),\Phi) \; \mathcal{P}_{(h,\gamma),(h,\gamma)}(\refl{}{h,\gamma}) \]
It is easy to show that $\mathcal{P}_{(h,\gamma),(h,\gamma)}(\refl{}{h,\gamma}) = (\alpha, \eta)$, where $\alpha$ is the identity homotopy on $h$ and $\eta$ assigns to each $c:C$ the path
\begin{center}
\begin{tikzpicture}
\node (N0) at (0,0) {$\gamma(c)$};
\node (N1) at (2,0) {$\reflsym \ct \gamma(c)$};
\draw[-] (N0) -- node[above]{} (N1);
\end{tikzpicture}
\end{center}
We thus need to show that the path space $\Theta(b,a) = \Phi(b,a)$ is equivalent to the commutativity of the following diagram:
\begin{center}
\begin{tikzpicture}
\node (N1)  at (0,4.8) {$\reflsym \ct \dap{h}{s(b,a)}$};
\node (N1a) at (0,3.2) {$\ap{\trans{E}{s(b,a)}}{\gamma_{f(b)} \ct \gamma^{-1}_{f(b)}} \ct \dap{h}{s(b,a)}$};
\node (N3)  at (0,1.6) {$(\ap{\trans{E}{s(b,a)}}{\gamma_{f(b)}} \ct \ap{\trans{E}{s(b,a)}}{\gamma_{f(b)}}^{-1}) \ct \dap{h}{s(b,a)}\;\;\;\;\;\;\;\;\;\;\;\;\;\;\;\;\;\;\;\;$};
\node (N5)  at (0,0) {$\ap{\trans{E}{s(b,a)}}{\gamma_{f(b)}} \ct (\ap{\trans{E}{s(b,a)}}{\gamma_{f(b)}}^{-1} \ct \dap{h}{s(b,a)})\;\;\;\;\;\;\;\;\;\;\;\;\;\;\;\;\;\;\;\;$};
\node (N7)  at (0,-1.6) {$\ap{\trans{E}{s(b,a)}}{\gamma_{f(b)}} \ct (d(b,a) \ct \gamma^{-1}_{g(b)})$};

\node (N2)  at (8,4.8) {$\dap{h}{s(b,a)} \ct \reflsym$};
\node (N4)  at (8,1.6) {$\dap{h}{s(b,a)} \ct (\gamma_{g(b)} \ct \gamma^{-1}_{g(b)})$};
\node (N6)  at (8,0) {$(\dap{h}{s(b,a)} \ct \gamma_{g(b)}) \ct \gamma^{-1}_{g(b)}$};
\node (N8)  at (8,-1.6) {$(\ap{\trans{E}{s(b,a)}}{\gamma_{f(b)}} \ct d(b,a)) \ct \gamma^{-1}_{g(b)}$};
\draw[-] (N1) -- node[above]{\scriptsize \emph{naturality of} $\alpha$} (N2);
\draw[-] (N1) -- node[left]{\scriptsize \emph{via} $\invtri(\eta_{f(b)})$} (N1a);
\draw[-] (N1a) -- node[left]{\scriptsize} (N3);
\draw[-] (N2) -- node[right]{\scriptsize \emph{via} $\invtri(\eta_{g(b)})$} (N4);
\draw[-] (N3) -- node[above]{\footnotesize} (N5);
\draw[-] (N4) -- node[above]{\footnotesize} (N6);
\draw[-] (N5) -- node[left]{\scriptsize \emph{via} $\invsq(\Phi(a,b))$} (N7);
\draw[-] (N6) -- node[right]{\scriptsize \emph{via} $\Theta(a,b)$} (N8);
\draw[-] (N7) -- node[above]{\footnotesize} (N8);
\end{tikzpicture}
\end{center}
Expressing $\invtri(\eta_{f(b)})$, $\invtri(\eta_{g(b)})$, and the naturality of $\alpha$ directly yields the diagram:
\begin{center}
\begin{tikzpicture}
\node (N0)  at (4,6.4) {$\dap{h}{s(b,a)}$};
\node (N1)  at (0,4.8) {$\reflsym \ct \dap{h}{s(b,a)}$};
\node (N1a) at (0,3.2) {$\ap{\trans{E}{s(b,a)}}{\gamma_{f(b)} \ct \gamma^{-1}_{f(b)}} \ct \dap{h}{s(b,a)}$};
\node (N3)  at (0,1.6) {$(\ap{\trans{E}{s(b,a)}}{\gamma_{f(b)}} \ct \ap{\trans{E}{s(b,a)}}{\gamma_{f(b)}}^{-1}) \ct \dap{h}{s(b,a)}\;\;\;\;\;\;\;\;\;\;\;\;\;\;\;\;\;\;\;\;$};
\node (N5)  at (0,0) {$\ap{\trans{E}{s(b,a)}}{\gamma_{f(b)}} \ct (\ap{\trans{E}{s(b,a)}}{\gamma_{f(b)}}^{-1} \ct \dap{h}{s(b,a)})\;\;\;\;\;\;\;\;\;\;\;\;\;\;\;\;\;\;\;\;$};
\node (N7)  at (0,-1.6) {$\ap{\trans{E}{s(b,a)}}{\gamma_{f(b)}} \ct (d(b,a) \ct \gamma^{-1}_{g(b)})$};

\node (N2)  at (8,4.8) {$\dap{h}{s(b,a)} \ct \reflsym$};
\node (N4)  at (8,1.6) {$\dap{h}{s(b,a)} \ct (\gamma_{g(b)} \ct \gamma^{-1}_{g(b)})$};
\node (N6)  at (8,0) {$(\dap{h}{s(b,a)} \ct \gamma_{g(b)}) \ct \gamma^{-1}_{g(b)}$};
\node (N8)  at (8,-1.6) {$(\ap{\trans{E}{s(b,a)}}{\gamma_{f(b)}} \ct d(b,a)) \ct \gamma^{-1}_{g(b)}$};
\draw[red,-] (N0) -- node[left]{\scriptsize} (N1);
\draw[red,-] (N0) -- node[left]{\scriptsize} (N2);
\draw[red,-] (N1) -- node[left]{\scriptsize} (N1a);
\draw[-] (N1a) -- node[left]{\scriptsize} (N3);
\draw[red,-] (N2) -- node[right]{\scriptsize} (N4);
\draw[-] (N3) -- node[above]{\footnotesize} (N5);
\draw[-] (N4) -- node[above]{\footnotesize} (N6);
\draw[-] (N5) -- node[left]{\scriptsize \emph{via} $\invsq(\Phi(a,b))$} (N7);
\draw[-] (N6) -- node[right]{\scriptsize \emph{via} $\Theta(a,b)$} (N8);
\draw[-] (N7) -- node[above]{\footnotesize} (N8);
\end{tikzpicture}
\end{center}
After some simplification we get the diagram
\begin{center}
\begin{tikzpicture}
\node (N0)  at (4,4.8) {$\dap{h}{s(b,a)}$};
\node (N1)  at (0,3.2) {$\reflsym \ct \dap{h}{s(b,a)}$};
\node (N3)  at (0,1.6) {$(\ap{\trans{E}{s(b,a)}}{\gamma_{f(b)}} \ct \ap{\trans{E}{s(b,a)}}{\gamma_{f(b)}}^{-1}) \ct \dap{h}{s(b,a)}\;\;\;\;\;\;\;\;\;\;\;\;\;\;\;\;\;\;\;\;$};
\node (N5)  at (0,0) {$\ap{\trans{E}{s(b,a)}}{\gamma_{f(b)}} \ct (\ap{\trans{E}{s(b,a)}}{\gamma_{f(b)}}^{-1} \ct \dap{h}{s(b,a)})\;\;\;\;\;\;\;\;\;\;\;\;\;\;\;\;\;\;\;\;$};
\node (N7)  at (0,-1.6) {$\ap{\trans{E}{s(b,a)}}{\gamma_{f(b)}} \ct (d(b,a) \ct \gamma^{-1}_{g(b)})$};

\node (N2)  at (8,3.2) {$\dap{h}{s(b,a)} \ct \reflsym$};
\node (N4)  at (8,1.6) {$\dap{h}{s(b,a)} \ct (\gamma_{g(b)} \ct \gamma^{-1}_{g(b)})$};
\node (N6)  at (8,0) {$(\dap{h}{s(b,a)} \ct \gamma_{g(b)}) \ct \gamma^{-1}_{g(b)}$};
\node (N8)  at (8,-1.6) {$(\ap{\trans{E}{s(b,a)}}{\gamma_{f(b)}} \ct d(b,a)) \ct \gamma^{-1}_{g(b)}$};
\draw[-] (N0) -- node[left]{\scriptsize} (N1);
\draw[-] (N0) -- node[left]{\scriptsize} (N2);
\draw[red,-] (N1) -- node[left]{\scriptsize} (N3);
\draw[-] (N2) -- node[right]{\scriptsize} (N4);
\draw[-] (N3) -- node[above]{\footnotesize} (N5);
\draw[-] (N4) -- node[above]{\footnotesize} (N6);
\draw[-] (N5) -- node[left]{\scriptsize \emph{via} $\invsq(\Phi(a,b))$} (N7);
\draw[-] (N6) -- node[right]{\scriptsize \emph{via} $\Theta(a,b)$} (N8);
\draw[-] (N7) -- node[above]{\footnotesize} (N8);
\end{tikzpicture}
\end{center}
By Prop.~\ref{thm_diag_rev}, the commutativity of this diagram is equivalent to the path space $\Phi(b,a) = \Theta(b,a)$, which is of course equivalent to $\Theta(b,a) = \Phi(b,a)$. This proves the claim $(\star)$. \medskip

We thus have
\begin{align*}
\big(\sm{(\mathfrak{p} : \theta = \phi)} (\Theta = \transc{\poinfuncoh{s}{d}}{\mathfrak{p}}(\Phi))\big) & \simeq \sm{(\mathfrak{p} : \theta = \phi)} \prd{b:B}\prd{a:A} \wsuspfibcoh^{\text{L}} \; b \; a \; (\theta,\Theta) \; (\phi,\Phi) \; \mathcal{P}_{\theta,\phi}(\mathfrak{p}) \\
& \simeq \sm{(\mathfrak{p} : \theta \sim \phi)} \prd{b:B}\prd{a:A} \wsuspfibcoh^{\text{L}} \; b \; a \; (\theta,\Theta) \; (\phi,\Phi) \; \mathfrak{p} \\
& \equiv \twofibcell^\text{L} \; (\theta,\Theta) \; (\phi,\Phi)
\end{align*}
which together with \ref{eq} finishes the proof. \bigskip

The non-dependent case, i.e., showing that $(\mu = \nu) \simeq \twocell \; \mu \; \nu$ for any $\mu,\nu$, follows by an entirely analogous argument.


\subsection{\emph{Recursion plus uniqueness imply induction}}\label{pf_rec_coh_imp_ind}
Fix $(X,p,s) : \wsuspalg{\U_j}$. Assume that $\haswsusprec{\U_k}(X,p,s)$ and $\hasuniq{\U_k}(X,p,s)$ hold. To show that $\haswsuspind{\U_k}(X,p,s)$ holds, fix any $(E,e,d) : \wsuspfibalg{\U_k}\; (X,p,s)$. In order to apply the recursion principle, we need to turn this into a non-fibered algebra $(Y,q,r)$. The first two components are easy: we put $Y \defeq \sm{x:X}E(x)$ and $q \defeq \lam{c:C} (p(c), e(c))$. We note that since $X : \U_j$, $E : X \to \U_k$, and $j \leq k$, we indeed have $\sm{x:X}E(x) : \U_k$ as needed. Finally,
we note that by Prop.~\ref{thm_pair_equiv} there is a function $\dpath{c}{d} : \big(\sm{(p :\fst{c} = \fst{d})} (\trans{B}{p}(\snd{c}) = \snd{d})\big) \to (c = d)$ for any $c,d$, which forms a quasi-equivalence with $\dpair{c}{d}$. We can thus define $r(b,a)$ to be the path
\begin{center}
\begin{tikzpicture}
\node (N0) at (0,0) {$(p(f\;b), e(f\;b))$};
\node (N1) at (7.5,0) {$(p(g\;b), e(g\;b))$};
\draw[-] (N0) -- node[above]{\footnotesize $\dpathsym(s(b,a),d(b,a))$} (N1);
\end{tikzpicture}
\end{center}
where the subscripts to $\dpathsym$ are omitted. The recursion principle thus gives us a function $u : X \to \sm{x:X} E(x)$. We now want to construct a homotopy $\alpha : \mathsf{fst} \circ u \sim \idfun{X}$. We can obtain $\alpha$ from the uniqueness principle applied to the algebra $(X,p,s)$ itself and homomorphisms of the form $(\fstsym \circ u, \gamma, \Theta)$, $(\idfun{X}, \delta, \Phi)$. Finding suitable $\delta$ and $\Phi$ is easy: we let $\delta(c) \defeq \refl{X}{p(c)}$ and $\Phi(b,a)$ to be the path
\begin{center}
\begin{tikzpicture}
\node (N0)  at (0,4.5) {$\ap{\idfun{X}}{s(b,a)} \ct \reflsym$};
\node (N1)  at (0,3) {$\ap{\idfun{X}}{s(b,a)}$};
\node (N2)  at (0,1.5) {$s(b,a)$};
\node (N3)  at (0,0) {$\reflsym \ct s(b,a)$};

\draw[-] (N0) -- node[left]{\scriptsize} (N1);
\draw[-] (N1) -- node[left]{\scriptsize} (N2);
\draw[-] (N2) -- node[right]{\scriptsize} (N3);
\end{tikzpicture}
\end{center}
The path family $\gamma$ should assign to each $c:C$ a path from $\fst{u(p(c))}$ to $p(c)$. The first computation rule for $u$ gives us a path family $\beta : \prd{c:C} (u(p(c)) = (p(c),e(c)))$. We can thus define $\gamma(c) \defeq \fst{\dpairsym(\beta(c))}$. Before we define $\Theta$, we make a few general observations that will be useful later on: \medskip

Let $l,m,n : \sm{x:X} E(x)$. Then:
\begin{enumerate}
\item For any $\epsilon : m = n$ we have $\fst{\dpairsym(\epsilon)} = \ap{\fstsym}{\epsilon}$.
\item\label{obs_tri} For any $y,z : \sm{(p : \fst{m} = \fst{n})}(\trans{E}{p}(\snd{m}) = \snd{n})$ and $\epsilon : y = z$, the following diagram commutes:
\begin{center}
\begin{tikzpicture}
\node (Na)  at (2.5,1.25) {=};
\node (N0)  at (0,2.5) {$\trans{E}{\fst{y}}(\snd{m})$};
\node (N1)  at (5,2.5) {$\trans{E}{\fst{z}}(\snd{m})$};
\node (N2)  at (2.5,0) {$\snd{n}$};

\draw[-] (N0) -- node[above]{\scriptsize via \footnotesize $\epsilon$} (N1);
\draw[-] (N0) -- node[left]{\footnotesize $\snd{y}\;$} (N2);
\draw[-] (N1) -- node[right]{\footnotesize $\;\snd{z}$} (N2);
\end{tikzpicture}
\end{center}
\item\label{obs_exp} For any $y : l = m$ and $z : m = n$, the following diagram commutes:
\begin{center}
\begin{tikzpicture}
\node (Na)  at (4,4) {=};
\node (N0)  at (0,8) {$\trans{E}{\ap{\fstsym}{y \ct z}}(\snd{l})$};
\node (N1)  at (0,0) {$\trans{E}{\fst{\dpairsym(y \ct z)}}(\snd{l})$};
\node (N2)  at (8,0) {$\snd{n}$};
\node (N3)  at (8,8) {$\trans{E}{(\ap{\fstsym}{y} \ct \ap{\fstsym}{z})}(\snd{l})$};
\node (N4)  at (8,6.4) {$\trans{E}{\ap{\fstsym}{z}}(\trans{E}{\ap{\fstsym}{y}}(\snd{l}))$};
\node (N5)  at (8,4.8) {$\trans{E}{\ap{\fstsym}{z}}(\trans{E}{\fst{\dpairsym(y)}}(\snd{l}))$};
\node (N6)  at (8,3.2) {$\trans{E}{\ap{\fstsym}{z}}(\snd{m})$};
\node (N7)  at (8,1.6) {$\trans{E}{\fst{\dpairsym(z)}}(\snd{m})$};

\draw[-] (N0) -- node[left]{} (N1);
\draw[-] (N0) -- node[left]{} (N3);
\draw[-] (N1) -- node[below]{\footnotesize $\snd{\dpairsym(y \ct z)}$} (N2);
\draw[-] (N3) -- node[left]{} (N4);
\draw[-] (N4) -- node[left]{} (N5);
\draw[-] (N5) -- node[right]{\scriptsize via \footnotesize $\snd{\dpairsym(y)}$} (N6);
\draw[-] (N6) -- node[right]{} (N7);
\draw[-] (N7) -- node[right]{\footnotesize $\snd{\dpairsym(z)}$} (N2);
\end{tikzpicture}
\end{center}
\end{enumerate}
\medskip

By the second computation rule for $u$, the following diagram commutes for each $b,a$:
\begin{center}
\begin{tikzpicture}
\node (Naa)  at (3.75,1) {$(\star \star)$};
\node (N0)  at (0,2) {$u(p(f\;b))$};
\node (N1)  at (7.5,2) {$u(p(g\;b))$};
\node (N2)  at (0,0) {$(p(f\;b),e(f\;b))$};
\node (N3)  at (7.5,0) {$(p(g\;b),e(g\;b))$};

\draw[-] (N0) -- node[above]{\footnotesize $\ap{u}{s(b,a)}$} (N1);
\draw[-] (N0) -- node[left]{\footnotesize $\beta(f(b))$} (N2);
\draw[-] (N1) -- node[right]{\footnotesize $\beta(g(b))$} (N3);
\draw[-] (N2) -- node[below]{\footnotesize $\dpathsym(s(b,a),d(b,a))$} (N3);
\end{tikzpicture}
\end{center}
We now define $\Theta(b,a)$ as the following path:
\begin{center}
\begin{tikzpicture}
\node (N0)  at (0,9.6) {$\ap{\fstsym \circ u}{s(b,a)} \ct \fst{\dpairsym(\beta_{g(b)})}$};
\node (N1)  at (0,8)  {$\ap{\fstsym \circ u}{s(b,a)} \ct \ap{\fstsym}{\beta_{g(b)}}$ };
\node (N2)  at (0,6.4)  {$\ap{\fstsym}{\ap{u}{s(b,a)}} \ct \ap{\fstsym}{\beta_{g(b)}}$};
\node (N3)  at (0,4.8)  {$\ap{\fstsym}{\ap{u}{s(b,a)} \ct \beta_{g(b)}}$};
\node (N4)  at (0,3.2)  {$\ap{\fstsym}{\beta_{f(b)} \ct \dpathsym(s(b,a), d(b,a))}$};
\node (N5)  at (0,1.6)  {$\ap{\fstsym}{\beta_{f(b)}} \ct \ap{\fstsym}{\dpathsym(s(b,a), d(b,a))}$};
\node (N6)  at (0,0)    {$\ap{\fstsym}{\beta_{f(b)}} \ct \fst{\dpairsym(\dpathsym(s(b,a), d(b,a)))}$};
\node (N7)  at (0,-1.6) {$\ap{\fstsym}{\beta_{f(b)}} \ct s(b,a)$};
\node (N8)  at (0,-3.2) {$\fst{\dpairsym(\beta_{f(b)})} \ct s(b,a)$};

\draw[-] (N0) -- node[left]{\scriptsize} (N1);
\draw[-] (N1) -- node[left]{\scriptsize} (N2);
\draw[-] (N2) -- node[right]{\scriptsize} (N3);
\draw[-] (N3) -- node[right]{\scriptsize via $(\star\star)$} (N4);
\draw[-] (N4) -- node[right]{\scriptsize} (N5);
\draw[-] (N5) -- node[right]{\scriptsize} (N6);
\draw[-] (N6) -- node[right]{\scriptsize} (N7);
\draw[-] (N7) -- node[right]{\scriptsize} (N8);
\end{tikzpicture}
\end{center}
The uniqueness rule thus gives us the desired homotopy $\alpha : \fstsym \circ u \sim \idfun{X}$ together with a path family $\eta : \prd{c:C} \big(\fst{\dpairsym(\beta_c)} = \alpha_{p(c)} \ct \reflsym\big)$. \medskip

We can now define the inductor $h(x) \defeq \trans{E}{(\alpha_x)}(\snd{u_x})$. To establish the first computation rule, we need a path family $\beta_D$ assigning to each $c:C$ a path from $\trans{E}{(\alpha_{p(c)})}(\snd{u_{p(c)}})$ to $e(c)$. This is relatively easy: we let $\beta_D(c)$ be the path
\begin{center}
\begin{tikzpicture}
\node (N0)  at (0,4.2) {$\trans{E}{(\alpha_{p(c)})}(\snd{u_{p(c)}})$};
\node (N1)  at (0,2.8) {$\trans{E}{(\alpha_{p(c)} \ct \reflsym)}(\snd{u_{p(c)}})$};
\node (N2)  at (0,1.4) {$\trans{E}{\fst{\dpairsym(\beta_c)}}(\snd{u_{p(c)}})$};
\node (N3)  at (0,0) {$e(c)$};

\draw[-] (N0) -- node[right]{} (N1);
\draw[-] (N1) -- node[right]{\scriptsize via \footnotesize $\eta_c$} (N2);
\draw[-] (N2) -- node[right]{\footnotesize $\snd{\dpairsym(\beta_c)}$} (N3);
\end{tikzpicture}
\end{center}
To establish the second computation rule, we need to show that the following diagram commutes for each $b,a$:
\begin{center}
\begin{tikzpicture}
\node (N0)  at (0,2) {$\trans{E}{s(b,a)}(\trans{E}{(\alpha_{p(f\;b)})}(\snd{u_{p(f\;b)}}))$};
\node (N1)  at (9.5,2) {$\trans{E}{(\alpha_{p(g\;b)})}(\snd{u_{p(g\;b)}})$};
\node (N2)  at (0,0) {$\trans{E}{s(b,a)}(e(f\;b))$};
\node (N3)  at (9.5,0) {$e(g\;b)$};

\draw[-] (N0) -- node[above]{\footnotesize $\dap{\trans{E}{\alpha(-)}(\snd{u(-)})}{s(b,a)}$} (N1);
\draw[-] (N1) -- node[right]{\footnotesize $\beta_D(g\;b)$} (N3);
\draw[-] (N0) -- node[left]{\footnotesize $\ap{\trans{E}{s(b,a)}}{\beta_D(f\;b)}$} (N2);
\draw[-] (N2) -- node[below]{\footnotesize $d(b,a)\;\;$} (N3);
\end{tikzpicture}
\end{center}
This requires a significant amount of work and will be done in 3 parts. Part I and II simplify each of the respective paths around the above diagram; part III then shows these paths are equal.

\paragraph{Part I} We first simplify the path $\ap{\trans{E}{s(b,a)}}{\beta_D(f\;b)} \ct d(b,a)$. Expanding, we get
\begin{center}
\begin{tikzpicture}
\node (N0)  at (0,4.8) {$\trans{E}{s(b,a)}(\trans{E}{(\alpha_{p(f\;b)})}(\snd{u_{p(f\;b)}}))$};
\node (N1)  at (0,3.2) {$\trans{E}{s(b,a)}(\trans{E}{(\alpha_{p(f\;b)} \ct \reflsym)}(\snd{u_{p(f\;b)}}))$};
\node (N2)  at (0,1.6) {$\trans{E}{s(b,a)}(\trans{E}{\fst{\dpairsym(\beta_{f(b)})}}(\snd{u_{p(f\;b)}}))$};
\node (N3)  at (0,0) {$\trans{E}{s(b,a)}(e(f\;b))$};
\node (N4)  at (0,-1.6) {$e(g\;b)$};

\draw[-] (N0) -- node[right]{} (N1);
\draw[-] (N1) -- node[right]{\scriptsize via \footnotesize $\eta_{f(b)}$} (N2);
\draw[-] (N2) -- node[right]{\scriptsize via \footnotesize $\snd{\dpairsym(\beta_{f(b)})}$} (N3);
\draw[-] (N3) -- node[right]{\footnotesize $d(b,a)$} (N4);
\end{tikzpicture}
\end{center}
which can be further expanded to
\begin{center}
\begin{tikzpicture}
\node (N0)  at (0,4.8) {$\trans{E}{s(b,a)}(\trans{E}{(\alpha_{p(f\;b)})}(\snd{u_{p(f\;b)}}))$};
\node (N1)  at (0,3.2) {$\trans{E}{s(b,a)}(\trans{E}{(\alpha_{p(f\;b)} \ct \reflsym)}(\snd{u_{p(f\;b)}}))$};
\node (N2)  at (0,1.6) {$\trans{E}{s(b,a)}(\trans{E}{\fst{\dpairsym(\beta_{f(b)})}}(\snd{u_{p(f\;b)}}))$};
\node[red] (N2a)  at (0,0) {$\trans{E}{s(b,a)}(\trans{E}{\ap{\fstsym}{\beta_{f(b)}}}(\snd{u_{p(f\;b)}}))$};
\node (N2b)  at (0,-1.6) {$\trans{E}{s(b,a)}(\trans{E}{\fst{\dpairsym(\beta_{f(b)})}}(\snd{u_{p(f\;b)}}))$};
\node (N3)  at (0,-3.2) {$\trans{E}{s(b,a)}(e(f\;b))$};
\node (N4)  at (0,-4.8) {$e(g\;b)$};

\draw[-] (N0) -- node[right]{} (N1);
\draw[-] (N1) -- node[right]{\scriptsize via \footnotesize $\eta_{f(b)}$} (N2);
\draw[red,-] (N2) -- node[right]{} (N2a);
\draw[red,-] (N2a) -- node[right]{} (N2b);
\draw[-] (N2b) -- node[right]{\scriptsize via \footnotesize $\snd{\dpairsym(\beta_{f(b)})}$} (N3);
\draw[-] (N3) -- node[right]{\footnotesize $d(b,a)$} (N4);
\end{tikzpicture}
\end{center}
which is equal to
\begin{center}
\begin{tikzpicture}
\node (N0)  at (0,4.8) {$\trans{E}{s(b,a)}(\trans{E}{(\alpha_{p(f\;b)})}(\snd{u_{p(f\;b)}}))$};
\node[red] (N0a)  at (0,3.2) {$\trans{E}{(\alpha_{p(f\;b)} \ct s(b,a))}(\snd{u_{p(f\;b)}})$};
\node[red] (N1)  at (0,1.6) {$\trans{E}{((\alpha_{p(f\;b)} \ct \reflsym) \ct s(b,a))}(\snd{u_{p(f\;b)}})$};
\node[red] (N2)  at (0,0) {$\trans{E}{(\fst{\dpairsym(\beta_{f(b)})} \ct s(b,a))}(\snd{u_{p(f\;b)}})$};
\node[red] (N2a)  at (0,-1.6) {$\trans{E}{(\ap{\fstsym}{\beta_{f(b)}} \ct s(b,a))}(\snd{u_{p(f\;b)}})$};
\node (N2b)  at (0,-3.2) {$\trans{E}{s(b,a)}(\trans{E}{\ap{\fstsym}{\beta_{f(b)}}}(\snd{u_{p(f\;b)}}))$};
\node (N2c)  at (0,-4.8) {$\trans{E}{s(b,a)}(\trans{E}{\fst{\dpairsym(\beta_{f(b)})}}(\snd{u_{p(f\;b)}}))$};
\node (N3)  at (0,-6.4) {$\trans{E}{s(b,a)}(e(f\;b))$};
\node (N4)  at (0,-8) {$e(g\;b)$};

\draw[red,-] (N0) -- node[right]{} (N0a);
\draw[red,-] (N0a) -- node[right]{} (N1);
\draw[red,-] (N1) -- node[right]{\scriptsize via \footnotesize $\eta_{f(b)}$} (N2);
\draw[red,-] (N2) -- node[right]{} (N2a);
\draw[red,-] (N2a) -- node[right]{} (N2b);
\draw[-] (N2b) -- node[right]{} (N2c);
\draw[-] (N2c) -- node[right]{\scriptsize via \footnotesize $\snd{\dpairsym(\beta_{f(b)})}$} (N3);
\draw[-] (N3) -- node[right]{\footnotesize $d(b,a)$} (N4);
\end{tikzpicture}
\end{center}
By Obs.~\ref{obs_tri} the following diagram commutes:
\begin{center}
\begin{tikzpicture}
\node (Na)  at (4,1.75) {=};
\node (N0)  at (0,3.5) {$\trans{E}{\fst{\dpairsym(\dpathsym(s(b,a),d(b,a)))}}e(f\;b))$};
\node (N1)  at (8,3.5) {$\trans{E}{s(b,a)}(e(f\;b))$};
\node (N2)  at (4,0) {$e(g\;b)$};

\draw[-] (N0) -- node[above]{} (N1);
\draw[-] (N0) -- node[left]{\footnotesize $\snd{\dpairsym(\dpathsym(s(b,a),d(b,a)))}\;\;$} (N2);
\draw[-] (N1) -- node[right]{\footnotesize $\;\;d(b,a)$} (N2);
\end{tikzpicture}
\end{center}
The above path is thus equal to
\begin{center}
\begin{tikzpicture}
\node (N0)  at (0,4.8) {$\trans{E}{s(b,a)}(\trans{E}{(\alpha_{p(f\;b)})}(\snd{u_{p(f\;b)}}))$};
\node (N0a)  at (0,3.2) {$\trans{E}{(\alpha_{p(f\;b)} \ct s(b,a))}(\snd{u_{p(f\;b)}})$};
\node (N1)  at (0,1.6) {$\trans{E}{((\alpha_{p(f\;b)} \ct \reflsym) \ct s(b,a))}(\snd{u_{p(f\;b)}})$};
\node (N2)  at (0,0) {$\trans{E}{(\fst{\dpairsym(\beta_{f(b)})} \ct s(b,a))}(\snd{u_{p(f\;b)}})$};
\node (N2a)  at (0,-1.6) {$\trans{E}{(\ap{\fstsym}{\beta_{f(b)}} \ct s(b,a))}(\snd{u_{p(f\;b)}})$};
\node[red] (N2ab)  at (0,-3.2) {$\trans{E}{\big(\ap{\fstsym}{\beta_{f(b)}} \ct \fst{\dpairsym(\dpathsym(s(b,a),d(b,a))})\big)}(\snd{u_{p(f\;b)}})$};
\node[red] (N2ac)  at (0,-4.8) {$\trans{E}{\big(\ap{\fstsym}{\beta_{f(b)}} \ct \ap{\fstsym}{\dpathsym(s(b,a),d(b,a))}\big)}(\snd{u_{p(f\;b)}})$};
\node[red](N2b)  at (0,-6.4) {$\trans{E}{\ap{\fstsym}{\dpathsym(s(b,a),d(b,a))}}(\trans{E}{\ap{\fstsym}{\beta_{f(b)}}}(\snd{u_{p(f\;b)}}))$};
\node[red](N2c)  at (0,-8) {$\trans{E}{\ap{\fstsym}{\dpathsym(s(b,a),d(b,a))}}(\trans{E}{\fst{\dpairsym(\beta_{f(b)})}}(\snd{u_{p(f\;b)}}))$};
\node[red](N3)  at (0,-9.6) {$\trans{E}{\ap{\fstsym}{\dpathsym(s(b,a),d(b,a))}}(e(f\;b))$};
\node[red](N3a)  at (0,-11.2) {$\trans{E}{\fst{\dpairsym(\dpathsym(s(b,a),d(b,a)))}}(e(f\;b))$};
\node (N4)  at (0,-12.8) {$e(g\;b)$};

\draw[-] (N0) -- node[right]{} (N0a);
\draw[-] (N0a) -- node[right]{} (N1);
\draw[-] (N1) -- node[right]{\scriptsize via \footnotesize $\eta_{f(b)}$} (N2);
\draw[-] (N2) -- node[right]{} (N2a);
\draw[red,-] (N2a) -- node[right]{} (N2ab);
\draw[red,-] (N2ab) -- node[right]{} (N2ac);
\draw[red,-] (N2ac) -- node[right]{} (N2b);
\draw[red,-] (N2b) -- node[right]{} (N2c);
\draw[red,-] (N2c) -- node[right]{\scriptsize via \footnotesize $\snd{\dpairsym(\beta_{f(b)})}$} (N3);
\draw[red,-] (N3) -- node[right]{} (N3a);
\draw[red,-] (N3a) -- node[right]{\footnotesize $\snd{\dpairsym(\dpathsym(s(b,a),d(b,a)))}$} (N4);
\end{tikzpicture}
\end{center}
By Obs.~\ref{obs_exp} this is equal to
\begin{center}
\begin{tikzpicture}
\node (N0)  at (0,4.8) {$\trans{E}{s(b,a)}(\trans{E}{(\alpha_{p(f\;b)})}(\snd{u_{p(f\;b)}}))$};
\node (N0a)  at (0,3.2) {$\trans{E}{(\alpha_{p(f\;b)} \ct s(b,a))}(\snd{u_{p(f\;b)}})$};
\node (N1)  at (0,1.6) {$\trans{E}{((\alpha_{p(f\;b)} \ct \reflsym) \ct s(b,a))}(\snd{u_{p(f\;b)}})$};
\node (N2)  at (0,0) {$\trans{E}{(\fst{\dpairsym(\beta_{f(b)})} \ct s(b,a))}(\snd{u_{p(f\;b)}})$};
\node (N2a)  at (0,-1.6) {$\trans{E}{(\ap{\fstsym}{\beta_{f(b)}} \ct s(b,a))}(\snd{u_{p(f\;b)}})$};
\node (N2ab)  at (0,-3.2) {$\trans{E}{\big(\ap{\fstsym}{\beta_{f(b)}} \ct \fst{\dpairsym(\dpathsym(s(b,a),d(b,a)))}\big)}(\snd{u_{p(f\;b)}})$};
\node (N2ac)  at (0,-4.8) {$\trans{E}{\big(\ap{\fstsym}{\beta_{f(b)}} \ct \ap{\fstsym}{\dpathsym(s(b,a),d(b,a))}\big)}(\snd{u_{p(f\;b)}})$};
\node[red](N2b)  at (0,-6.4) {$\trans{E}{\ap{\fstsym}{\beta_{f(b)} \ct \dpathsym(s(b,a),d(b,a))}}(\snd{u_{p(f\;b)}})$};
\node[red](N2c)  at (0,-8) {$\trans{E}{\fst{\dpairsym(\beta_{f(b)} \ct \dpathsym(s(b,a),d(b,a)))}}(\snd{u_{p(f\;b)}})$};
\node (N4)  at (0,-9.6) {$e(g\;b)$};

\draw[-] (N0) -- node[right]{} (N0a);
\draw[-] (N0a) -- node[right]{} (N1);
\draw[-] (N1) -- node[right]{\scriptsize via \footnotesize $\eta_{f(b)}$} (N2);
\draw[-] (N2) -- node[right]{} (N2a);
\draw[-] (N2a) -- node[right]{} (N2ab);
\draw[-] (N2ab) -- node[right]{} (N2ac);
\draw[red,-] (N2ac) -- node[right]{} (N2b);
\draw[red,-] (N2b) -- node[right]{} (N2c);
\draw[red,-] (N2c) -- node[right]{\footnotesize $\snd{\dpairsym(\beta_{f(b)} \ct \dpathsym(s(b,a),d(b,a)))}$} (N4);
\end{tikzpicture}
\end{center}

\paragraph{Part II}
We now simplify the path $\dap{\trans{E}{\alpha(-)}(\snd{u(-)})}{s(b,a)} \ct \beta_D(g\;b)$. It is not hard to see that for any $\epsilon : x =_X y$, the path $\dap{\trans{E}{\alpha(-)}(\snd{u(-)})}{\epsilon}$ can be expressed explicitly as the path
\begin{center}
\begin{tikzpicture}
\node (N0)  at (0,4.8) {$\trans{E}{\epsilon}(\trans{E}{(\alpha_{x})}(\snd{u_{x}}))$};
\node (N0a)  at (0,3.2) {$\trans{E}{(\alpha_{x} \ct \epsilon)}(\snd{u_{x}})$};
\node (N1)  at (0,1.6) {$\trans{E}{(\alpha_{x} \ct \ap{\idfun{X}}{\epsilon})}(\snd{u_{x}})$};
\node (N2)  at (0,0) {$\trans{E}{(\ap{\fstsym \circ u}{\epsilon} \ct \alpha_{y})}(\snd{u_{x}})$};
\node (N2a)  at (0,-1.6) {$\trans{E}{(\ap{\fstsym}{\ap{u}{\epsilon}} \ct \alpha_{y})}(\snd{u_{x}})$};
\node (N2ab)  at (0,-3.2) {$\trans{E}{(\alpha_{y})}(\trans{E}{\ap{\fstsym}{\ap{u}{\epsilon}}}(\snd{u_{x}}))$};
\node (N2ac)  at (0,-4.8) {$\trans{E}{(\alpha_{y})}(\trans{E}{\fst{\dpairsym(\ap{u}{\epsilon})}}(\snd{u_{x}}))$};
\node (N2b)  at (0,-6.4) {$\trans{E}{(\alpha_{y})}(\snd{u_{y}})$};

\draw[-] (N0) -- node[right]{} (N0a);
\draw[-] (N0a) -- node[right]{} (N1);
\draw[-] (N1) -- node[right]{\scriptsize via naturality of \footnotesize $\alpha$} (N2);
\draw[-] (N2) -- node[right]{} (N2a);
\draw[-] (N2a) -- node[right]{} (N2ab);
\draw[-] (N2ab) -- node[right]{} (N2ac);
\draw[-] (N2ac) -- node[right]{\scriptsize via \footnotesize $\snd{\dpairsym(\ap{u}{\epsilon})}$} (N2b);
\end{tikzpicture}
\end{center}
The path $\dap{\trans{E}{\alpha(-)}(\snd{u(-)})}{s(b,a)} \ct \beta_D(g\;b)$ is thus equal to
\begin{center}
\begin{tikzpicture}
\node (N0)  at (0,4.8) {$\trans{E}{s(b,a)}(\trans{E}{(\alpha_{p(f\;b)})}(\snd{u_{p(f\;b)}}))$};
\node (N0a)  at (0,3.2) {$\trans{E}{(\alpha_{p(f\;b)} \ct s(b,a))}(\snd{u_{p(f\;b)}})$};
\node (N1)  at (0,1.6) {$\trans{E}{(\alpha_{p(f\;b)} \ct \ap{\idfun{X}}{s(b,a)})}(\snd{u_{p(f\;b)}})$};
\node (N2)  at (0,0) {$\trans{E}{(\ap{\fstsym \circ u}{s(b,a)} \ct \alpha_{p(g\;b)})}(\snd{u_{p(f\;b)}})$};
\node (N2a)  at (0,-1.6) {$\trans{E}{(\ap{\fstsym}{\ap{u}{s(b,a)}} \ct \alpha_{p(g\;b)})}(\snd{u_{p(f\;b)}})$};
\node (N2ab)  at (0,-3.2) {$\trans{E}{(\alpha_{p(g\;b)})}(\trans{E}{\ap{\fstsym}{\ap{u}{s(b,a)}}}(\snd{u_{p(f\;b)}}))$};
\node (N2ac)  at (0,-4.8) {$\trans{E}{(\alpha_{p(g\;b)})}(\trans{E}{\fst{\dpairsym(\ap{u}{s(b,a)})}}(\snd{u_{p(f\;b)}}))$};
\node (N2b)  at (0,-6.4) {$\trans{E}{(\alpha_{p(g\;b)})}(\snd{u_{p(g\;b)}})$};
\node (N3)  at (0,-8) {$\trans{E}{(\alpha_{p(g\;b)} \ct \reflsym)}(\snd{u_{p(g\;b)}})$};
\node (N4)  at (0,-9.6) {$\trans{E}{\fst{\dpairsym(\beta_{g(b)})}}(\snd{u_{p(g\;b)}})$};
\node (N5)  at (0,-11.2) {$e(g\;b)$};

\draw[-] (N0) -- node[right]{} (N0a);
\draw[-] (N0a) -- node[right]{} (N1);
\draw[-] (N1) -- node[right]{\scriptsize via naturality of \footnotesize $\alpha$} (N2);
\draw[-] (N2) -- node[right]{} (N2a);
\draw[-] (N2a) -- node[right]{} (N2ab);
\draw[-] (N2ab) -- node[right]{} (N2ac);
\draw[-] (N2ac) -- node[right]{\scriptsize via \footnotesize $\snd{\dpairsym(\ap{u}{s(b,a)})}$} (N2b);
\draw[-] (N2b) -- node[right]{} (N3);
\draw[-] (N3) -- node[right]{\scriptsize via \footnotesize $\eta_{g(b)}$} (N4);
\draw[-] (N4) -- node[right]{\footnotesize $\snd{\dpairsym(\beta_{g(b)})}$} (N5);
\end{tikzpicture}
\end{center}
Expanding further, we get
\begin{center}
\begin{tikzpicture}
\node (N0)  at (0,4.8) {$\trans{E}{s(b,a)}(\trans{E}{(\alpha_{p(f\;b)})}(\snd{u_{p(f\;b)}}))$};
\node (N0a)  at (0,3.2) {$\trans{E}{(\alpha_{p(f\;b)} \ct s(b,a))}(\snd{u_{p(f\;b)}})$};
\node (N1)  at (0,1.6) {$\trans{E}{(\alpha_{p(f\;b)} \ct \ap{\idfun{X}}{s(b,a)})}(\snd{u_{p(f\;b)}})$};
\node (N2)  at (0,0) {$\trans{E}{(\ap{\fstsym \circ u}{s(b,a)} \ct \alpha_{p(g\;b)})}(\snd{u_{p(f\;b)}})$};
\node (N2a)  at (0,-1.6) {$\trans{E}{(\ap{\fstsym}{\ap{u}{s(b,a)}} \ct \alpha_{p(g\;b)})}(\snd{u_{p(f\;b)}})$};
\node (N2ab)  at (0,-3.2) {$\trans{E}{(\alpha_{p(g\;b)})}(\trans{E}{\ap{\fstsym}{\ap{u}{s(b,a)}}}(\snd{u_{p(f\;b)}}))$};
\node (N2ac)  at (0,-4.8) {$\trans{E}{(\alpha_{p(g\;b)})}(\trans{E}{\fst{\dpairsym(\ap{u}{s(b,a)})}}(\snd{u_{p(f\;b)}}))$};
\node (N2b)  at (0,-6.4) {$\trans{E}{(\alpha_{p(g\;b)})}(\snd{u_{p(g\;b)}})$};
\node (N3)  at (0,-8) {$\trans{E}{(\alpha_{p(g\;b)} \ct \reflsym)}(\snd{u_{p(g\;b)}})$};
\node (N4)  at (0,-9.6) {$\trans{E}{\fst{\dpairsym(\beta_{g(b)})}}(\snd{u_{p(g\;b)}})$};
\node[red] (N4a)  at (0,-11.2) {$\trans{E}{\ap{\fstsym}{\beta_{g(b)}}}(\snd{u_{p(g\;b)}})$};
\node (N4b)  at (0,-12.8) {$\trans{E}{\fst{\dpairsym(\beta_{g(b)})}}(\snd{u_{p(g\;b)}})$};
\node (N5)  at (0,-14.4) {$e(g\;b)$};

\draw[-] (N0) -- node[right]{} (N0a);
\draw[-] (N0a) -- node[right]{} (N1);
\draw[-] (N1) -- node[right]{\scriptsize via naturality of \footnotesize $\alpha$} (N2);
\draw[-] (N2) -- node[right]{} (N2a);
\draw[-] (N2a) -- node[right]{} (N2ab);
\draw[-] (N2ab) -- node[right]{} (N2ac);
\draw[-] (N2ac) -- node[right]{\scriptsize via \footnotesize $\snd{\dpairsym(\ap{u}{s(b,a)})}$} (N2b);
\draw[-] (N2b) -- node[right]{} (N3);
\draw[-] (N3) -- node[right]{\scriptsize via \footnotesize $\eta_{g(b)}$} (N4);
\draw[red,-] (N4) -- node[right]{} (N4a);
\draw[red,-] (N4a) -- node[right]{} (N4b);
\draw[-] (N4b) -- node[right]{\footnotesize $\snd{\dpairsym(\beta_{g(b)})}$} (N5);
\end{tikzpicture}
\end{center}
which is equal to
\begin{center}
\begin{tikzpicture}
\node (N0)  at (0,4.8) {$\trans{E}{s(b,a)}(\trans{E}{(\alpha_{p(f\;b)})}(\snd{u_{p(f\;b)}}))$};
\node (N0a)  at (0,3.2) {$\trans{E}{(\alpha_{p(f\;b)} \ct s(b,a))}(\snd{u_{p(f\;b)}})$};
\node (N1)  at (0,1.6) {$\trans{E}{(\alpha_{p(f\;b)} \ct \ap{\idfun{X}}{s(b,a)})}(\snd{u_{p(f\;b)}})$};
\node (N2)  at (0,0) {$\trans{E}{(\ap{\fstsym \circ u}{s(b,a)} \ct \alpha_{p(g\;b)})}(\snd{u_{p(f\;b)}})$};
\node[red] (N2a)  at (0,-1.6) {$\trans{E}{(\ap{\fstsym \circ u}{s(b,a)} \ct (\alpha_{p(g\;b)} \ct \reflsym))}(\snd{u_{p(f\;b)}})$};
\node[red] (N2ab)  at (0,-3.2) {$\trans{E}{\big(\ap{\fstsym \circ u}{s(b,a)} \ct \fst{\dpairsym(\beta_{g(b)})}\big)}(\snd{u_{p(f\;b)}})$};
\node[red] (N2ac)  at (0,-4.8) {$\trans{E}{\big(\ap{\fstsym \circ u}{s(b,a)} \ct \ap{\fstsym}{\beta_{g(b)}}\big)}(\snd{u_{p(f\;b)}})$};
\node[red] (N2b)  at (0,-6.4) {$\trans{E}{(\ap{\fstsym}{\ap{u}{s(b,a)}} \ct \ap{\fstsym}{\beta_{g(b)}})}(\snd{u_{p(f\;b)}})$};
\node[red] (N3)  at (0,-8) {$\trans{E}{\ap{\fstsym}{\beta_{g(b)}}}(\trans{E}{\ap{\fstsym}{\ap{u}{s(b,a)}}}(\snd{u_{p(f\;b)}}))$};
\node[red] (N4)  at (0,-9.6) {$\trans{E}{\ap{\fstsym}{\beta_{g(b)}}}(\trans{E}{\fst{\dpairsym(\ap{u}{s(b,a)})}}(\snd{u_{p(f\;b)}}))$};
\node (N4a) at (0,-11.2) {$\trans{E}{\ap{\fstsym}{\beta_{g(b)}}}(\snd{u_{p(g\;b)}})$};
\node (N4b)  at (0,-12.8) {$\trans{E}{\fst{\dpairsym(\beta_{g(b)})}}(\snd{u_{p(g\;b)}})$};
\node (N5)  at (0,-14.4) {$e(g\;b)$};

\draw[-] (N0) -- node[right]{} (N0a);
\draw[-] (N0a) -- node[right]{} (N1);
\draw[-] (N1) -- node[right]{\scriptsize via naturality of \footnotesize $\alpha$} (N2);
\draw[red,-] (N2) -- node[right]{} (N2a);
\draw[red,-] (N2a) -- node[right]{\scriptsize via \footnotesize $\eta_{g(b)}$} (N2ab);
\draw[red,-] (N2ab) -- node[right]{} (N2ac);
\draw[red,-] (N2ac) -- node[right]{} (N2b);
\draw[red,-] (N2b) -- node[right]{} (N3);
\draw[red,-] (N3) -- node[right]{} (N4);
\draw[red,-] (N4) -- node[right]{\scriptsize via \footnotesize $\snd{\dpairsym(\ap{u}{s(b,a)})}$} (N4a);
\draw[-] (N4a) -- node[right]{} (N4b);
\draw[-] (N4b) -- node[right]{\footnotesize $\snd{\dpairsym(\beta_{g(b)})}$} (N5);
\end{tikzpicture}
\end{center}
By Obs.~\ref{obs_exp} this is equal to
\begin{center}
\begin{tikzpicture}
\node (N0)  at (0,4.8) {$\trans{E}{s(b,a)}(\trans{E}{(\alpha_{p(f\;b)})}(\snd{u_{p(f\;b)}}))$};
\node (N0a)  at (0,3.2) {$\trans{E}{(\alpha_{p(f\;b)} \ct s(b,a))}(\snd{u_{p(f\;b)}})$};
\node (N1)  at (0,1.6) {$\trans{E}{(\alpha_{p(f\;b)} \ct \ap{\idfun{X}}{s(b,a)})}(\snd{u_{p(f\;b)}})$};
\node (N2)  at (0,0) {$\trans{E}{(\ap{\fstsym \circ u}{s(b,a)} \ct \alpha_{p(g\;b)})}(\snd{u_{p(f\;b)}})$};
\node (N2a)  at (0,-1.6) {$\trans{E}{(\ap{\fstsym \circ u}{s(b,a)} \ct (\alpha_{p(g\;b)} \ct \reflsym))}(\snd{u_{p(f\;b)}})$};
\node (N2ab)  at (0,-3.2) {$\trans{E}{\big(\ap{\fstsym \circ u}{s(b,a)} \ct \fst{\dpairsym(\beta_{g(b)})}\big)}(\snd{u_{p(f\;b)}})$};
\node (N2ac)  at (0,-4.8) {$\trans{E}{\big(\ap{\fstsym \circ u}{s(b,a)} \ct \ap{\fstsym}{\beta_{g(b)}}\big)}(\snd{u_{p(f\;b)}})$};
\node (N2b)  at (0,-6.4) {$\trans{E}{(\ap{\fstsym}{\ap{u}{s(b,a)}} \ct \ap{\fstsym}{\beta_{g(b)}})}(\snd{u_{p(f\;b)}})$};
\node[red] (N3)  at (0,-8) {$\trans{E}{\ap{\fstsym}{\ap{u}{s(b,a)} \ct \beta_{g(b)}}}(\snd{u_{p(f\;b)}})$};
\node[red] (N4)  at (0,-9.6) {$\trans{E}{\fst{\dpairsym(\ap{u}{s(b,a)} \ct \beta_{g(b)})}}(\snd{u_{p(f\;b)}})$};
\node (N5)  at (0,-11.2) {$e(g\;b)$};

\draw[-] (N0) -- node[right]{} (N0a);
\draw[-] (N0a) -- node[right]{} (N1);
\draw[-] (N1) -- node[right]{\scriptsize via naturality of \footnotesize $\alpha$} (N2);
\draw[-] (N2) -- node[right]{} (N2a);
\draw[-] (N2a) -- node[right]{\scriptsize via \footnotesize $\eta_{g(b)}$} (N2ab);
\draw[-] (N2ab) -- node[right]{} (N2ac);
\draw[-] (N2ac) -- node[right]{} (N2b);
\draw[red,-] (N2b) -- node[right]{} (N3);
\draw[red,-] (N3) -- node[right]{} (N4);
\draw[red,-] (N4) -- node[right]{\footnotesize $\snd{\dpairsym(\ap{u}{s(b,a)} \ct \beta_{g(b)})}$} (N5);
\end{tikzpicture}
\end{center}

\paragraph{Part III}
By Obs.~\ref{obs_exp} the following diagram commutes:
\begin{center}
\begin{tikzpicture}
\node (Na)  at (2,2.25) {=};
\node (N0)  at (0,4.5) {$\trans{E}{\fst{\dpairsym\big(\beta_{f(b)} \ct \dpathsym(s(b,a),d(b,a))\big)}}(\snd{u_{p(f\;b)}})$};
\node (N1)  at (0,0) {$\trans{E}{\fst{\dpairsym\big(\ap{u}{s(b,a)} \ct \beta_{g(b)}\big)}}(\snd{u_{p(f\;b)}})$\;\;\;\;\;\;\;\;\;\;\;\;};
\node (N2)  at (6,2.25) {$e(g\;b)$};

\draw[-] (N0) -- node[left]{\scriptsize via $(\star\star)$} (N1);
\draw[-] (N0) -- node[right]{\footnotesize \;\;\;\;\;$\snd{\dpairsym\big(\beta_{f(b)} \ct \dpathsym(s(b,a),d(b,a))\big)}$} (N2);
\draw[-] (N1) -- node[right]{\footnotesize $\;\;\;\;\;\snd{\dpairsym\big(\ap{u}{s(b,a)} \ct \beta_{g(b)}\big)}$} (N2);
\end{tikzpicture}
\end{center}
It thus suffices to show that the following diagram commutes:
\begin{center}
\begin{tikzpicture}
\node (Na)  at (4.5,4.8) {$(\star\star\star)$};
\node (N0)  at (0,9.6) {$\alpha_{p(f\;b)} \ct s(b,a)$};
\node (N1)  at (0,8)   {$(\alpha_{p(f\;b)} \ct \reflsym) \ct s(b,a)$};
\node (N2)  at (0,6.4) {$\fst{\dpairsym(\beta_{f(b)})} \ct s(b,a)$};
\node (N3)  at (0,4.8) {$\ap{\fstsym}{\beta_{f(b)}} \ct s(b,a)$};
\node (N4)  at (0,3.2) {$\ap{\fstsym}{\beta_{f(b)}} \ct \fst{\dpairsym(\dpathsym(s(b,a),d(b,a)))}$};
\node (N5)  at (0,1.6) {$\ap{\fstsym}{\beta_{f(b)}} \ct \ap{\fstsym}{\dpathsym(s(b,a),d(b,a))}$};
\node (N6)  at (0,0)   {$\ap{\fstsym}{\beta_{f(b)} \ct \dpathsym(s(b,a),d(b,a))}$};
\node (N7)  at (0,-1.6) {$\fst{\dpairsym\big(\beta_{f(b)} \ct \dpathsym(s(b,a),d(b,a))\big)}$};

\node (N8)  at (9,9.6) {$\alpha_{p(f\;b)} \ct \ap{\idfun{X}}{s(b,a)}$};
\node (N9)  at (9,8) {$\ap{\fstsym \circ u}{s(b,a)} \ct \alpha_{p(g\;b)}$};
\node (N10)  at (9,6.4)  {$\ap{\fstsym \circ u}{s(b,a)} \ct (\alpha_{p(g\;b)} \ct \reflsym)$};
\node (N11)  at (9,4.8) {$\ap{\fstsym \circ u}{s(b,a)} \ct \fst{\dpairsym(\beta_{g(b)})}$};
\node (N12)  at (9,3.2) {$\ap{\fstsym \circ u}{s(b,a)} \ct \ap{\fstsym}{\beta_{g(b)}}$};
\node (N13)  at (9,1.6) {$\ap{\fstsym}{\ap{u}{s(b,a)}} \ct \ap{\fstsym}{\beta_{g(b)}}$};
\node (N14)  at (9,0) {$\ap{\fstsym}{\ap{u}{s(b,a)} \ct \beta_{g(b)}}$};
\node (N15)  at (9,-1.6) {$\fst{\dpairsym\big(\ap{u}{s(b,a)} \ct \beta_{g(b)}\big)}$};

\draw[-] (N0) -- node[right]{} (N1);
\draw[-] (N1) -- node[right]{\scriptsize via \footnotesize $\eta_{f(b)}$} (N2);
\draw[-] (N2) -- node[right]{} (N3);
\draw[-] (N3) -- node[right]{} (N4);
\draw[-] (N4) -- node[right]{} (N5);
\draw[-] (N5) -- node[right]{} (N6);
\draw[-] (N6) -- node[right]{} (N7);

\draw[-] (N8) -- node[right]{\scriptsize via naturality of \footnotesize $\alpha$} (N9);
\draw[-] (N9) -- node[right]{} (N10);
\draw[-] (N10) -- node[right]{\scriptsize via \footnotesize $\eta_{g(b)}$} (N11);
\draw[-] (N11) -- node[right]{} (N12);
\draw[-] (N12) -- node[right]{} (N13);
\draw[-] (N13) -- node[right]{} (N14);
\draw[-] (N14) -- node[right]{} (N15);

\draw[-] (N0) -- node[right]{} (N8);
\draw[-] (N6) -- node[above]{\scriptsize via $(\star\star)$} (N14);
\draw[-] (N7) -- node[below]{\scriptsize via $(\star\star)$} (N15);
\end{tikzpicture}
\end{center}
The lower rectangle clearly commutes; it thus suffices to show that $(\star\star\star)$ commutes. 

The coherence condition - the last part of the algebra 2-cell between $(\fstsym\circ u,\gamma,\Theta)$, $(\idfun{X},\delta,\Phi)$ obtained from the uniqueness rule - now tells us that the following diagram commutes:
\begin{center}
\begin{tikzpicture}
\node (N0) at (4,2.4) {=};
\node (N1) at (0,4.8) {$\alpha_{p(f \; b)} \ct \ap{\idfun{X}}{s(b,a)}$};
\node (N2) at (8,4.8) {$\ap{\fstsym \circ u}{s(b,a)} \ct \alpha_{p(g\;b)}$};
\node (N3) at (0,3.2) {$(\fst{\dpairsym(\beta_{f(b)})} \ct \reflsym) \ct \ap{\idfun{X}}{s(b,a)}$};
\node (N4) at (8,3.2) {$\ap{\fstsym \circ u}{s(b,a)} \ct (\fst{\dpairsym(\beta_{g(b)})} \ct \reflsym)$};
\node (N5) at (0,1.6) {$\fst{\dpairsym(\beta_{f(b)})} \ct (\reflsym \ct \ap{\idfun{X}}{s(b,a)})$};
\node (N6) at (8,1.6) {$(\ap{\fstsym \circ u}{s(a,b)} \ct \fst{\dpairsym(\beta_{g(b)})}) \ct \reflsym$};
\node (N7) at (0,0) {$\fst{\dpairsym(\beta_{g(b)})} \ct (s(b,a) \ct \reflsym)$};
\node (N8) at (8,0) {$(\fst{\dpairsym(\beta_{g(b)})} \ct s(b,a)) \ct \reflsym$};
\draw[-] (N1) -- node[above]{\scriptsize \emph{naturality of} $\alpha$} (N2);
\draw[-] (N1) -- node[left]{\scriptsize \emph{via} $\invtri(\eta_{f(b)})$} (N3);
\draw[-] (N2) -- node[right]{\scriptsize \emph{via} $\invtri(\eta_{g(b)})$} (N4);
\draw[-] (N3) -- node[above]{\footnotesize} (N5);
\draw[-] (N4) -- node[above]{\footnotesize} (N6);
\draw[-] (N5) -- node[left]{\scriptsize \emph{via} $\invsq(\Phi(a,b))$} (N7);
\draw[-] (N6) -- node[right]{\scriptsize \emph{via} $\Theta(a,b)$} (N8);
\draw[-] (N7) -- node[above]{\footnotesize} (N8);
\end{tikzpicture}
\end{center}
After some expansion and simplification we get
\begin{center}
\begin{tikzpicture}
\node (N0) at (4,4) {=};
\node (N1) at (0,8) {$\alpha_{p(f \; b)} \ct \ap{\idfun{X}}{s(b,a)}$};
\node (N3) at (0,6.4) {$(\fst{\dpairsym(\beta_{f(b)})} \ct \reflsym) \ct \ap{\idfun{X}}{s(b,a)}$};
\node (N5) at (0,4.8) {$\fst{\dpairsym(\beta_{f(b)})} \ct (\reflsym \ct \ap{\idfun{X}}{s(b,a)})$};
\node[red] (N5b) at (0,3.2) {$\fst{\dpairsym(\beta_{f(b)})} \ct \ap{\idfun{X}}{s(b,a)}$};
\node[red] (N5c) at (0,1.6) {$\fst{\dpairsym(\beta_{f(b)})} \ct s(b,a)$};
\node (N7) at (0,0) {$\fst{\dpairsym(\beta_{g(b)})} \ct (s(b,a) \ct \reflsym)$};

\node (N2) at (8,8) {$\ap{\fstsym \circ u}{s(b,a)} \ct \alpha_{p(g\;b)}$};
\node (N4) at (8,6.4) {$\ap{\fstsym \circ u}{s(b,a)} \ct (\fst{\dpairsym(\beta_{g(b)})} \ct \reflsym)$};
\node (N6) at (8,4.8) {$(\ap{\fstsym \circ u}{s(a,b)} \ct \fst{\dpairsym(\beta_{g(b)})}) \ct \reflsym$};
\node[red] (N6b) at (8,3.2) {$\ap{\fstsym \circ u}{s(a,b)} \ct \fst{\dpairsym(\beta_{g(b)})}$};
\node[red] (N8a) at (8,1.6) {$\fst{\dpairsym(\beta_{g(b)})} \ct s(b,a)$};
\node (N8) at (8,0) {$(\fst{\dpairsym(\beta_{g(b)})} \ct s(b,a)) \ct \reflsym$};

\draw[-] (N1) -- node[above]{\scriptsize \emph{naturality of} $\alpha$} (N2);
\draw[-] (N1) -- node[left]{\scriptsize \emph{via} $\invtri(\eta_{f(b)})$} (N3);
\draw[-] (N2) -- node[right]{\scriptsize \emph{via} $\invtri(\eta_{g(b)})$} (N4);
\draw[-] (N3) -- node[above]{\footnotesize} (N5);
\draw[-] (N4) -- node[above]{\footnotesize} (N6);
\draw[red,-] (N5) -- node[left]{} (N5b);
\draw[red,-] (N5b) -- node[left]{} (N5c);
\draw[red,-] (N5c) -- node[left]{} (N7);
\draw[red,-] (N6) -- node[right]{} (N6b);
\draw[red,-] (N6b) -- node[right]{\scriptsize $\Theta(a,b)$} (N8a);
\draw[red,-] (N8a) -- node[right]{} (N8);
\draw[-] (N7) -- node[above]{\footnotesize} (N8);
\end{tikzpicture}
\end{center}
This is equivalent to
\begin{center}
\begin{tikzpicture}
\node (N0) at (4,4) {=};
\node (N1) at (0,8) {$\alpha_{p(f \; b)} \ct \ap{\idfun{X}}{s(b,a)}$};
\node[red] (N3) at (0,6.4) {$\alpha_{p(f \; b)} \ct s(b,a)$};
\node[red] (N5) at (0,4.8) {$(\fst{\dpairsym(\beta_{f(b)})} \ct \reflsym) \ct s(b,a)$};
\node[red] (N5b) at (0,3.2) {$\fst{\dpairsym(\beta_{f(b)})} \ct (\reflsym \ct s(b,a))$};
\node (N5c) at (0,1.6) {$\fst{\dpairsym(\beta_{f(b)})} \ct s(b,a)$};
\node (N7) at (0,0) {$\fst{\dpairsym(\beta_{g(b)})} \ct (s(b,a) \ct \reflsym)$};

\node (N2) at (8,8) {$\ap{\fstsym \circ u}{s(b,a)} \ct \alpha_{p(g\;b)}$};
\node (N4) at (8,6.4) {$\ap{\fstsym \circ u}{s(b,a)} \ct (\fst{\dpairsym(\beta_{g(b)})} \ct \reflsym)$};
\node (N6) at (8,4.8) {$(\ap{\fstsym \circ u}{s(a,b)} \ct \fst{\dpairsym(\beta_{g(b)})}) \ct \reflsym$};
\node (N6b) at (8,3.2) {$\ap{\fstsym \circ u}{s(a,b)} \ct \fst{\dpairsym(\beta_{g(b)})}$};
\node (N8a) at (8,1.6) {$\fst{\dpairsym(\beta_{g(b)})} \ct s(b,a)$};
\node (N8) at (8,0) {$(\fst{\dpairsym(\beta_{g(b)})} \ct s(b,a)) \ct \reflsym$};

\draw[-] (N1) -- node[above]{\scriptsize \emph{naturality of} $\alpha$} (N2);
\draw[red,-] (N1) -- node[left]{} (N3);
\draw[red,-] (N3) -- node[left]{\scriptsize \emph{via} $\invtri(\eta_{f(b)})$} (N5);
\draw[-] (N2) -- node[right]{\scriptsize \emph{via} $\invtri(\eta_{g(b)})$} (N4);
\draw[-] (N4) -- node[above]{\footnotesize} (N6);
\draw[red,-] (N5) -- node[left]{} (N5b);
\draw[red,-] (N5b) -- node[left]{} (N5c);
\draw[-] (N5c) -- node[left]{} (N7);
\draw[-] (N6) -- node[right]{} (N6b);
\draw[-] (N6b) -- node[right]{\scriptsize $\Theta(a,b)$} (N8a);
\draw[-] (N8a) -- node[right]{} (N8);
\draw[-] (N7) -- node[above]{\footnotesize} (N8);
\end{tikzpicture}
\end{center}
After some cleanup we get
\begin{center}
\begin{tikzpicture}
\node (N0) at (4,4.5) {=};
\node (N1) at (0,8) {$\alpha_{p(f \; b)} \ct \ap{\idfun{X}}{s(b,a)}$};
\node (N3) at (0,6.4) {$\alpha_{p(f \; b)} \ct s(b,a)$};
\node (N5) at (0,4.8) {$(\fst{\dpairsym(\beta_{f(b)})} \ct \reflsym) \ct s(b,a)$};
\node (N5b) at (0,3.2) {$\fst{\dpairsym(\beta_{f(b)})} \ct (\reflsym \ct s(b,a))$};
\node (N5c) at (4,1) {$\fst{\dpairsym(\beta_{f(b)})} \ct s(b,a)$};

\node (N2) at (8,8) {$\ap{\fstsym \circ u}{s(b,a)} \ct \alpha_{p(g\;b)}$};
\node (N4) at (8,6.4) {$\ap{\fstsym \circ u}{s(b,a)} \ct (\fst{\dpairsym(\beta_{g(b)})} \ct \reflsym)$};
\node (N6) at (8,4.8) {$(\ap{\fstsym \circ u}{s(a,b)} \ct \fst{\dpairsym(\beta_{g(b)})}) \ct \reflsym$};
\node (N6b) at (8,3.2) {$\ap{\fstsym \circ u}{s(a,b)} \ct \fst{\dpairsym(\beta_{g(b)})}$};

\draw[-] (N1) -- node[above]{\scriptsize \emph{naturality of} $\alpha$} (N2);
\draw[-] (N1) -- node[left]{} (N3);
\draw[-] (N3) -- node[left]{\scriptsize \emph{via} $\invtri(\eta_{f(b)})$} (N5);
\draw[-] (N2) -- node[right]{\scriptsize \emph{via} $\invtri(\eta_{g(b)})$} (N4);
\draw[-] (N4) -- node[above]{\footnotesize} (N6);
\draw[-] (N5) -- node[left]{} (N5b);
\draw[-] (N5b) -- node[left]{} (N5c);
\draw[-] (N6) -- node[right]{} (N6b);
\draw[-] (N6b) -- node[right]{\;\;\scriptsize $\Theta(a,b)$} (N5c);
\end{tikzpicture}
\end{center}
Further expansion yields
\begin{center}
\begin{tikzpicture}
\node (N0) at (4,2.9) {=};
\node (N1) at (0,8) {$\alpha_{p(f \; b)} \ct \ap{\idfun{X}}{s(b,a)}$};
\node (N3) at (0,6.4) {$\alpha_{p(f \; b)} \ct s(b,a)$};
\node[red] (N5) at (0,4.8) {$(\alpha_{p(f \; b)} \ct \reflsym) \ct s(b,a)$};
\node[red] (N5aa) at (0,3.2) {$\fst{\dpairsym(\beta_{f(b)})} \ct s(b,a)$};
\node (N5ab) at (0,1.6) {$(\fst{\dpairsym(\beta_{f(b)})} \ct \reflsym) \ct s(b,a)$};
\node (N5b) at (0,0) {$\fst{\dpairsym(\beta_{f(b)})} \ct (\reflsym \ct s(b,a))$};
\node (N5c) at (4,-2.2) {$\fst{\dpairsym(\beta_{f(b)})} \ct s(b,a)$};

\node (N2) at (8,8) {$\ap{\fstsym \circ u}{s(b,a)} \ct \alpha_{p(g\;b)}$};
\node[red] (N2aa) at (8,6.4) {$\ap{\fstsym \circ u}{s(b,a)} \ct (\alpha_{p(g\;b)} \ct \reflsym)$};
\node[red] (N2ab) at (8,4.8) {$\ap{\fstsym \circ u}{s(b,a)} \ct \fst{\dpairsym(\beta_{g(b)})}$};
\node (N4) at (8,3.2) {$\ap{\fstsym \circ u}{s(b,a)} \ct (\fst{\dpairsym(\beta_{g(b)})} \ct \reflsym)$};
\node (N6) at (8,1.6) {$(\ap{\fstsym \circ u}{s(a,b)} \ct \fst{\dpairsym(\beta_{g(b)})}) \ct \reflsym$};
\node (N6b) at (8,0) {$\ap{\fstsym \circ u}{s(a,b)} \ct \fst{\dpairsym(\beta_{g(b)})}$};

\draw[-] (N1) -- node[above]{\scriptsize \emph{naturality of} $\alpha$} (N2);
\draw[-] (N1) -- node[left]{} (N3);
\draw[red,-] (N3) -- node[left]{} (N5);
\draw[red,-] (N2) -- node[left]{} (N2aa);
\draw[red,-] (N2aa) -- node[right]{\scriptsize \emph{via} $\invtri(\eta_{g(b)})$} (N2ab);
\draw[red,-] (N2ab) -- node[left]{} (N4);
\draw[-] (N4) -- node[above]{\footnotesize} (N6);
\draw[red,-] (N5) -- node[left]{\scriptsize \emph{via} $\invtri(\eta_{f(b)})$} (N5aa);
\draw[red,-] (N5aa) -- node[left]{} (N5ab);
\draw[-] (N5ab) -- node[left]{} (N5b);
\draw[-] (N5b) -- node[left]{} (N5c);
\draw[-] (N6) -- node[right]{} (N6b);
\draw[-] (N6b) -- node[right]{\;\;\scriptsize $\Theta(a,b)$} (N5c);
\end{tikzpicture}
\end{center}
A final cleanup yields
\begin{center}
\begin{tikzpicture}
\node (N0) at (4,5.3) {=};
\node (N1) at (0,8) {$\alpha_{p(f \; b)} \ct \ap{\idfun{X}}{s(b,a)}$};
\node (N3) at (0,6.4) {$\alpha_{p(f \; b)} \ct s(b,a)$};
\node (N5) at (0,4.8) {$(\alpha_{p(f \; b)} \ct \reflsym) \ct s(b,a)$};
\node (N5aa) at (4,2.6) {$\fst{\dpairsym(\beta_{f(b)})} \ct s(b,a)$};

\node (N2) at (8,8) {$\ap{\fstsym \circ u}{s(b,a)} \ct \alpha_{p(g\;b)}$};
\node (N2aa) at (8,6.4) {$\ap{\fstsym \circ u}{s(b,a)} \ct (\alpha_{p(g\;b)} \ct \reflsym)$};
\node (N2ab) at (8,4.8) {$\ap{\fstsym \circ u}{s(b,a)} \ct \fst{\dpairsym(\beta_{g(b)})}$};

\draw[-] (N1) -- node[above]{\scriptsize \emph{naturality of} $\alpha$} (N2);
\draw[-] (N1) -- node[left]{} (N3);
\draw[-] (N3) -- node[left]{} (N5);
\draw[-] (N2) -- node[left]{} (N2aa);
\draw[-] (N2aa) -- node[right]{\scriptsize \emph{via} $\eta_{g(b)}$} (N2ab);
\draw[-] (N2ab) -- node[right]{\;\;\scriptsize $\Theta(a,b)$} (N5aa);
\draw[-] (N5) -- node[left]{\scriptsize \emph{via} $\eta_{f(b)}$\;\;} (N5aa);
\end{tikzpicture}
\end{center}
which is precisely the diagram $(\star\star\star)$.

\end{document}